\tikzset{spanmap/.style={
            decoration={markings,
            mark= at position 0.5 with {
                  \node[transform shape] (tempnode) {$|$};
                  }
              },
              postaction={decorate}
}
}
\def\proofskip{\vskip 4pt plus 1pt minus 1pt}
\def\proofbox{\hfill\rule{6pt}{6pt}}
\newenvironment{proof}{{\noindent \sc Proof\ }~}{\proofbox\proofskip}
\definecolor{red}{rgb}{.8,0.2,0.2}
\definecolor{rouge}{rgb}{.8,0,0}
\definecolor{green}{rgb}{0,.5,0}
\definecolor{antigreen}{rgb}{.8,0,.8}
\definecolor{blue}{rgb}{0,0,1}
\definecolor{littleblue}{rgb}{.2,.2,.6}
\definecolor{graille}{rgb}{.01,.01,.00}
\definecolor{grispale}{rgb}{.6,.6,.6}
\definecolor{yellow}{rgb}{.5,.5,.2}
\definecolor{cyan}{rgb}{0.75,1,1}
\definecolor{strongred}{rgb}{.7,0,0}
\definecolor{strongblue}{rgb}{0,0,.7}
\newcommand{\blackhole}[1]{}
\newcommand{\stratsource}{s}
\newcommand{\strattarget}{t}
\newcommand{\source}[1]{\sourcefn(#1)}
\newcommand{\target}[1]{\targetfn(#1)}
\newcommand{\transitionpath}{\twoheadrightarrow}
\newcommand{\anchor}{\moo}
\newcommand{\Games}{\mathbf{Games}}
\newcommand{\Span}[1]{\mathbf{Span}(#1)}
\newcommand{\Agame}{A}
\newcommand{\Bgame}{B}
\newcommand{\Cgame}{C}
\newcommand{\Sgame}{S}
\newcommand{\Tgame}{T}
\newcommand{\lrangle}[1]{\begin{footnotesize}\langle{#1}\rangle\end{footnotesize}}
\newcommand{\anchorofgames}{\anchor_{\mathrm{game}}}
\newcommand{\anchorofstrat}{\anchor_{\mathrm{strat}}}
\newcommand{\anchorofconcgames}{\anchor_{\mathrm{conc}}}
\newcommand{\anchorofasynch}{\anchor_{\mathrm{asynch}}}
\newcommand{\SpanGames}[1]{\mathbf{Games}(#1)}
\newcommand{\identityspan}[1]{\mathbf{id}_{#1}}
\newcommand{\interaction}[1]{\lambda_{#1}}
\newcommand{\polarityplus}{\textcolor{strongblue}{\emph{\textbf{P}}}}
\newcommand{\polarityminus}{\textcolor{strongred}{\emph{\textbf{O}}}}
\newcommand{\polarityplussource}{\textcolor{strongblue}{\emph{\textbf{P}}_{s}}}
\newcommand{\polarityminussource}{\textcolor{strongred}{\emph{\textbf{O}}_{s}}}
\newcommand{\polarityplustarget}{\textcolor{strongblue}{\emph{\textbf{P}}_{t}}}
\newcommand{\polarityminustarget}{\textcolor{strongred}{\emph{\textbf{O}}_{t}}}
\newcommand{\polarityplusmid}{\emph{\textbf{OP}}}
\newcommand{\polarityminusmid}{\emph{\textbf{PO}}}
\newcommand{\polarityplusmidone}{\emph{\textbf{OP}}_{s}}
\newcommand{\polarityminusmidone}{\emph{\textbf{PO}}_{s}}
\newcommand{\polarityplusmidtwo}{\emph{\textbf{OP}}_{t}}
\newcommand{\polarityminusmidtwo}{\emph{\textbf{PO}}_{t}}
\newcommand{\polarityminusone}{\textcolor{strongred}{\emph{\textbf{O}}_{1}}}
\newcommand{\polarityplusone}{\textcolor{strongblue}{\emph{\textbf{P}}_{1}}}
\newcommand{\polarityminustwo}{\textcolor{strongred}{\emph{\textbf{O}}_{2}}}
\newcommand{\polarityplustwo}{\textcolor{strongblue}{\emph{\textbf{P}}_{2}}}
\newcommand{\polarityminusthree}{\textcolor{strongred}{\emph{\textbf{O}}_{3}}}
\newcommand{\polarityplusthree}{\textcolor{strongblue}{\emph{\textbf{P}}_{3}}}
\newcommand{\tensor}{\otimes}
\newcommand{\Acategory}{\mathscr{A}}
\newcommand{\Bcategory}{\mathscr{B}}
\newcommand{\Ccategory}{\mathscr{C}}
\newcommand{\Dcategory}{\mathscr{D}}
\newcommand{\Ecategory}{\mathscr{E}}
\newcommand{\Scategorysurround}{\mathbb{S}}
\newcommand{\Vbicategory}{\mathcal{V}}
\newcommand{\Xcategory}{\mathscr{X}}
\newcommand{\Set}{\textbf{Set}}
\newcommand{\Cat}{\textbf{Cat}}
\newcommand{\adjunction}[4]{{#3}:\vcenter{\xymatrix @-.1pc {{#1}\ar@/^.7pc/[rr]\ar@{}[rr]|-{\bot}&&{#2}\ar@/^.7pc/[ll]}}:{#4}}
\newcommand{\Asynch}{\textbf{Asynch}}
\newcommand{\id}[1]{\mathrm{id}_{#1}}
\newcommand{\tic}{\mathsf{tic}}
\renewcommand{\source}[1]{\textit{source}_{#1}}
\renewcommand{\target}[1]{\textit{target}_{#1}}
\newcommand{\spanmap}{\longrightarrow\hspace{-1.45em}|\hspace{1.25em}}
\newcommand{\Label}{\mathcal{L}}
\newcommand{\asynchanchorof}[1]{\moo[\hspace{.1em}#1\hspace{.1em}]}
\newcommand{\onecatanchorof}[1]{\moo\{\hspace{-.19em}|\hspace{.1em}#1\hspace{.1em}|\hspace{-.19em}\}}
\newcommand{\twocatanchorof}[1]{\moo\{\hspace{-.19em}|\hspace{.1em}#1\hspace{.1em}|\hspace{-.19em}\}}
\newcommand{\TwoCat}{\mathbf{2}\textbf{-}\textbf{Cat}}
\newcommand{\graytensor}{\boxtimes}
\newcommand{\grayunit}{\textbf{1}}
\newcommand{\graytensoreq}[1]{\boxtimes_{#1}}
\newcommand{\funnytensor}{\hspace{.06em}\oblong\hspace{.06em}}
\newcommand{\coaction}[1]{\mathsf{coact}_{#1}}
\newcommand{\coactionleft}[1]{\mathsf{coact}^{\textrm{left}}_{#1}}
\newcommand{\coactionright}[1]{\mathsf{coact}^{\textrm{right}}_{#1}}
\newcommand{\comult}[1]{d_{#1}}
\newcommand{\counit}[1]{e_{#1}}
\newcommand{\fsubA}{a}
\newcommand{\fsubAprime}{a'}
\newcommand{\fsubAsecond}{a''}
\newcommand{\fsubB}{b}
\newcommand{\fsubBprime}{b'}
\newcommand{\fsubBsecond}{b''}
\newcommand{\fsubC}{c}
\newcommand{\horcomp}{\ast^h}
\newcommand{\vertcomp}{\ast^v}
\newcommand{\horid}[1]{id^{\hspace{.05em}h}_{#1}}
\newcommand{\vertid}[1]{id^{\hspace{.05em}v}_{#1}}
\newcommand{\graytile}[2]{\gamma_{#1,#2}}
\newcommand{\graytiletilde}[2]{\tilde{\gamma}_{#1,#2}}
\newcommand{\asynchtwocat}[1]{\langle\hspace{.1em}#1\hspace{.1em}\rangle}
\newcommand{\asynchsesquicat}[1]{\langle\!\langle\hspace{.1em}#1\hspace{.1em}\rangle\!\rangle}
\newcommand{\Comod}[1]{\textbf{Comod}(#1)}
\newcommand{\lengthindex}[1]{[#1]}
\newcommand{\permutationcategory}[3]{\textbf{Perm}(#2,#3)}
\newcommand{\Tran}{\mathsf{Tran}}
\newcommand{\shuffletensor}{\shuffle}
\newcommand{\Atwocategory}{\mathscr{A}}
\newcommand{\Btwocategory}{\mathscr{B}}
\newcommand{\Ctwocategory}{\mathscr{C}}
\newcommand{\Etwocategory}{\mathscr{E}}
\newcommand{\underlyingcat}[1]{|#1|}
\newcommand{\underlyingset}[1]{||#1||}
\tikzset{spanmap/.style={
            decoration={markings,
            mark= at position 0.5 with {
                  \node[transform shape] (tempnode) {$|$};
                  }
              },
              postaction={decorate}
}
}
\tikzset{doublespanmap/.style={
            decoration={markings,
            mark= at position 0.5 with {
                  \node[transform shape] (tempnode) {$||$};
                  }
              },
              postaction={decorate}
}
}
\newcommand{\Comonoid}[1]{\textbf{Comonoid}(#1)}
\newtheorem{definition}{Definition}
\newtheorem{proposition}{Proposition}
\newtheorem{theorem}{Theorem}
\begin{document}
%
\title{Asynchronous Template Games\\
and the Gray tensor product of 2-categories}

\author{\IEEEauthorblockN{Paul-Andr\'e Melli\`es}
\IEEEauthorblockA{Institut de Recherche en Informatique Fondamentale (IRIF)\\
CNRS, Universit\'e de Paris, France\\
Email: mellies@irif.fr}}

%


\maketitle

\begin{abstract}
In his recent and exploratory work on template games and linear logic,
Melli{\`e}s defines sequential and concurrent games as categories
with positions as objects and trajectories as morphisms,
labelled by a specific synchronization template.
%
In the present paper, we bring the idea one dimension higher
and advocate that 
template games should not be just defined
as 1-dimensional categories but as 2-dimensional categories
of positions, trajectories and reshufflings (or reschedulings) as 2-cells.
%
In order to achieve the purpose, we take seriously 
the parallel between asynchrony in concurrency
and the Gray tensor product of 2-categories.
One technical difficulty on the way is that the category $\Scategorysurround=\TwoCat$
of small 2-categories equipped with the Gray tensor product is monoidal, and not cartesian.
This prompts us to extend the framework of template games originally formulated
by Melli{\`e}s in a category~$\Scategorysurround$ with finite limits,
and to upgrade it in the style of Aguiar's work on quantum groups
to the more general situation of a monoidal category~$\Scategorysurround$
with coreflexive equalizers, preserved by the tensor product componentwise.
%
We construct in this way an asynchronous template game semantics
of multiplicative additive linear logic (MALL) where every formula 
and every proof
is interpreted as a labelled 2-category
equipped, respectively, 
with the structure of Gray comonoid for asynchronous template games,
and of Gray bicomodule for asynchronous strategies.
%
%
\end{abstract}

%
\IEEEpeerreviewmaketitle



\section{Introduction and overview}\label{section/introduction}
Game semantics is the offspring of a fruitful encounter 
between concurrency theory, proof theory 
and programming language semantics.
Arising at this crossover position between the three fields,
game semantics inherits the rich toolbox of ideas and techniques
of concurrency theory, and provides a vivid landscape
of logical and programming phenomena to dissect and to interpret.
For that reason, it was understood at an early stage of game semantics 
\cite{abramsky-jagadeesan-malacaria,hyland-ong-full-abstraction}
that this nice hybridation at the heart of game semantics
would offer a fresh start and a number of new perspectives
on the mathematical foundations of concurrency theory.
%
Since the very origins of the field and the seminal contributions 
by Petri~\cite{petri-phd-1962,petri-asynchronous-1963}, 
Beki{\v{c}}~\cite{bekic-1971} and Mazurkiewicz~\cite{mazurkiewicz-concurrent-1977},
one main ambition of concurrency theory has been to give an abstract
and mathematically elegant account of the symbolic choreography at work 
in a distributed (software or hardware) system of intercommunicating components.
%
\medbreak
\noindent
\emph{\textbf{Labelled transition systems:}}
The description is typically based on the notion of transition system
$$\Tran=(G,\lambda)$$
on a specific set $\Label$ of labels,
defined as a graph $G=(V,E,\source{},\target{})$ 
consisting of a set~$V$ of vertices, a set~$E$ of edges,
and two functions
$\source{},\target{}:E\longrightarrow V$
equipped with a labelling function 
$\lambda:E\longrightarrow\Label.$
In this formulation, each state of the distributed system
is represented as a vertex $x,y\in V$ and every transition $u:x\to y$ 
is represented as an edge $u\in E$ labelled by an element $\lambda(u)\in\Label$.
One sometimes requires in the traditional definition
that there exists \emph{at most} one transition $u:x\to y$
with a given label $\ell\in\Label$ between two given states $x$, $y$;
one thus writes $u=(x,\ell,y)$ for the unique edge $u:x\to y$
with label $\lambda(u)=\ell$.
We do not make this simplifying assumption here,
see~\cite{HildebrandtSassone96} for a discussion.

\medbreak
\noindent
\emph{\textbf{Asynchronous transition systems:}}
In order to capture the concurrent nature of computations,
one equips the transition system $\Tran$ with a notion of \emph{independence}
between transitions $u$ and $v$ performed in separate components
of the distributed system, with no shared memory.
%
%
Because the two transitions $u$ and $v$ are independent,
their order of execution does not matter.
%
%
This idea is nicely captured mathematically by an intuition coming 
from algebraic topology and the connection between higher automata and cubical sets~\cite{GoubaultHeindelMimram13,dihomotopy-book}.
A \emph{square} in a graph~$G$ is defined as a pair $(p,q)$ 
of paths $p,q:x\transitionpath y$ of length 2, with same source $x$ and same target $y$, 
and thus of the form $p=u_1\cdot u_2$ and $q=v_1\cdot v_2$ as depicted below:
\begin{equation}\label{equation/square}
\raisebox{-3.8em}{\includegraphics[height=7.6em]{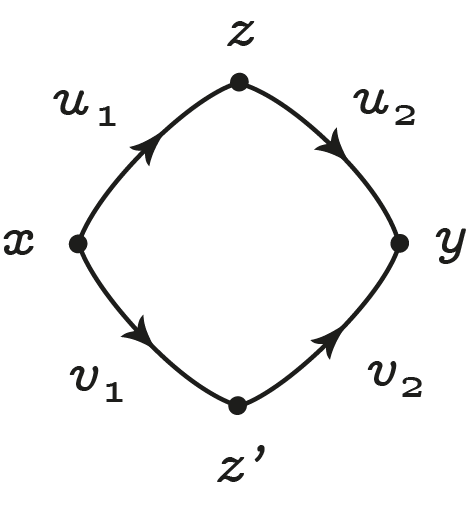}}
\end{equation}
An \emph{asynchronous graph} $(G,\diamond)$ is defined as a graph~$G$ 
equipped with a set $\diamond$ of squares of $G$, satisfying three basic properties
detailed in \S\ref{section/asynchronous-graphs}.
We use the notation $p\diamond q$ when the square $(p,q)$
is an element of $\diamond$ and say in that case
that the square $(p,q)$ defines a \emph{permutation square.}
Following an intuition coming from algebraic topology,
a permutation tile $u_1\cdot u_2 \diamond v_1 \cdot v_2$
between the paths $p=u_1\cdot u_2$ and $q=v_1 \cdot v_2$
is depicted in the following way, as a 2-dimensional surface 
(or tile) between the paths~$p$ and~$q$:
\begin{equation}\label{equation/permutation-tile}
\raisebox{-3.8em}{\includegraphics[height=7.6em]{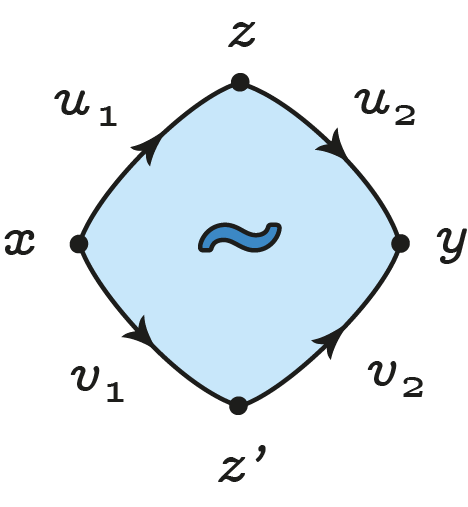}}
\end{equation}
An asynchronous graph homomorphism
$$
f\quad : \quad (G,\diamond_G)\longrightarrow (H,\diamond_H)
$$
is defined as a graph homomorphism $f:G\to H$ with the additional property
that every permutation square in~$G$ is transported 
to a permutation square in $H$, in the sense that for every square $(p,q)$ in~$G$, 
$$
p\diamond_G q \quad \Rightarrow \quad f(p)\diamond_H f(q)
$$
where $f(p)=f(u_1)\cdot f(u_2)$ and $f(q)=f(v_1)\cdot f(v_2)$
are the paths of length 2 in the graph $H$
obtained as image of the paths $p=u_1\cdot u_2$ and $q=v_1\cdot v_2$.
This defines the category $\Asynch$ of asynchronous graphs
and asynchronous graph homomorphisms between them.

One nice and concise way to define a (labelled) asynchronous
transition system from there is to associate to every set $\Label$
the asynchronous graph noted 
\begin{equation}\label{equation/anchorof}
\asynchanchorof{\Label}=(\asynchanchorof{\Label},\diamond_{\Label})
\end{equation}
with a unique vertex noted $\ast$, one edge $\ell : \ast\to\ast$ for each label $\ell\in\Label$, 
and one permutation square 
\begin{equation}\label{equation/permutation-tile-core}
\raisebox{-4.2em}{\includegraphics[height=7.4em]{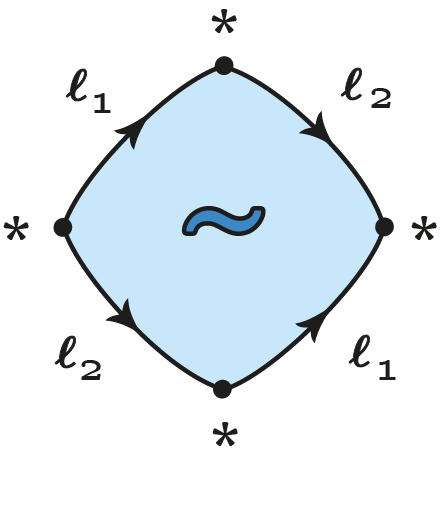}}
\vspace{-.4em}
\end{equation}
for every pair $(\ell_1,\ell_2)$ of (possibly equal) labels $\ell_1,\ell_2\in\Label$.
An asynchronous transition system is then defined as 
an asynchronous graph $(G,\diamond)$ equipped
with an asynchronous graph homomorphism
\begin{equation}\label{equation/transition-system-as-map}
\lambda \quad : \quad (G,\diamond) \longrightarrow \asynchanchorof{\Label}
\end{equation}
Note that the definition ensures that the labelling is asynchronous
in the sense that  $\lambda(u_1)=\lambda(v_2)$ and $\lambda(u_2)=\lambda(v_1)$
in any permutation square~(\ref{equation/permutation-tile-core})
of the asynchronous transition system~$(G,\diamond)$.
As we will see, this formulation~(\ref{equation/transition-system-as-map})
of asynchronous transition systems as a labelling map
in the category $\Asynch$ is particularly convenient.

\medbreak
\noindent
\emph{\textbf{The shuffle tensor product:}}
An important observation for this paper is that the category $\Asynch$ comes equipped 
with a symmetric monoidal structure, provided by the \emph{shuffle tensor product}
$$
(G,\diamond_G)\shuffletensor(H,\diamond_H)
\quad = \quad
(G\shuffletensor H,\diamond_{G\shuffletensor H})
$$
of two asynchronous graphs $(G,\diamond_G)$ and $(H,\diamond_H)$.
Here, ${G\shuffletensor H}$ denotes the graph
whose vertices $(x,y)$ are the pairs of a vertex $x$ in $G$
and of a vertex $y$ in $H$ and whose edges
$$
\begin{tikzcd}[column sep=1.5em, row sep=3em]
(x,y)\arrow[rr,"{(u,y)}","{}"{swap}] &&  (x',y)
\end{tikzcd}
\quad\quad
\begin{tikzcd}[column sep=1.5em, row sep=3em]
(x,y)\arrow[rr,"{(x,v)}","{}"{swap}] && (x,y')
\end{tikzcd}
$$
are either pairs $(u,y)$ consisting of an edge $u:x\to x'$
in~${G}$ and of a vertex $y$ in~$H$ ;
or symmetrically, pairs $(x,v)$ consisting of a vertex $x$ in $G$ 
and of an edge $v:y\to y'$ in~$H$.
There is a permutation tile of $G\tensor H$ 
$$(u,y)\cdot (x',v)\diamond_{G\shuffletensor H} (x,v)\cdot (u,y')$$
and in the reverse direction 
$$(x,v)\cdot (u,y')\diamond_{G\shuffletensor H} (u,y)\cdot (x',v)$$
as depicted below in orange
\begin{equation}\label{equation/asynchronous-tensor-tiles-A}
\raisebox{-3.6em}{\includegraphics[height=7.4em]{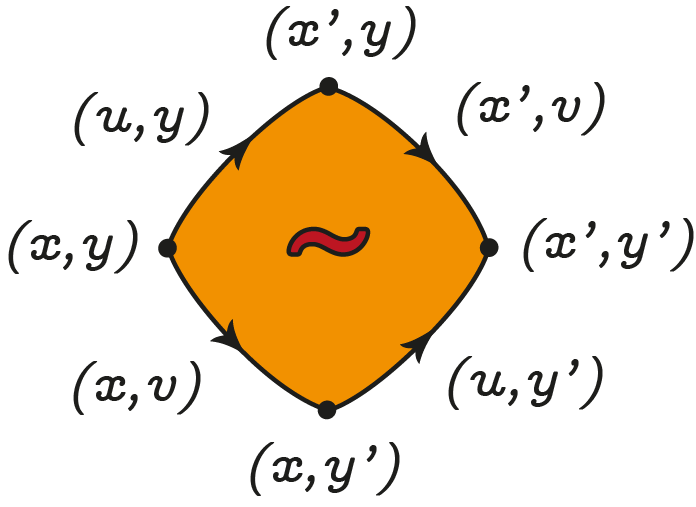}}
\quad\quad
\raisebox{-3.6em}{\includegraphics[height=7.4em]{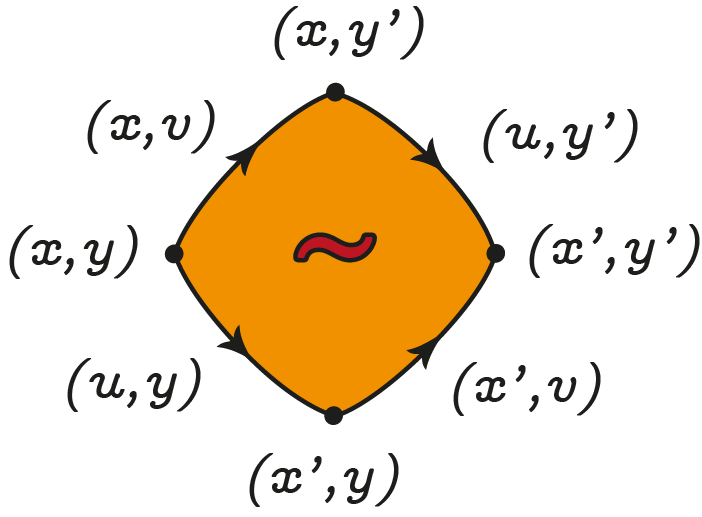}}
\vspace{-.2em}
\end{equation}
for every pair of edges $u:x\to x'$ in $G$ and $v:y\to y'$ in~$H$; 
a permutation tile (in orange on the right-hand side)
\begin{equation}\label{equation/asynchronous-tensor-tiles-B}
\raisebox{-3.6em}{\includegraphics[height=7.4em]{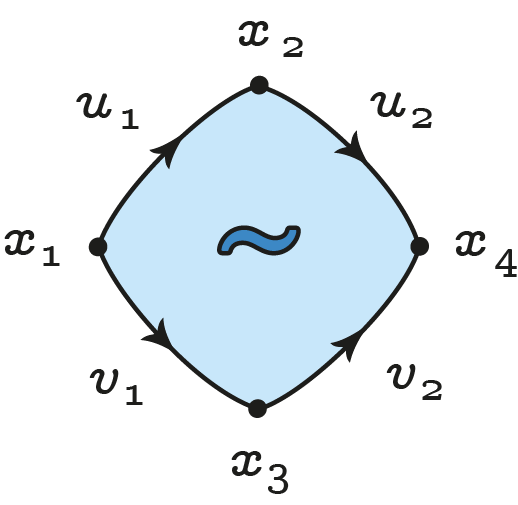}}
\hspace{.5em} \mapsto \hspace{.5em}
\raisebox{-3.6em}{\includegraphics[height=7.4em]{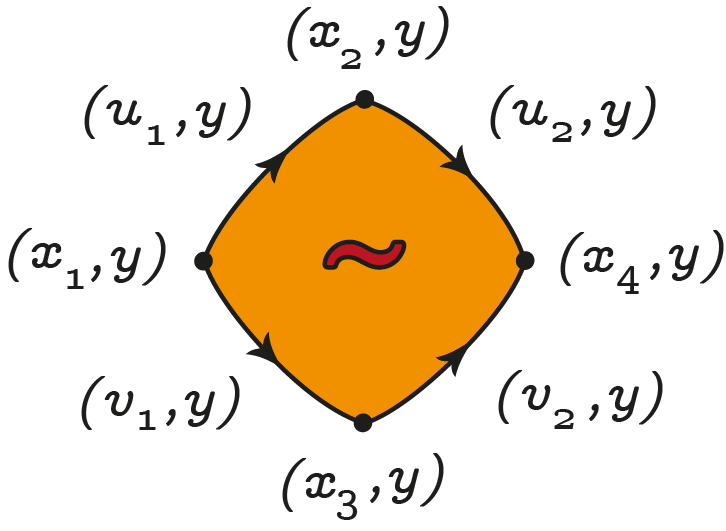}}
\vspace{-.2em}
\end{equation}
for every permutation tile $u_1\cdot u_2\diamond_G v_1\cdot v_2$ in~$G$ 
(in blue on the left-hand side) and every vertex~$y$ in~$H$;
symmetrically, a permutation tile (in orange on the right-hand side)
\begin{equation}\label{equation/asynchronous-tensor-tiles-C}
\raisebox{-3.6em}{\includegraphics[height=7.4em]{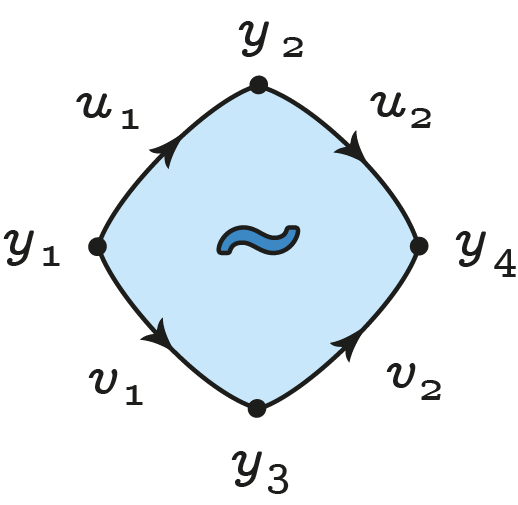}}
\hspace{.5em} \mapsto \hspace{.5em}
\raisebox{-3.6em}{\includegraphics[height=7.4em]{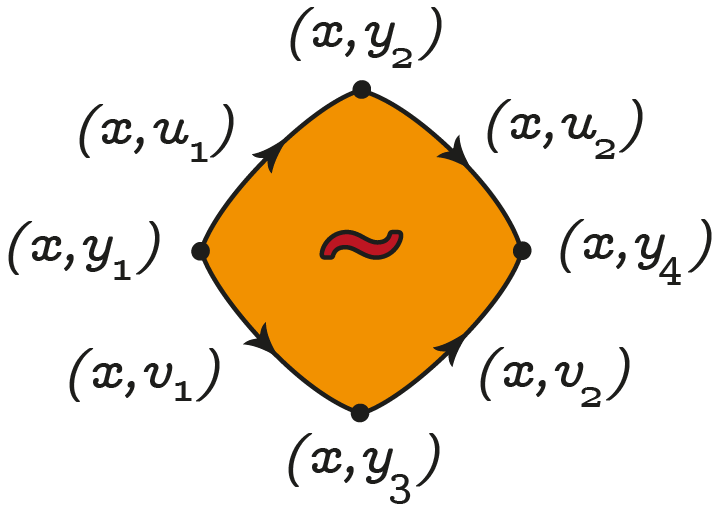}}
\vspace{-.2em}
\end{equation}
for every vertex~$x$ in~$G$ and permutation tile 
$u_1\cdot u_2\diamond_G v_1\cdot v_2$ in~$H$ 
(in blue on the left-hand side).
The unit of the shuffle tensor product
is the asynchronous graph $I$ with one vertex and no edge.
One establishes that
\medbreak
\begin{proposition}
The category $\Asynch$ of asynchronous graphs equipped
with the shuffle tensor product $\shuffletensor$ and the unit~$I$
defines a symmetric monoidal category.
\end{proposition}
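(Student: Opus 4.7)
The plan is to equip $(\Asynch,\shuffletensor,I)$ with the expected structure inherited from an underlying symmetric monoidal structure on the category $\textbf{Gph}$ of graphs, and then check that all the coherence data transport appropriately the extra datum of permutation tiles.

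First, I would extend $\shuffletensor$ into a bifunctor on $\Asynch$. On morphisms, given asynchronous graph homomorphisms $f:(G,\diamond_G)\to(G',\diamond_{G'})$ and $g:(H,\diamond_H)\to(H',\diamond_{H'})$, the homomorphism $f\shuffletensor g$ is defined on vertices by $(x,y)\mapsto(f(x),g(y))$ and on edges by $(u,y)\mapsto(f(u),g(y))$ and $(x,v)\mapsto(f(x),g(v))$. The verification that $f\shuffletensor g$ is an asynchronous graph homomorphism amounts to inspecting the three generating families (\ref{equation/asynchronous-tensor-tiles-A}), (\ref{equation/asynchronous-tensor-tiles-B}), (\ref{equation/asynchronous-tensor-tiles-C}) of permutation tiles in $G\shuffletensor H$: tiles of type A map to tiles of type A by construction, tiles of type B use the fact that $f$ transports permutation tiles of $G$, and tiles of type C symmetrically use the same property for $g$. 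Functoriality $(f'\circ f)\shuffletensor(g'\circ g)=(f'\shuffletensor g')\circ(f\shuffletensor g)$ and $\id{G}\shuffletensor\id{H}=\id{G\shuffletensor H}$ is then immediate on vertices and edges.

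Next, I would exhibit the coherence isomorphisms. The left and right unitors $\lambdau_G:I\shuffletensor G\to G$ and $\rhou_G:G\shuffletensor I\to G$ project onto the only non-trivial component; the associator $\alphau_{G,H,K}:(G\shuffletensor H)\shuffletensor K\to G\shuffletensor(H\shuffletensor K)$ and the symmetry $\braiding{G}{H}:G\shuffletensor H\to H\shuffletensor G$ are inherited from the obvious bijections on vertices and edges. Each of these maps is a graph isomorphism, and the crux of the argument is to verify that each one both sends permutation tiles to permutation tiles and reflects them, so that it defines an isomorphism in $\Asynch$. This is handled by a direct case analysis on the three families (\ref{equation/asynchronous-tensor-tiles-A})–(\ref{equation/asynchronous-tensor-tiles-C}): for instance a type A tile $(u,y,v)$ in $G\shuffletensor H$ is sent by $\braiding{G}{H}$ to a type A tile $(v,x,u)$ in $H\shuffletensor G$, possibly in reverse orientation, and the symmetric closure condition of an asynchronous graph (one of the basic axioms of \S\ref{section/asynchronous-graphs}) ensures that permutation tiles remain permutation tiles in both directions.

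The naturality of $\lambdau$, $\rhou$, $\alphau$ and $\braiding{-}{-}$ is checked componentwise on vertices and edges, which is immediate. It remains to verify MacLane's pentagon, triangle, and hexagon coherence axioms; these reduce to equalities of the underlying maps on vertices and edges of the free shuffle construction, and hold because the corresponding diagrams already commute for the underlying bifunctor on graphs, which is itself symmetric monoidal (and in fact agrees with the cartesian product on the vertex-component together with a disjoint union construction on edges). The main technical obstacle is really the bookkeeping in the middle step, namely checking that each coherence isomorphism transports \emph{precisely} the three generating families of permutation tiles on the nose (rather than up to the axioms of asynchronous graphs), so that the structure is genuinely monoidal in $\Asynch$ and not merely up to equivalence; but this bookkeeping is routine once one observes that the definition (\ref{equation/asynchronous-tensor-tiles-A})–(\ref{equation/asynchronous-tensor-tiles-C}) is itself symmetric under the coherence isomorphisms.
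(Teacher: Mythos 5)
Your proposal is correct, and it is the expected argument: the paper states this proposition without proof (treating it as a routine observation), and what you supply — bifunctoriality of $\shuffletensor$ checked on the three generating families of tiles, the graph-level unitors, associator and symmetry upgraded to isomorphisms of asynchronous graphs by case analysis on those same families, and the coherence axioms inherited from the underlying (funny/shuffle) symmetric monoidal structure on graphs — is precisely the verification being left implicit. One small point of bookkeeping worth stating explicitly if you write this up: under the associator and the symmetry the three families do get permuted among themselves (e.g.\ a $G$–$H$ interchange tile appears as a type~B tile of $(G\shuffletensor H)\shuffletensor K$ but as a type~A tile of $G\shuffletensor(H\shuffletensor K)$, and type~B and type~C tiles are exchanged by the braiding), so the correspondence is a bijection of tile sets rather than a family-by-family identification; this does not affect the conclusion.
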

\medbreak
An additional observation is that the construction $\asynchanchorof{-}$
described in~(\ref{equation/anchorof}) defines a symmetric monoidal functor
\begin{equation}\label{equation/anchorof-functor}
\begin{tikzcd}[column sep = .8em]
\asynchanchorof{-} \hspace{.5em} : \hspace{.5em} (\Set,+,\varnothing) \arrow[rr] && (\Asynch,\shuffletensor,I)
\end{tikzcd}
\end{equation}
with monoidal coercions given by the expected isomorphisms:
\begin{equation}\label{equation/monoidal-coercions}
\begin{small}
\begin{array}{cccc}
m_{\Label_1,\Label_2} \,\, :  &
\asynchanchorof{\Label_1}\shuffletensor\asynchanchorof{\Label_2}
& \longrightarrow &
\asynchanchorof{\Label_1+\Label_2}
\\
m_{\emptyset} \,\, : &
I
& 
\longrightarrow
&
\asynchanchorof{\varnothing}
\end{array}
\end{small}
\end{equation}
This basic observation has the remarkable consequence that 
for every set $\Label$ of labels:
\medbreak
\begin{proposition}
The asynchronous graph $\asynchanchorof{\Label}$
comes equipped with the structure of a commutative monoid
in the symmetric monoidal category $(\Asynch,\shuffletensor,I)$.
\end{proposition}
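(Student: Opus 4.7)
The plan is to observe that the symmetric monoidal functor $\asynchanchorof{-}:(\Set,+,\varnothing)\longrightarrow(\Asynch,\shuffletensor,I)$ automatically transports commutative monoid structures, and that in the source category $(\Set,+,\varnothing)$, which is cocartesian, every object $\Label$ carries a canonical such structure. The multiplication is the codiagonal $\nabla_\Label:\Label+\Label\to\Label$ and the unit is the unique map $!_\Label:\varnothing\to\Label$; associativity, unit, and commutativity follow from universal property of the coproduct.

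Concretely, I would define the multiplication of $\asynchanchorof{\Label}$ as the composite
\begin{equation*}
\begin{tikzcd}[column sep=1.3em]
\asynchanchorof{\Label}\shuffletensor\asynchanchorof{\Label}\arrow[r,"m_{\Label,\Label}"] & \asynchanchorof{\Label+\Label}\arrow[r,"\asynchanchorof{\nabla_\Label}"] & \asynchanchorof{\Label}
\end{tikzcd}
\end{equation*}
in $\Asynch$, and the unit as the composite $I\stackrel{m_{\varnothing}}{\longrightarrow}\asynchanchorof{\varnothing}\stackrel{\asynchanchorof{!_{\Label}}}{\longrightarrow}\asynchanchorof{\Label}$. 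The verification of the commutative monoid axioms is then a routine diagram chase using the coherence of the monoidal coercions~(\ref{equation/monoidal-coercions}) together with the defining equations of $\nabla_{\Label}$ and $!_{\Label}$.

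The main conceptual step, which is already done by the preceding paragraph of the excerpt, is to check that $\asynchanchorof{-}$ is indeed symmetric monoidal with coercions $m_{\Label_1,\Label_2}$ and $m_\varnothing$; this boils down to exhibiting, for each pair of labels $\ell_1,\ell_2$, the permutation squares in $\asynchanchorof{\Label_1+\Label_2}$ corresponding to the tiles~(\ref{equation/asynchronous-tensor-tiles-A}), (\ref{equation/asynchronous-tensor-tiles-B}), (\ref{equation/asynchronous-tensor-tiles-C}) of the shuffle tensor product, which is immediate from the definition~(\ref{equation/permutation-tile-core}).

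The main (minor) obstacle is purely bookkeeping: one has to confirm that the diagrams expressing associativity, left and right unit, and commutativity of $(\asynchanchorof{\Label},m_\Label,e_\Label)$ in $\Asynch$ really do commute, which reduces via naturality of the coercions and functoriality of $\asynchanchorof{-}$ to the corresponding diagrams for $(\Label,\nabla_\Label,!_\Label)$ in $\Set$; these latter hold because coproducts make every object a commutative monoid. No new mathematical content appears beyond the general principle \emph{symmetric monoidal functors preserve commutative monoids}, specialised to the monoidal functor $\asynchanchorof{-}$ already exhibited in~(\ref{equation/anchorof-functor}).
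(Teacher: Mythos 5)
Your proposal is correct and follows essentially the same route as the paper: the multiplication and unit are obtained by transporting the canonical commutative monoid structure $(\Label,\nabla_{\Label},\mathbf{0}_{\Label})$ of the cocartesian category $(\Set,+,\varnothing)$ along the symmetric monoidal functor $\asynchanchorof{-}$ using the coercions~(\ref{equation/monoidal-coercions}), exactly as in~(\ref{equation/asynchronous-monoid-structure}). The general principle that symmetric monoidal functors carry commutative monoids to commutative monoids is precisely what the paper invokes, so nothing is missing.
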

\medbreak
\noindent
The multiplication and unit of the commutative monoid
$$
\begin{array}{cccc}
\mathsf{mult}_{\Label} \,\, :  &
\asynchanchorof{\Label}\shuffletensor\asynchanchorof{\Label}
& \longrightarrow &
\asynchanchorof{\Label}
\\
\mathsf{unit}_{\Label} \,\, :  &
I
& 
\longrightarrow
&
\asynchanchorof{\Label}
\end{array}
$$
are defined by transporting along the symmetric monoidal functor $\asynchanchorof{-}$ 
the commutative monoid structure $(\Label,\nabla_{\Label},{\mathbf{0}}_{\Label})$ 
of the original set $\Label$ of labels,
provided by the canonical codiagonal and zero functions
\begin{equation}\label{equation/asynchronous-monoid-structure}
\begin{array}{cccc}
\nabla_{\Label} \,\, :  &
{\Label + \Label}
& \longrightarrow &
\Label
\\
{\mathbf{0}}_{\Label} \,\, :  &
\varnothing
& 
\longrightarrow
&
\Label
\end{array}
\end{equation}
in the cocartesian category $(\Set,+,\varnothing)$.

\medbreak
\noindent
\emph{\textbf{Asynchronous games:}}
Following the track and spirit of~\cite{mellies-popl-2019},
we define an asynchronous game $(G,\diamond_G,\lambda_G)$
as an asynchronous graph $(G,\diamond_G)$ where every edge $m:x\to y$ is seen as a move 
between positions $x$, $y$ ``polarized'' by a label~$\polarityplus$ 
when the move $m$ is played by Player on the side of the proof (or the program)
and by a label~$\polarityminus$ when the move $m$ is played
by Opponent on the side of the refutation (or the environment).
The polarization of moves is moreover required to be consistent with permutations.
For that reason, an asynchronous game $(G,\diamond_G,\lambda_G)$
is the same thing as an asynchronous graph $(G,\diamond_G)$
equipped with an asynchronous graph homomorphism
$$
\lambda_G \quad : \quad (G,\diamond_G) \longrightarrow \asynchanchorof{\polarityminus,\polarityplus}.
$$
%
In this formulation, the tensor product of two asynchronous games 
$(G,\diamond_G,\lambda_G)$ and $(H,\diamond_H,\lambda_H)$
is defined as the asynchronous graph
$(G,\diamond_G)\shuffletensor(H,\diamond_H)$
with labelling map $\lambda_{G\shuffletensor H}$
defined as the composite of the tensor product
of the two labelling functions
$$
\begin{tikzcd}[column sep=2em]
{(G,\diamond_G)\shuffletensor(H,\diamond_H)}
\arrow[rr,"{\lambda_G\shuffletensor\lambda_H}"]
&&
\asynchanchorof{\polarityminus,\polarityplus}
\shuffletensor
\asynchanchorof{\polarityminus,\polarityplus}
\end{tikzcd}
$$
followed by the asynchronous graph homomorphism
\begin{equation}\label{equation/pince-on-asynchronous-graphs}
\begin{tikzcd}[column sep=1.6em]
\asynchanchorof{\polarityminus,\polarityplus}
\shuffletensor
\asynchanchorof{\polarityminus,\polarityplus}
\arrow[rr,"{\mathsf{mult}}"]
&&
\asynchanchorof{\polarityminus,\polarityplus}
\end{tikzcd}
\end{equation}
defined by the multiplication of the asynchronous graph $\asynchanchorof{\polarityminus,\polarityplus}$
described in (\ref{equation/asynchronous-monoid-structure})
which associates to each polarity $\polarityminus$ or $\polarityplus$
of the domain the same polarity $\polarityminus$ or $\polarityplus$ of the codomain.
This definition
of the tensor product $A,B\mapsto A\tensor B$ of two asynchronous games
is reminiscent of a very similar construction in the template game model
of differential linear logic~\cite{mellies-popl-2019,mellies-lics-2019}
which we briefly recall now.


\medbreak
\noindent
\emph{\textbf{Template games:}} 
The basic idea of template game semantics is to define a game $(A,\lambda_A)$ 
as a pair consisting $(a)$ of a category~$A$ whose objects are the positions $x,y\in A$
of the game, and whose morphisms $f:x\to y$ are the trajectories or plays of the game, 
and $(b)$ of a labelling functor
\begin{equation}\label{equation/template-game}
\begin{tikzcd}[column sep = 1.4em]
\lambda_A \quad : \quad A \arrow[rr] && \anchorofgames
\end{tikzcd}
\end{equation}
to a specific category $\anchorofgames$ of polarities called the \emph{template of games}.
One main advantage of the definition is that the template $\anchorofgames$
may be chosen to describe a particular class of games (sequential, concurrent)
and scheduling policy between Player and Opponent (alternating, non-alternating).
%
%
Typically, in the case of non-alternating concurrent games, the template of games $\anchorofgames=\anchorofconcgames$
is defined as the category with one object $\lrangle{\ast}$ generated by two maps
\begin{equation}\label{equation/template-of-concurrent-games}
\begin{tikzcd}
\lrangle{\ast}
\arrow[loop left]{l}{\polarityplus}
\arrow[loop right]{r}{\polarityminus}
\end{tikzcd}
\end{equation}
and the unique equation $\polarityplus\cdot\polarityminus=\polarityminus\cdot\polarityplus$.
It is worth observing that since a category with one object is the same thing as a monoid,
the category $\anchorofgames=\anchorofconcgames$ may be equivalently defined as the commutative monoid
freely generated by the two elements $\{\polarityminus,\polarityplus\}$.
More generally, it makes sense to consider the symmetric monoidal functor 
\begin{equation}\label{equation/anchorof-categories}
\onecatanchorof{-} \quad : \quad (\Set,+,\varnothing) \longrightarrow (\Cat,\times,\textbf{1})
\end{equation}
which transports every set $\Label$ to the commutative monoid~$\onecatanchorof{\Label}$
freely generated by $\Label$, seen as a category with one object.
As a consequence, every category $\onecatanchorof{\Label}$ 
defines a commutative monoid in $\Cat$ with multiplication and unit noted
\begin{equation}\label{equation/monoid-on-anchorof-category}
\begin{array}{cccc}
\mathsf{mult}_{\Label} \,\, :  &
\onecatanchorof{\Label}\times\onecatanchorof{\Label}
& \longrightarrow &
\onecatanchorof{\Label}
\\
\mathsf{unit}_{\Label} \,\, :  &
{\mathbf{1}}
& 
\longrightarrow
&
\onecatanchorof{\Label}
\end{array}
\end{equation}
%
There is an obvious parallel with asynchronous games here,
since every concurrent template game~$A$ is a category 
labelled by the category of polarities 
\begin{equation}\label{equation/anchorofconcurrentgames}
\anchorofgames=\onecatanchorof{\polarityminus,\polarityplus}
\end{equation}
in the same way that every asynchronous game $(G,\diamond_G,\lambda_G)$
is an asynchronous graph $(G,\diamond_G)$ labelled 
by the asynchronous graph of polarities $\asynchanchorof{\polarityminus,\polarityplus}$.

\medbreak
\noindent
\emph{\textbf{Template games continued:}} 
One key observation of \cite{mellies-popl-2019} is that in good situations,
the template of games $\anchorofgames$ comes equipped 
with a companion category $\anchorofstrat$ called the \emph{template of strategies},
whose purpose is to describe the possible schedulings of a strategy $\sigma$ 
between two template games $(A,\lambda_A)$ and $(B,\lambda_B)$.
The category $\anchorofstrat$ comes moreover equipped with two functors
\begin{equation}\label{equation/the-two-functors}
\begin{tikzcd}[column sep = 2em]
\anchorofgames 
&&
\anchorofstrat
\arrow[ll,"{s}"{swap}]\arrow[rr,"{t}"] 
&& \anchorofgames
\end{tikzcd}
\end{equation}
which describe how the scheduling of the strategy~$\sigma$
is related to the schedulings of the games $(A,\lambda_A)$ and $(B,\lambda_B)$.
%
%
A strategy $\sigma$ between two games $(A,\lambda_A)$ and $(B,\lambda_B)$
\begin{equation}\label{equation/sigma-maps}
\begin{tikzcd}[column sep=.8em, row sep=1.2em]
\sigma\,=\,(S,s,t,\lambda_{\sigma}) \,\, : \,\, (A,\lambda_A) \arrow[spanmap]{rrrr} &&&& (B,\lambda_B)
\end{tikzcd}
\end{equation}
is defined as a span of functors
$$
\begin{tikzcd}[column sep=2em]
\Agame && S \arrow[ll,"s"{swap}]\arrow[rr,"t"] && \Bgame
\end{tikzcd}
$$
with support a category~$S$, together with
a functor $\lambda_{\sigma}:S\to\anchorofstrat$
making the diagram below commute:
\begin{equation}\label{equation/morphism-of-span}
\begin{tikzcd}[column sep=1.8em, row sep=1.2em]
\Agame \arrow[dd,"{\lambda_A}"{swap}]
&& 
\Sgame
\arrow[ll,"{\stratsource}"{swap}]
\arrow[dd,"{\interaction{\sigma}}"]
\arrow[rr,"{\strattarget}"]
&&
\Bgame \arrow[dd,"{\lambda_\Bgame}"]
\\
\\
{\anchorofgames}
&& 
{\anchorofstrat}
\arrow[ll,"{\stratsource}"{swap}]
\arrow[rr,"{\strattarget}"]
&&
{\anchorofgames}
\end{tikzcd}
\end{equation}
In the case of concurrent template games, where $\anchorofgames=\onecatanchorof{\polarityminus,\polarityplus}$,
the template of strategies is defined as the category
\begin{equation}\label{equation/anchorofconcurrentstrategies}
\anchorofstrat \quad = \quad
\onecatanchorof{\polarityminussource,\polarityplussource,\polarityminustarget,\polarityplustarget}
\end{equation}

The intuition is that every move~$m$ played by a strategy $\sigma$
between two concurrent games $A$ and $B$ may be labelled 
in four different ways depending on the polarity of $m$ 
and the component $A$ or $B$ in which it is played:
\begin{center}
\begin{tabular}{ccl}
$\polarityminussource$ &:& an Opponent move played by $\sigma$ in the source $A$,\\
$\polarityplussource$ &:& a Player move played by $\sigma$ in the source $A$,\\
$\polarityminustarget$ &:& an Opponent move played by $\sigma$ in the target $B$,\\
$\polarityplustarget$ &:& a Player move played by $\sigma$ in the target $B$.\\
\end{tabular}
\end{center}
The functors $s$ and $t$ are then characterized 
by the image of the four generators of 
$\anchorofstrat=\onecatanchorof{\polarityminussource,\polarityplussource,\polarityminustarget,\polarityplustarget}$, as follows:
\begin{equation}\label{equation/st}
\begin{array}{llll}
\hspace{-.8em}
s: \,\, \polarityminussource\mapsto\polarityplus
&
\hspace{-.4em}
\polarityplussource\mapsto\polarityminus
&
\hspace{-.4em}
\polarityminustarget\mapsto \id{\lrangle{\ast}}
&
\hspace{-.4em}
\polarityplustarget\mapsto \id{\lrangle{\ast}}
\\
\hspace{-.8em}
t:\,\,
\polarityminussource\mapsto\id{\lrangle{\ast}}
&
\hspace{-.4em}
\polarityplussource\mapsto\id{\lrangle{\ast}}
&
\hspace{-.4em}
\polarityminustarget\mapsto\polarityminus
&
\hspace{-.4em}
\polarityplustarget\mapsto \polarityplus
\end{array}
\end{equation}
Note that the definition of $s$ and $t$ formalizes the intuition
that every move~$m$ of polarity $\polarityminussource$ or $\polarityplussource$
played by the strategy~$\sigma$ should have a reverse polarity~$\polarityplus$
or~$\polarityminus$ in the source game~$A$, 
while a move~$m$ of polarity $\polarityminustarget$ or $\polarityplustarget$
should retain its polarity~$\polarityminus$ or~$\polarityplus$ in the target game~$B$.
%

\medbreak
\noindent
\emph{\textbf{A tale of deadlocks and diagonals:}}
As we saw, there is a strong affinity between asynchronous games
and concurrent template games.
This is particularly striking in the definition of the tensor product
of two concurrent template games $(A,\lambda_A)$ and $(B,\lambda_B)$
as the template game 
\begin{equation}\label{equation/tensor-on-cat}
(A,\lambda_A)\tensor (B,\lambda_B) = (A\times B, \lambda_{A\tensor B})
\end{equation}
with support the cartesian product~$A\times B$ of the categories~$A$ and~$B$, 
and with labelling functor~$\lambda_{A\tensor B}$ defined as the composite
of the cartesian product of the two labelling functors
$$
\begin{tikzcd}[column sep=2.2em]
{A\times B}
\arrow[rr,"{\lambda_A\times\lambda_B}"]
&&
\onecatanchorof{\polarityminus,\polarityplus}
\times
\onecatanchorof{\polarityminus,\polarityplus}
\end{tikzcd}
$$
followed by the multiplication functor defined in~(\ref{equation/monoid-on-anchorof-category}) 
\begin{equation}\label{equation/pince-on-template-categories}
\begin{tikzcd}[column sep=1.9em]
\onecatanchorof{\polarityminus,\polarityplus}
\times
\onecatanchorof{\polarityminus,\polarityplus}
\arrow[rr,"{\mathsf{mult}_{\Label}}"]
&&
\onecatanchorof{\polarityminus,\polarityplus}
\end{tikzcd}
\end{equation}
for the category $\onecatanchorof{\Label}$ where $\Label=\{\polarityminus,\polarityplus\}$.
However, one main difference between~(\ref{equation/pince-on-template-categories})
and~(\ref{equation/pince-on-asynchronous-graphs}) is that the cartesian product
of categories replaces the shuffle tensor product of asynchronous graphs.
Let us illustrate the difference with an example, and explain
why the interpretation provided by the asynchronous game model
is more satisfactory than the concurrent template model.
%
Consider the asynchronous game $A$ consisting of
one Opponent move $m:x\to x'$ between two positions $x,x'$ ;
and the asynchronous game $B$ consisting of one Player move $n:y\to y'$
between two positions $y,y'$.
The asynchronous game $A\tensor B$ has the asynchronous graph
with four vertices (positions), four edges (moves) and a permutation tile
as support:
\begin{equation}\label{equation/permutation-tile-example}
\raisebox{-3.8em}{\includegraphics[height=7.6em]{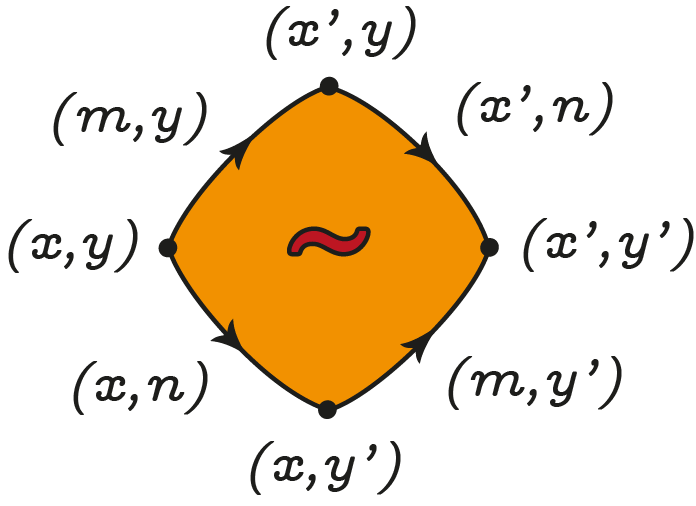}}
\end{equation}
Now, consider the strategy $\sigma$ of the game $A\shuffletensor B$ 
which plays the trajectory (and its prefixes)
\begin{equation}\label{equation/first-trajectory}
\begin{tikzcd}[column sep = 1.5em]
{(x,y)}\arrow[rr,"{(m,y)}"] && {(x',y)}\arrow[rr,"{(x',n)}"] && {(x',y')}
\end{tikzcd}
\end{equation}
from the starting position $(x,y)$ and the counter-strategy $\tau$ 
which plays the trajectory (and its prefixes)
\begin{equation}\label{equation/second-trajectory}
\begin{tikzcd}[column sep = 1.5em]
{(x,y)}\arrow[rr,"{(x,n)}"] && {(x,y')}\arrow[rr,"{(m,y')}"] && {(x',y')}
\end{tikzcd}
\end{equation}
from the very same position $(x,y)$.
In this situation, the interaction between $\sigma$ and $\tau$ produces a deadlock,
because each strategy $\sigma$ and $\tau$ is waiting for a signal
which never comes from the other side in order to proceed.
The deadlock between $\sigma$ and $\tau$ is nicely reflected in the asynchronous game $A\shuffletensor B$ 
by the fact that the intersection of~$\sigma$ and of~$\tau$ only contains the empty path from $(x,y)$ to itself.

Unfortunately, the situation is not as satisfactory
when one shifts to the corresponding concurrent template games~$A$ and~$B$.
Indeed, when one considers~$\sigma$ and~$\tau$ as strategies playing on the concurrent template game~$A\tensor B$
obtained by tensoring~$A$ and~$B$ using~(\ref{equation/tensor-on-cat}),
it appears that the two trajectories~(\ref{equation/first-trajectory}) and~(\ref{equation/second-trajectory})
are identified as the ``synchronized move'' given by the diagonal map $(m,n):(x,y)\to (x',y')$
of the cartesian category $A\times B$ underlying the concurrent template game $A\tensor B$,
which we represent in full below:
\begin{equation}\label{equation/cartesian-product}
\begin{tikzcd}[column sep=.65em,row sep=.75em]
&& 
{(x',y)}
\arrow[rrdd,"{(x',n)}"]
\\
\\
{(x,y)}
\arrow[rrrr,"{(m,n)}"]
\arrow[rrdd,"{(x,n)}"{swap}]
\arrow[rruu,"{(m,y)}"]
&&&&
{(x',y')}
\\
\\
&&
{(x,y')}
\arrow[rruu,"{(m,y')}"{swap}]
\end{tikzcd}
\end{equation}
The fact that the two trajectories~(\ref{equation/first-trajectory}) and~(\ref{equation/second-trajectory})
are identified to the diagonal map $(m,n)$ means that the intersection of~$\sigma$ and~$\tau$ 
is not trivial anymore, since it contains the diagonal map.
In a sense, the categorical interpretation believes (wrongly!) that the two strategies~$\sigma$ and~$\tau$
could ``resolve'' the deadlock by playing the two moves~$m$ and~$n$ synchronously.
This excess of synchronization in the interpretation of $A\tensor B$
should be seen as a defect (not as a feature!) of the original concurrent template game semantics
formulated in~\cite{mellies-popl-2019,mellies-lics-2019}.
%

\medbreak
\noindent
\emph{\textbf{The Gray tensor product:}}
We correct the situation in the present paper by shifting one dimension higher
the original formalism of template game semantics:
we design an asynchronous template game model
where template games $(A,\lambda_A)$ are defined
as 2-categories instead of simply as 1-categories.
%
To that purpose, we take seriously the pretty striking analogy
between the shuffle tensor product
$A,B\mapsto A\shuffletensor B$ of asynchronous graphs,
and the Gray tensor product $A,B\mapsto A\graytensor B$ of 2-categories.
The Gray tensor product was introduced by Gray in~\cite{Gray-book-1974}
and it plays a fundamental role in contemporary categorical algebra,
see for instance \cite{BataninCisinskiWeber13,bourke-gurski-acs-2017}.
One reason is that the Gray tensor product comes with 
a closed structure provided by the hom-2-category $[A,B]_{ps}$
between 2-categories, defined as the 2-category of 2-functors
from~$A$ to~$B$, pseudonatural transformations and modifications.
Another reason is a subtle coherence theorem for tricategories
\cite{GordonPowerStreet95,Gurski06} establishes
that every tricategory is equivalent to a Gray-category,
defined as category enriched over the category $\TwoCat$ 
equipped with the Gray tensor product.

One first contribution of the paper is to clarify
the relationship between the shuffle tensor product
of asynchronous graphs and the Gray tensor product 
of 2-categories by constructing a symmetric monoidal functor
\begin{equation}\label{equation/asynchtwocat}
\asynchtwocat{-} \,\,\, : \,\,\, (\Asynch,\shuffletensor,\textbf{I}) \,\,\, \longrightarrow \,\,\, (\TwoCat,\graytensor,\textbf{I})
\end{equation}
The functor $\asynchtwocat{-}$ associates to every 
asynchronous graph~$(G,\diamond)$ the 2-category~$\asynchtwocat{G,\diamond}$
whose objects are the vertices of $G$, whose morphisms are the paths of $G$
and whose 2-dimensional cells are the reshufflings (or reschedulings) 
between paths of the graph~$G$.
A detailed definition of reshuffling appears in \S\ref{section/reshuffling}
but let us already mention that the main purpose of reshufflings
is to identify different combinations of permutation tiles such as
\begin{center}
\begin{tabular}{ccc}
\raisebox{-3.4em}{\includegraphics[height=7.5em]{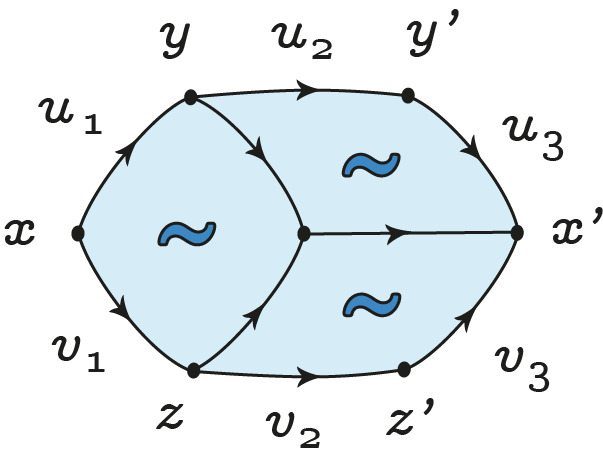}}
& $\cong$ & 
\raisebox{-3.4em}{\includegraphics[height=7.5em]{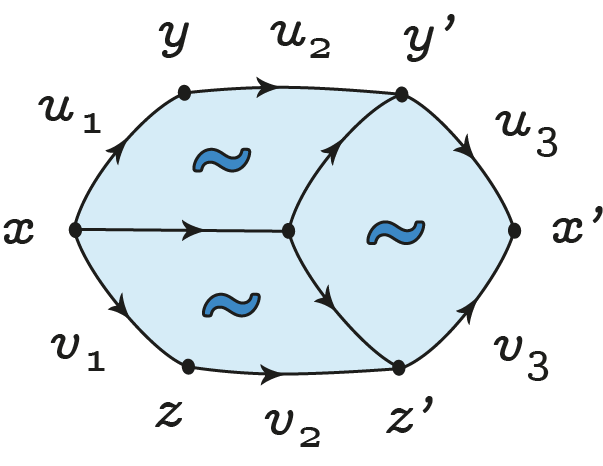}}
\end{tabular}
\end{center}
which deserve to be considered as equivalent in the asynchronous graph $G$.
The fact that the functor $\asynchtwocat{-}$
comes equipped with natural isomorphisms of 2-categories:
$$
\begin{array}{cccc}
p_{G,H} \,\, :  &
\asynchtwocat{G,\diamond_G}\graytensor\asynchtwocat{H,\diamond_H}
& \longrightarrow &
\asynchtwocat{G\shuffletensor H,\diamond_{G\shuffletensor H}}
\\
p_{I} \,\, : &
{\textbf{1}}
& 
\longrightarrow
&
\asynchtwocat{\mathbf{I}}
\end{array}
$$
qualifies the Gray tensor product as a natural and expressive extension
of the usual shuffle product of asynchronous graphs.
In particular, the functor $\asynchtwocat{-}$ may be composed
to $\asynchanchorof{-}$ in order to obtain the symmetric monoidal functor 
\begin{equation}\label{equation/anchorof-two-categories}
\twocatanchorof{-} 
\,\, : \,\, (\Set,+,\varnothing) \,\, \longrightarrow \,\, (\TwoCat,\times,\textbf{1})
\end{equation}
which associates to every set $\Label$
the symmetric monoidal category freely generated by $\Label$,
seen as a 2-category with one object.
Note that $\twocatanchorof{-}$ plays the same role
as the functor~(\ref{equation/anchorof-categories})
which it upgrades from categories to 2-categories.

\medbreak
\noindent
\emph{\textbf{Asynchronous template games:}}
We have accumulated enough material
in order to define the notion of \emph{asynchronous template game}.
There is still one fundamental obstruction 
which we need to resolve however: the notion of template
formulated in \cite{mellies-popl-2019,mellies-lics-2019} is defined
as an internal category 
$$
\moo = (\moo[0],\moo[1],\mathsf{mult}:\moo[2]\to\moo[1],\mathsf{unit}:\moo[1]\to\moo[0])
$$
in the category~$\Scategorysurround$ with finite limits,
where $\moo[0]=\anchorofgames$ is the object of objects
and $\moo[1]=\anchorofstrat$ is the object of morphisms.
For that reason, the formalism of template games
originally developed in \cite{mellies-popl-2019,mellies-lics-2019}
does not work any more 
when the category $\Scategorysurround=\Cat$ with finite limits
is replaced by the monoidal category $\Scategorysurround=\TwoCat$
equipped with the Gray tensor product.
One main contribution of the paper is to resolve 
that foundational issue by establishing 
an unexpected connection with the seminal work
by Marcelo Aguiar on quantum groups~\cite{aguiar-phd-1997}.
The key observation is that the 2-category
${\twocatanchorof{\polarityminus,\polarityplus}}$
of asynchronous polarities is equipped
with a comonoid structure
\begin{equation}\label{equation/Gray-comonoid-of-objects}
\begin{array}{cc}
\begin{tikzcd}[column sep = 1em]
d \,\,\, : \,\,\, {\twocatanchorof{\polarityminus,\polarityplus}}
\arrow[rr] && {\twocatanchorof{\polarityminusone,\polarityplusone}\graytensor\twocatanchorof{\polarityminustwo,\polarityplustwo}}
\end{tikzcd}
\\
\begin{tikzcd}[column sep = 1em]
e \,\,\, : \,\,\, {\twocatanchorof{\polarityminus,\polarityplus}}
\arrow[rr] && {\grayunit}
\end{tikzcd}
\end{array}
\end{equation}
where we find useful to recall the isomorphism:
$$
\twocatanchorof{\polarityminusone,\polarityplusone}\graytensor\twocatanchorof{\polarityminustwo,\polarityplustwo}
\quad \cong \quad
\twocatanchorof{\polarityminusone,\polarityplusone,\polarityminustwo,\polarityplustwo}
$$
associated to the symmetric monoidal functor~(\ref{equation/anchorof-two-categories}).
The comultiplication $d$ is then characterized
(and defined) by the fact that~$d$ transports the edges $\polarityminus$
and $\polarityplus$ to the paths of length 2:
$$
d \quad : \quad
\polarityminus \mapsto \polarityminustwo \cdot \polarityminusone
\quad \quad \quad
\polarityplus \mapsto \polarityplusone\cdot \polarityplustwo
$$
Note the right-to-left orientation of the Opponent polarity $\polarityminus$
and the left-to-right orientation of the Player polarity $\polarityplus$.
A nice connection emerges here between the notion
of \emph{copycat strategy} in game semantics
and the notion of \emph{Gray comonoid}
defined as a comonoid for the Gray tensor product~$\graytensor$
of 2-categories: typically, the path or trajectory
$$
\polarityminus\cdot\polarityplus\cdot\polarityminus \quad \in \quad {\twocatanchorof{\polarityminus,\polarityplus}}
$$
describing the sequence of an Opponent move, 
a Player move and an Opponent move
played in the 2-category $\twocatanchorof{\polarityminus,\polarityplus}$
of polarities, is transported to the path or trajectory
$$
\polarityminustwo\cdot\polarityminusone\cdot\polarityplusone
\cdot\polarityplustwo\cdot\polarityminustwo\cdot\polarityminusone
\quad \in \quad
\twocatanchorof{\polarityminusone,\polarityplusone}\graytensor\twocatanchorof{\polarityminustwo,\polarityplustwo}
$$
of six moves shuffled between 
$\twocatanchorof{\polarityminusone,\polarityplusone}$
and 
$\twocatanchorof{\polarityminustwo,\polarityplustwo}$.
The six moves may be represented
as blue nodes and red nodes on a string diagram, 
where the ``zig-zag'' induced by the comultiplication~$d$ implements 
the scheduling of the usual copycat strategy
\cite{abramsky-jagadeesan-malacaria,hyland-ong-full-abstraction} of game semantics:
\begin{center}
\fbox{\begin{tabular}{ccc}
\raisebox{-5.4em}{\includegraphics[height=11.2em]{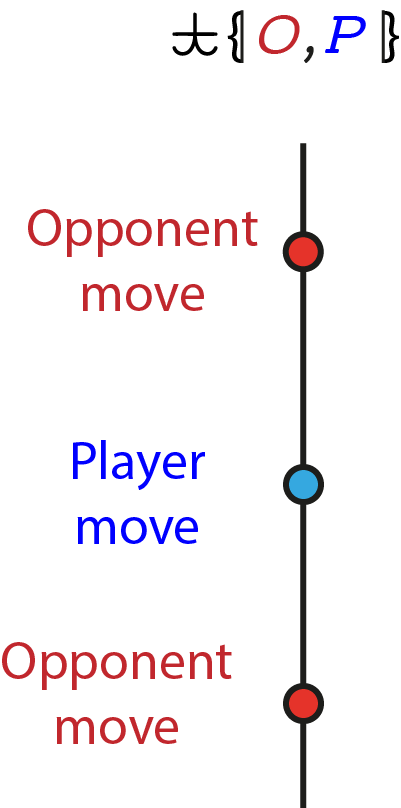}}
& $\mapsto$ & 
\raisebox{-5.4em}{\includegraphics[height=11.2em]{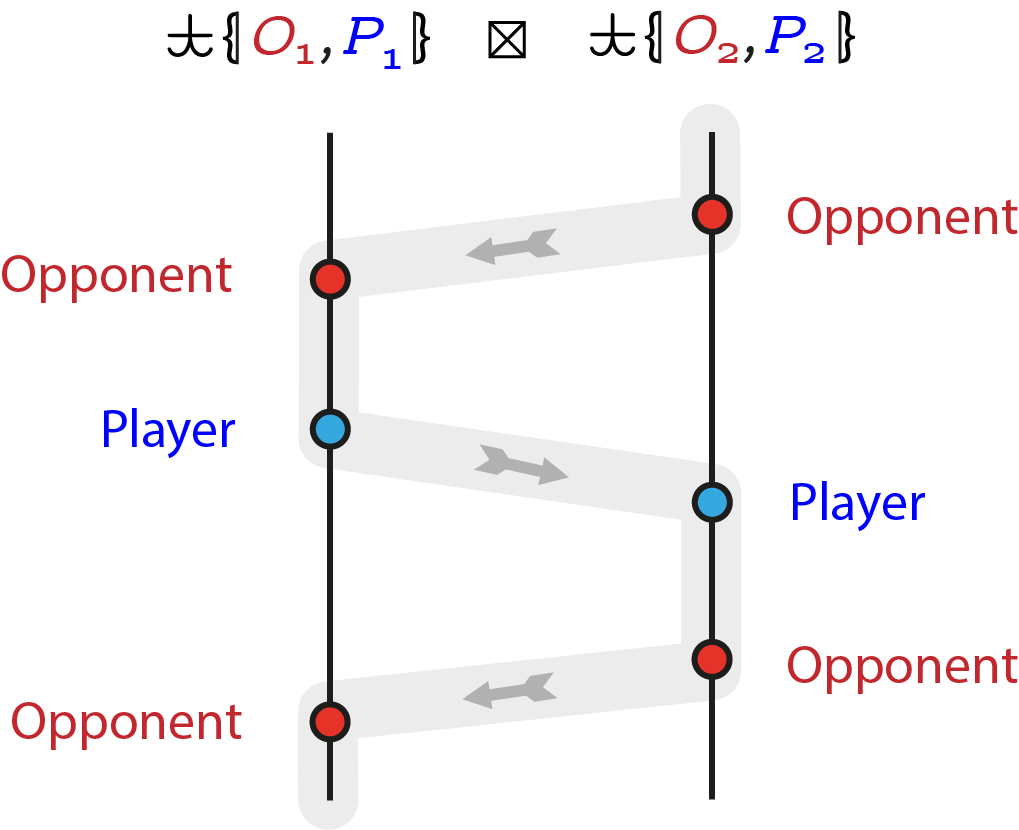}}
\end{tabular}}
\end{center}
Note that following a well-established convention of game semantics,
the flow of time goes top-down, as indicated
by the grey arrows switching components in the diagram.
This connection between Gray comonoids and game semantics
can be explained by the existence of a canonical 2-functor
$$
\begin{tikzcd}[column sep = 1em]
q_{A,B} \quad : \quad 
A\graytensor B \arrow[rr] && A\times B
\end{tikzcd}
$$
and the fact that the composite
$$
\begin{tikzcd}[column sep = 1.5em]
A \arrow[rr,"{d_A}"] && 
{A\graytensor A} \arrow[rr,"{q_{A,B}}"] && 
{A\times A}
\end{tikzcd}
$$
coincides for every Gray comonoid $(A,d_A,e_A)$
with the diagonal of the 2-category~$A$.
For this reason, the comultiplication~$d_A$
of a Gray comonoid $A$ is required to provide
a specific recipe to transport every morphism $f:x\to y$ in~$A$
to a morphism $d_A(f):(x,x)\to(y,y)$ in $A\graytensor A$
which projects as $f:x\to y$ on both left and right component.
For that reason, one establishes easily that

\medbreak
\begin{proposition}\label{proposition/polarized-set}
Given a set $S$, a Gray comonoid structure 
$(\twocatanchorof{S},d_S,e_S)$
on the 2-category $\twocatanchorof{S}$ is the same thing 
as a polarity function $\lambda_S:S\to\{\polarityminus,\polarityplus\}$.
\end{proposition}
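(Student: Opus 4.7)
Plan: The terminal 2-category $\grayunit$ admits a unique 2-functor from any source, so the counit $e_S$ is uniquely determined and carries no information. The full content of a Gray comonoid structure therefore lies in the comultiplication 2-functor
\[
d_S \,:\, \twocatanchorof{S} \longrightarrow \twocatanchorof{S} \graytensor \twocatanchorof{S}.
\]
The 2-category $\twocatanchorof{S}$ is freely generated (in the sense of a 2-category with one object) by the set $S$ of generating 1-cells together with the reshuffling 2-cells coming from the permutation tiles of $\asynchanchorof{S}$; hence $d_S$ is determined by its values on the generators $\ell \in S$.

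For each generator $\ell$, the Gray comonoid axiom, composed with the canonical comparison $q_{\twocatanchorof{S},\twocatanchorof{S}}\colon \twocatanchorof{S}\graytensor\twocatanchorof{S}\to\twocatanchorof{S}\times\twocatanchorof{S}$, forces $d_S(\ell)$ to project to $\ell$ in both factors. The 1-cells of $\twocatanchorof{S}\graytensor\twocatanchorof{S}$ are words in the alphabet of left and right copies of generators of $S$, and the projection condition singles out exactly two possibilities: either $d_S(\ell)=\ell_1\cdot\ell_2$ or $d_S(\ell)=\ell_2\cdot\ell_1$. Following the orientation convention fixed in~(\ref{equation/Gray-comonoid-of-objects}), we set $\lambda_S(\ell)=\polarityplus$ in the first case and $\lambda_S(\ell)=\polarityminus$ in the second, giving the forward map from Gray comonoid structures to polarity functions.

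Conversely, given $\lambda_S$, define $d_S$ on generators by the same rule, extend to arbitrary paths by functoriality, and extend to reshuffling 2-cells by interleaving Gray tiles (which swap moves across the two factors of $\twocatanchorof{S}\graytensor\twocatanchorof{S}$) with the internal permutation 2-cells of each factor, as illustrated by the zig-zag copycat picture displayed just before the proposition. The counit axiom is immediate since $\grayunit$ is terminal; coassociativity is checked on the generators, where a polarity-$\polarityplus$ generator $\ell$ is sent by both $(d_S\graytensor\mathrm{id})\circ d_S$ and $(\mathrm{id}\graytensor d_S)\circ d_S$ to the length-three word $\ell_1\cdot\ell_2\cdot\ell_3$ in $\twocatanchorof{S}^{\graytensor 3}$, and symmetrically to $\ell_3\cdot\ell_2\cdot\ell_1$ in the $\polarityminus$ case.

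The main obstacle is verifying the compatibility of $d_S$ with the 2-cells of $\twocatanchorof{S}$: for a pair of generators $\ell,\ell'\in S$, sending the permutation 2-cell $\ell\cdot\ell'\Rightarrow\ell'\cdot\ell$ through $d_S$ requires exhibiting a coherent 2-cell in $\twocatanchorof{S}\graytensor\twocatanchorof{S}$ between $d_S(\ell)\cdot d_S(\ell')$ and $d_S(\ell')\cdot d_S(\ell)$. This is exactly where the Gray tensor product differs from the cartesian one: the required 2-cell is obtained by composing Gray tiles (swapping a generator of one factor past a generator of the other) with the internal permutation 2-cells of each factor, and its uniqueness (up to the coherence relations of the Gray tensor product) is what ultimately guarantees that the assignment $\lambda_S\mapsto d_S$ yields a well-defined 2-functor, establishing the claimed bijection.
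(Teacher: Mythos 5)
Your proof is correct and follows essentially the same route as the paper's own justification: the counit constraint (equivalently, composing $d_S$ with the comparison 2-functor $q$ and requiring the diagonal) forces $d_S(\ell)$ to be either $\ell_1\cdot\ell_2$ or $\ell_2\cdot\ell_1$ for each generator $\ell\in S$, which is exactly the polarity datum, and the local full faithfulness of $q$ settles existence and uniqueness of the images of the permutation 2-cells. The only cosmetic point is that the counit axiom is not vacuous merely because $\grayunit$ is terminal --- it is precisely the projection condition you had already invoked --- but this does not affect the argument.
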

\medbreak

The polarity $\lambda_S(e)$ for $e\in S$ 
indicates whether the morphism $d(e):(\ast,\ast)\to(\ast,\ast)$
in the 2-category $\twocatanchorof{S}\graytensor\twocatanchorof{S}$
is defined left-to-right as $e_1\cdot e_2$
in the case $\lambda_S(e)=\polarityplus$
or right-to-left as $e_2\cdot e_1$
in the case $\lambda_S(e)=\polarityminus$.

\medbreak

This pleasant convergence between the notions
of polarity in a game and of Gray comonoid
leads us to the main definition of the paper:
\medbreak
\begin{definition}[Asynchronous template game]\label{definition/asynchronous-template-games}
An asynchronous template game $(A,\lambda_A)$
is defined as a Gray comonoid $(A,d_A,e_A)$
equipped with a Gray comonoid homomorphism
$$
\begin{tikzcd}[column sep = 1em]
{\lambda_A}\quad : \quad
A
\arrow[rr] && {\twocatanchorof{\polarityminus,\polarityplus}}.
\end{tikzcd}
$$
to the Gray comonoid $\twocatanchorof{\polarityminus,\polarityplus}$
in~(\ref{equation/Gray-comonoid-of-objects}).
The 2-category $A$ is called the \emph{support} of the 
asynchronous template game~$(A,\lambda_A)$.
Here, by Gray comonoid homomorphism 
$$\begin{tikzcd}[column sep = 1em]
h\quad : \quad
{(A,d_A,e_A)}
\arrow[rr] && 
{(B,d_B,e_B)}
\end{tikzcd}$$
we mean a 2-functor $h:A\to B$ between the underlying 2-categories~$A$ and~$B$,
making the diagram below commute:
$$
\begin{tikzcd}[column sep = 3em, row sep = 1em]
A\arrow[rr,"{h}"]\arrow[dd,"{d_A}"{swap}]
&&
B
\arrow[dd,"{d_B}"]
\\
\\
A\graytensor A\arrow[rr,"{h\graytensor h}"]
&&
{B\graytensor B}
\end{tikzcd}
$$
\end{definition}
The definition 
(Def.~\ref{definition/asynchronous-template-games}) 
of asynchronous template game provides a conservative extension
of the traditional notion of asynchronous games,
in the sense that an asynchronous game in the usual sense
(described earlier in the paper) happens to be the same thing as
an asynchronous template game $(A,\lambda_A)$
whose support~$A$ is the 2-category $A=\asynchtwocat{G,\diamond_G}$
of positions, trajectories and reshufflings
of an asynchronous graph $(G,\diamond_G)$.

\medbreak
\noindent
\emph{\textbf{The template of asynchronous games:}}
At this stage, we want to upgrade one dimension up 
the definitions~(\ref{equation/anchorofconcurrentgames})
and~(\ref{equation/anchorofconcurrentstrategies}) 
of the template of concurrent games in~\cite{mellies-popl-2019}.
We thus write
$$
\anchorofgames = \twocatanchorof{\polarityminus,\polarityplus}
\quad\quad\quad
\anchorofstrat = 
\twocatanchorof{\polarityminussource,\polarityplussource, \polarityminustarget,\polarityplustarget}
$$
where $\anchorofgames$ and $\anchorofstrat$ are now understood as 2-categories of polarities, instead of categories.
We have seen in Def.~\ref{definition/asynchronous-template-games} that a template game $(A,\lambda_A)$
is a Gray comonoid $(A,d_A,e_A)$ equipped with a comonoid homomorphism
$$
\begin{tikzcd}[column sep = 1em]
{\lambda_A}\quad : \quad
A
\arrow[rr] && {\anchorofgames}
\end{tikzcd}
$$
There remains to define the appropriate asynchronous notion of strategy $\sigma$
adapting the original definition~(\ref{equation/sigma-maps}) as a span of categories.
Following the guidance and inspiration of~\cite{aguiar-phd-1997},
we observe that~$\anchorofstrat$ comes equipped with a 2-functor
\begin{equation}\label{equation/bicomodule-of-anchor}
\begin{tikzcd}[column sep = .8em]
{\mathsf{coact}_{\moo}}\,\, : \,\,
\anchorofstrat \arrow[rr] && {\anchorofgames\graytensor\anchorofstrat\graytensor\anchorofgames}
\end{tikzcd}
\end{equation}
defined by the images of the four generating edges:
$$
\polarityminussource \mapsto \polarityplusone\cdot\polarityminussource
\quad
\polarityplussource \mapsto \polarityplussource\cdot\polarityminusone
\quad
\polarityminustarget \mapsto \polarityminusthree\cdot \polarityminustarget
\quad
\polarityplustarget \mapsto \polarityplustarget\cdot\polarityplusthree
$$
where the indices~$1$ and~$3$ indicate 
in which component 
the moves are played.
Here, the 2-functor ${\mathsf{coact}_{\moo}}$ should be understood 
as an asynchronous variant (and refinement) of the span $(s,t)$
of projection functors~(\ref{equation/the-two-functors})
as defined in~(\ref{equation/st}).
Moreover, one observes that the 2-functor ${\mathsf{coact}_{\moo}}$
just defined satisfies the equations required of 
a $\anchorofgames,\anchorofgames$-bicomodule structure
on the template 2-category~$\anchorofstrat$ of strategies,
see \S\ref{section/horizontal-maps} for a definition of bicomodule between comonoids.

\medbreak

This observation leads us to the following definition 
of asynchronous strategy between template games:
\medbreak
\begin{definition}[Asynchronous strategies]\label{definition/asynchronous-strategies}
An asynchronous strategy
\begin{equation}\label{equation/sigma-comodule}
\begin{tikzcd}[column sep=.8em, row sep=1.2em]
\sigma\,=\,(S,\mathsf{coact}_{\sigma},\lambda_{\sigma}) \,\, : \,\, (A,\lambda_A) \arrow[spanmap]{rrrr} &&&& (B,\lambda_B)
\end{tikzcd}
\end{equation}
is a triple~$(S,\mathsf{coact}_{\sigma},\lambda_{\sigma})$
consisting of 2-category~$S$ used as support,
together with an $A,B$-bicomodule structure
$$
\begin{tikzcd}[column sep = 1em]
\mathsf{coact}_{\sigma}\quad : \quad
S
\arrow[rr] && 
{A} \graytensor S\graytensor {B}
\end{tikzcd}
$$
and a polarity 2-functor 
$$
\begin{tikzcd}[column sep = 1em]
\lambda_{\sigma}
\quad : \quad
S
\arrow[rr] && 
\anchorofstrat
\end{tikzcd}
$$
making the diagram below commute:
\begin{equation}\label{equation/strategy-polarity}
\begin{tikzcd}[column sep = 3em, row sep = .8em]
S\arrow[rr,"{\lambda_{\sigma}}"]
\arrow[dd,"{\mathsf{coact}_\sigma}"{swap}]
&&
{\moo[1]}
\arrow[dd,"{\mathsf{coact}_{\moo}}"]
\\
\\
{A
\graytensor
S
\graytensor
B}
\arrow[rr,"{\lambda_{A}\graytensor\lambda_{\sigma}\graytensor\lambda_{B}}"]
&&
{\moo[0]}
\graytensor
{\moo[1]}
\graytensor
{\moo[0]}
\end{tikzcd}
\end{equation}
\end{definition}
As explained later in the paper, writing
$$\anchorofasynch=(\anchorofgames,\anchorofstrat,\mathsf{coact}_{\moo})$$
for the template of asynchronous games just described, 
we obtain a symmetric monoidal closed (and in fact $\ast$-autonomous)
bicategory $\Games{(\anchorofasynch)}$ of asynchronous template games
and asynchronous strategies between them.

\medbreak
\begin{theorem}
The bicategory $\Games{(\anchorofasynch)}$ is $\ast$-autonomous
and cartesian, and thus defines a model of the multiplicative and additive fragment of linear logic (MALL).
\end{theorem}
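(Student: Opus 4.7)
The plan is to establish the three pieces of structure in turn: first the bicategorical structure of $\Games(\anchorofasynch)$ itself, then its $\ast$-autonomous symmetric monoidal structure, and finally the cartesian structure for the additives, relying throughout on the machinery of Gray comonoids and bicomodules already set up, together with the coreflexive-equalizer hypothesis on $(\TwoCat,\graytensor)$ mentioned in the introduction.

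For the bicategorical structure, I would follow Aguiar's construction of the bicategory of comonoids and bicomodules in a monoidal category with (suitably preserved) coreflexive equalizers. Given two composable asynchronous strategies $\sigma:(A,\lambda_A)\spanmap (B,\lambda_B)$ and $\tau:(B,\lambda_B)\spanmap (C,\lambda_C)$, the composite $\tau\circ\sigma$ is defined as the cotensor product $S\graytensor_B T$, computed as a coreflexive equalizer in $\TwoCat$ of the right $B$-coaction on $S$ versus the left $B$-coaction on $T$; the polarity 2-functor $\lambda_{\tau\circ\sigma}$ into $\anchorofstrat$ is induced from the interaction of $\lambda_\sigma$ and $\lambda_\tau$ via $\mathsf{coact}_{\moo}$, exactly as required by diagram~(\ref{equation/strategy-polarity}). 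The identity strategy on $(A,\lambda_A)$ is $A$ itself viewed as an $A,A$-bicomodule via $d_A$, with polarity induced by the copycat scheduling already visible in the zig-zag diagram of the Gray comonoid structure on $\twocatanchorof{\polarityminus,\polarityplus}$. Associativity and unitality of composition hold up to coherent 2-isomorphism by general abstract nonsense on cotensor products of bicomodules, and the commuting squares involving polarity data follow from the fact that $\lambda_A$ and $\lambda_\sigma$ are themselves comonoid homomorphisms.

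For the $\ast$-autonomous structure, I would define the tensor product $(A,\lambda_A)\otimes(B,\lambda_B)$ to have support $A\graytensor B$ (which is a Gray comonoid componentwise) with labelling obtained by composing $\lambda_A\graytensor\lambda_B$ with the comonoid multiplication $\mathsf{mult}:\anchorofgames\graytensor\anchorofgames\to\anchorofgames$ coming from~(\ref{equation/monoid-on-anchorof-category}) lifted to $\TwoCat$. The involutive duality $(A,\lambda_A)^\perp$ has the same support $A$ but composes $\lambda_A$ with the involution on $\twocatanchorof{\polarityminus,\polarityplus}$ that swaps $\polarityminus$ and $\polarityplus$. The linear implication $A\multimap B$ is then $A^\perp\otimes B$, and one checks the $\ast$-autonomous adjunction
\[
\Games(\anchorofasynch)(A\otimes B, C) \;\simeq\; \Games(\anchorofasynch)(A, B\multimap C)
\]
by exhibiting the natural equivalence of bicomodule categories, which reduces to the symmetry and closedness of $(\TwoCat,\graytensor)$ combined with the fact that swapping polarities $s\leftrightarrow t$ in $\anchorofstrat$ corresponds to the duality. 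The unit $\mathbf{1}$ is the terminal 2-category with trivial polarity labelling, which is dualising by direct inspection.

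For the cartesian structure modelling additives, I would build the product $(A,\lambda_A)\,\&\,(B,\lambda_B)$ by gluing the two Gray comonoid supports along their initial Opponent position, in the spirit of the standard game-semantic $\&$: concretely, as a pushout in the category of pointed Gray comonoids of the two inclusions of the initial position, with polarity labelling induced by the universal property and the fact that both $\lambda_A$ and $\lambda_B$ agree at the base. One then verifies that the span of projections $A\,\&\,B\to A$, $A\,\&\,B\to B$ yields a 2-categorical product in $\Games(\anchorofasynch)$; the main point is that a strategy into $A\,\&\,B$ must, by its left Opponent coaction, commit to playing into either the $A$-summand or the $B$-summand after the first move. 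I expect the main obstacle to be precisely this last verification: checking that the cartesian universal property holds up to coherent 2-iso in the bicategory $\Games(\anchorofasynch)$ requires arguing that bicomodule morphisms into the pushout factor uniquely (up to iso) through one of the summands, which hinges on careful use of the polarity constraint diagram~(\ref{equation/strategy-polarity}) and on the interaction between coreflexive equalizers and the pushout defining $\&$. Once these three pieces are in place, the verification of the MALL axioms — distributivity of $\otimes$ over $\&$, de Morgan duality with $\oplus:=(\&^\perp)$, and the usual coherence — is routine by the standard categorical dictionary.
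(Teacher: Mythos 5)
Your construction of the underlying bicategory (composition of strategies as the coreflexive equalizer $S\graytensoreq{B}T$, copycat identities induced by $d_A$, polarities constrained as in diagram~(\ref{equation/strategy-polarity})) is exactly the paper's route via Aguiar-style bicomodules, and your tensor on objects (support $A\graytensor B$, relabelling through the polarity multiplication) also matches. But two steps of your multiplicative part have genuine gaps. First, you never define how $\otimes$ acts on strategies and simulations, and that is where the paper does the real work: the tensor is an \emph{internal functor} $\tensor:\anchorofasynch\graytensor\anchorofasynch\to\anchorofasynch$, i.e.\ a monad morphism in $\Comod{\TwoCat,\graytensor,\grayunit}$, and it is its component $\tensor[1]$ on the template of strategies $\anchorofasynch[1]=\anchorofstrat$ that turns the bicomodule $S\graytensor T$ into a strategy between the tensored games; without it there is no monoidal structure on the bicategory at all. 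Second, your duality is incorrect as stated: keeping the Gray comonoid $(A,d_A,e_A)$ unchanged and merely postcomposing $\lambda_A$ with the swap of $\polarityminus$ and $\polarityplus$ does not yield a comonoid homomorphism, because the comultiplication~(\ref{equation/Gray-comonoid-of-objects}) orients $\polarityplus$ left-to-right and $\polarityminus$ right-to-left (cf.\ Prop.~\ref{proposition/polarized-set}); the swap only becomes a comonoid map if one simultaneously replaces $d_A$ by $\mathsf{sym}\circ d_A$. This is precisely why the paper works in \S\ref{section/star-autonomous} with the opposite internal category and the internal functor $\mathsf{neg}:\anchorofasynch^{op}\to\anchorofasynch$: polarity reversal is implemented by permuting outputs along the symmetry of $\graytensor$, not by relabelling alone, and the closedness of $(\TwoCat,\graytensor)$ is not what carries the $\ast$-autonomy argument.

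For the additives your route is off the framework: template games carry no distinguished initial position and no notion of pointed Gray comonoid, so the pushout ``gluing along the initial Opponent position'' is not available, and the claim that a strategy into $A\,\&\,B$ must commit to one summand after its first move is not a consequence of the bicomodule axioms (a strategy is not rooted, and diagram~(\ref{equation/strategy-polarity}) constrains polarities, not which component is visited). In span- and comodule-style models the product is instead given by the coproduct of the supports, with comonoid structure and labelling induced componentwise, so that strategies into $A\,\&\,B$ decompose; with your pushout the universal property you yourself flag as the main obstacle should not be expected to hold. The paper does not spell out the cartesian structure in detail, so the substantive repairs needed in your proposal are the missing action of $\otimes$ on 1- and 2-cells via $\tensor[1]$, the corrected duality via $(-)^{op}$ and $\mathsf{neg}$, and a product built from coproducts rather than pointed pushouts.
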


\medbreak
On the practical side, the tensor product $A,B\mapsto A\tensor B$
of asynchronous template games is defined in $\Games{(\anchorofasynch)}$
using the Gray tensor product and thus avoids the defect
discussed in~(\ref{equation/permutation-tile-example}) and~(\ref{equation/cartesian-product}).
On the conceptual side, the construction extends in a very natural way
the original framework of template games~\cite{mellies-popl-2019,mellies-lics-2019}
to the more general situation where the template~$\anchor=\anchorofasynch$
defines an \emph{internal category}
in a monoidal category~$\Scategorysurround$
such as $\Scategorysurround=(\TwoCat,\graytensor,\grayunit)$
--- see Aguiar~\cite{aguiar-phd-1997} and \S\ref{section/weak-double-category-of-bicomodules},
\S\ref{section/asynchronous-template} for a definition ---
instead of an internal category in a category~$\Scategorysurround$ with limits such as $\Scategorysurround=\Cat$.

\section{Related works and synopsis}\label{section/related-works}
Besides the numerous connections already mentioned to the work on template games
by Melli{\`e}s~\cite{mellies-popl-2019,mellies-lics-2019},
we should mention the formal analysis of template games by Eberhard, Hirschowitz and Laouar~\cite{eberhard-hirschowitz-laouar-fscd-2019}
where a template $\moo$ is identified in full generality as a formal monad
$$\moo=(\moo[0],\moo[1],\mathsf{mult},\mathsf{unit})$$
living in a weak double category.
Our construction follows that track by defining in~\S\ref{section/asynchronous-template}
a template as a monad $\moo$
living in the weak double category of bicomodules $\Comod{\Scategorysurround}$
in a monoidal category~$(\Scategorysurround,\graytensor,\grayunit)$
with coreflexive equalizers preserved by the tensor product $\graytensor$ componentwise.
This is precisely the definition of internal category formulated by Aguiar in his seminal work on quantum algebras \cite{aguiar-phd-1997}.

The present work is part of a broader trend of research on game semantics and concurrency,
started with the definition of concurrent games by Abramsky and Melli{\`e}s~\cite{abramsky-mellies-lics-99}
and the subsequent series of works on asynchronous games~\cite{mellies:ag2-tcs,Mellies05ctcs,mellies-mimram-concur-2007}
defined as asynchronous graphs (or event structures) with transitions polarized in $\{\polarityminus,\polarityplus\}$.
The insight was adapted by Rideau and Winskel~\cite{rideau-winskel-lics-2011,CastellanClairambaultRideauWinskel17}
to the language of event structures, giving rise to very interesting developments
by Castellan, Clairambault and Winskel~\cite{castellan-clairambault-winskel-symmetry,CastellanClairambaultWinskel19}
on concurrent games based on event structures with symmetries.
%
%
The notion of asynchronous template game based on 2-categories (see Def.~\ref{definition/asynchronous-template-games})
offers an expressive and powerful generalization of traditional asynchronous games based on event structures,
and it would be interesting to understand whether the concurrent game model
with symmetries formulated in~\cite{castellan-clairambault-winskel-symmetry,CastellanClairambaultWinskel19}
can be accommodated in the language of asynchronous template games developed in the present paper.

Finally, we would like to mention the work by Eberhart and Hirschowitz~\cite{EberhartHirschowitz18}
on a general theory of game semantics based on polynomial functors between presheaf or sheaf categories.
We would be interested to see how their structural approach 
to game semantics could be adapted to the 2-categorical 
and monoidal framework of asynchronous template games developed in the present paper.



\medbreak
\noindent
\emph{\textbf{Synopsis of the paper:}}
After a long and detailed overview of the paper
which we found clarifying and mandatory in~\S\ref{section/introduction}
and a comparison with related works in~\S\ref{section/related-works},
we start the technical part of the paper with a precise
and formal definition of asynchronous graphs
in~\S\ref{section/asynchronous-graphs}.
We then recall the notion of Gray tensor product in~\S\ref{section/gray-tensor-product}
and establish the important property that $\graytensor$ preserves 
the coreflexive equalizers of $\TwoCat$ componentwise.
We then give an explicit description in~\S\ref{section/the-translation}
of the translation $(G,\diamond)\mapsto\asynchtwocat{G,\diamond}$
from asynchronous graphs to 2-categories.
Using the preservation of coreflexive equalizers established in~\S\ref{section/gray-tensor-product},
we construct in~\S\ref{section/weak-double-category-of-bicomodules}
the weak double category of bicomodules $\Comod{\Scategorysurround}$
in the symmetric monoidal category $\Scategorysurround=(\TwoCat,\graytensor,\grayunit)$.
We finally define in \S\ref{section/asynchronous-template}
the template $\anchor=\anchorofasynch$ of asynchronous games,
and establish in~\S\ref{section/star-autonomous}
that the resulting bicategory $\Games{}(\anchorofasynch)$ 
is symmetric monoidal closed, and in fact $\ast$-autonomous.
As a cartesian and co-cartesian $\ast$-autonomous bicategory,
it defines a 2-categorical model of the multiplicative additive fragment (MALL) of linear logic,
where formula are interpreted as asynchronous template games
and proofs as strategies between them.
We conclude and indicate future directions in \S\ref{section/conclusion}.

\section{Asynchronous graphs}\label{section/asynchronous-graphs}
We start by describing in \S\ref{section/asynchronous-graph/definition}
the specific notion of \emph{asynchronous graph} which we shall use in the paper.
This leads us to define in \S\ref{section/asynchronous-graph/category}
the category $\Asynch$ of asynchronous graphs and of morphisms between them.
Every asynchronous graph $(G,\diamond)$ may be seen as a presentation 
by generators and relations of a specific 2-category noted $\asynchtwocat{G,\diamond}$.
%
We construct a symmetric monoidal category of asynchronous graphs.

\subsection{Asynchronous graphs}\label{section/asynchronous-graph/definition}
A \emph{square} in a graph $G$ is defined as a pair $(p,q)$ of paths $p,q:x\transitionpath y$ 
of length 2, with the same source $x$ and the same target $y$.
An \emph{asynchronous graph} $(G,\diamond)$ is a graph~$G$ 
equipped with a set $\diamond$ of squares, satisfying a number
of additional properties described below.
We use the notation $p\diamond q$ when the square $(p,q)$ is an element of $\diamond$
and say in that case that the square $(p,q)$ defines a \emph{permutation tile.}
A permutation tile $u_1\cdot u_2 \diamond v_1 \cdot v_2$
between the paths $p=u_1\cdot u_2$ and $q=v_1 \cdot v_2$
is depicted in the following way, as a 2-dimensional surface or tile 
between the paths $p$ and $q$:
\begin{equation}\label{equation/permutation-tile-bicore}
\raisebox{-3.8em}{\includegraphics[height=7.6em]{figure2.png}}
\end{equation}
An asynchronous graph $(G,\diamond)$ is required to satisfy the following properties:
\medbreak

\noindent
1. \textbf{every permutation tile is symmetric:} 
for all paths $p,q:x\transitionpath y$ of length 2
with same source $x$ and same target $y$,
$p\diamond q$ implies that $q\diamond p$,

\medbreak

\noindent
2. \textbf{every permutation tile is deterministic:} 
for all paths $p,q,q':x\transitionpath y$
of length 2 with same source $x$ and same target $y$,
$p\diamond q$ and $p\diamond q'$ implies that $q=q'$,

\medbreak

\noindent
3. \textbf{the cube property:}
for all pairs $p,q:x\transitionpath y$ of paths of length 3
with same source $x$ and same target $y$, noted $p=u_1\cdot u_2\cdot u_3$
and $q=v_1\cdot v_2\cdot v_3$, there are edges $w_3$, $u'_2$, $v'_2$
and a sequence of permutation tiles
$$
u_2\cdot u_3 \diamond u'_2\cdot w_3
\quad\quad
u_1\cdot u'_2\diamond v_1\cdot v'_2
\quad\quad
v'_2\cdot w_3\diamond v_2\cdot v_3
$$
if and only if there are edges $w_1$, $u''_2$, $v''_2$
and a sequence of permutation tiles
$$
u_1\cdot u_2 \diamond w_1\cdot u''_2
\quad\quad
u''_2\cdot u_3 \diamond v''_2\cdot v_3
\quad\quad
w_1\cdot v''_2\diamond v_1\cdot v_2.
$$
The cube property is nicely described by the following picture:
\begin{equation}\label{equation/cube}
\vspace{-1em}
\begin{tabular}{c}
\raisebox{-3em}{\includegraphics[height=6.5em]{figure9a.png}}
\, $\iff$ \,
\raisebox{-3em}{\includegraphics[height=6.5em]{figure10a.png}}
\end{tabular}
\end{equation}
\medbreak

\subsection{The category of asynchronous graphs}\label{section/asynchronous-graph/category}
In this section, we define the category $\Asynch$ of asynchronous graphs
and establish that the category has all finite limits.
%
An \emph{asynchronous homomorphism} between asynchronous graphs is defined as
\medbreak
\begin{definition}
An \emph{asynchronous graph homomorphism}, or \emph{asynchronous homomorphism},
\begin{equation}\label{equation/asynchronous-graph-homomorphism}
  f \quad : \quad (G,\diamond_G) \,\, \longrightarrow \,\, (H,\diamond_H)
\end{equation}
is a graph homomorphism $f: G\to H$ between the underlying graphs,
such that 
$$
u_1\cdot u_2 \, \diamond_G \, v_1\cdot v_2
\quad \Rightarrow \quad
f(u_1)\cdot f(u_2) \, \diamond_H \, f(v_1)\cdot f(v_2)
$$
for every pair of paths $p=u_1\cdot u_2$ and $q=v_1\cdot v_2$ 
of length~2 with same source and target.
\end{definition}

\medbreak

\noindent
The category $\Asynch$ is defined in the following way:
its objects are the asynchronous graphs and its morphisms
are the asynchronous homomorphisms between them.
%
%

\subsection{Finite limits of asynchronous graphs}\label{section/finite-limits-of-Asynch}
We establish now that the category $\Asynch$ has all finite limits.
We proceed in two steps.
A preliminary observation is that the category $\Asynch$ is cartesian,
with cartesian product defined as
$$(G,\diamond_G)\times(H,\diamond_H)
\quad = \quad (G\times H, \diamond_{G\times H})$$
where $G\times H$ denotes the cartesian product of the underlying graphs $G$ and $H$,
and where the set of permutation tiles $\diamond_{G\times H}=\diamond_{G}\times\diamond_{H}$ 
is defined as expected: every pair of paths 
$$p,q:(x,x')\transitionpath (y,y')$$
of length 2 in the graph $G\times H$ can be written as
$$
\begin{tikzcd}[column sep = 2em, row sep = -.2em]
p \,=\, (x,x')
\arrow[rr,"{(u_1,u'_1)}"]
&&
(z_1,z_1')
\arrow[rr,"{(u_2,u'_2)}"]
&&
(y,y')
\\
q \,=\, (x,x')
\arrow[rr,"{(v_1,v'_1)}"]
&&
(z_2,z_2')
\arrow[rr,"{(v_2,v'_2)}"]
&&
(y,y')
\end{tikzcd}
$$
One thus declares that the two paths $p$ and $q$ above 
define a permutation tile $p\diamond_{G\times H} q$
precisely when their projections define permutation tiles 
in $(G,\diamond_G)$ and in $(H,\diamond_H)$:
$$
u_1\cdot u_2 \diamond_{G} v_1\cdot v_2
\quad
\mbox{and}
\quad
u'_1\cdot u'_2 \diamond_{H} v'_1\cdot v'_2.
$$
In other words,
$$
p\diamond_{G\times H} q
\quad \iff \quad
\pi_1(p)\diamond_{G} \pi_1(q)
\,\,
\mbox{and}
\,\,
\pi_2(p)\diamond_{H} \pi_2(q)
$$
where $\pi_1:G\times H\to G$ and $\pi_2:G\times H\to H$ 
denote the two graph homomorphisms defined by projection.
The terminal object of $\Asynch$ 
is the asynchronous graph $\moo[\tic]$
with one single vertex $\ast$, one single edge $\tic:\ast\to\ast$
and a permutation tile $\tic\cdot\tic\diamond\tic\cdot\tic$
which permutes the edge~$\tic$ with itself.

\medbreak

We then establish in the Appendix~\S\ref{appendix/finite-limits}
the less obvious property that the category $\Asynch$ has equalizers.
From this follows that:
\medbreak
\begin{proposition}\label{proposition/finite-limits-in-Asynch}
The category $\Asynch$ has finite products and equalizers,
and thus has all finite limits.
\end{proposition}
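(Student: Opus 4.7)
The plan is to leverage the explicit constructions of binary products and of the terminal object $\moo[\tic]$ already given in \S\ref{section/finite-limits-of-Asynch}, and combine them with the construction of equalizers. Since the standard argument shows that any category with binary products, a terminal object, and equalizers has all finite limits, the remaining task is to construct equalizers of parallel pairs in $\Asynch$.

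Given asynchronous homomorphisms $f,g:(G,\diamond_G)\rightrightarrows(H,\diamond_H)$, I would take as candidate the equalizer $E\hookrightarrow G$ computed in the category of graphs, whose vertices and edges are exactly those $x,u$ in $G$ with $f(x)=g(x)$ and $f(u)=g(u)$. I then equip $E$ with the inherited set of permutation tiles: $p\diamond_E q$ holds precisely when both paths $p,q$ live in $E$ and $p\diamond_G q$ holds in $G$. The inclusion $i:E\hookrightarrow G$ is tautologically an asynchronous homomorphism, and satisfies $f\circ i=g\circ i$ by construction.

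The substantive step, which I expect to be the main obstacle, is verifying that $(E,\diamond_E)$ actually satisfies the three axioms of an asynchronous graph. Symmetry and determinism of permutation tiles (axioms 1 and 2) are immediate restrictions of the corresponding properties in $G$. For the cube property (axiom 3), suppose the ``left'' direction of a cube holds in $E$, witnessed by edges $w_3, u'_2, v'_2\in E$ and three tiles in $\diamond_E$. Applying the inclusion, the same data witnesses the left direction in $G$, so the cube property in $G$ produces edges $w_1, u''_2, v''_2\in G$ and three tiles in $\diamond_G$ for the right direction. To conclude that these new edges lie in $E$, I apply $f$ and $g$ separately to the right-direction cube, obtaining two cubes in $H$ that share the same boundary (because $f$ and $g$ agree on all edges of the left direction, which lie in $E$) and complete the same image cube. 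By iterated use of the determinism axiom of $\diamond_H$, the two completions in $H$ must coincide tile by tile, yielding $f(w_1)=g(w_1)$, $f(u''_2)=g(u''_2)$, and $f(v''_2)=g(v''_2)$; hence $w_1,u''_2,v''_2\in E$ and the three tiles lie in $\diamond_E$. The reverse implication of the cube property is handled symmetrically.

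Finally, for the universal property, given any asynchronous homomorphism $h:(K,\diamond_K)\to(G,\diamond_G)$ with $f\circ h=g\circ h$, the unique graph-theoretic factorization $\bar h:K\to E$ exists by the universal property of the graph equalizer. It remains to check that $\bar h$ preserves permutation tiles, which is immediate: if $p\diamond_K q$ in $K$, then $h(p)\diamond_G h(q)$ in $G$, and since $h$ factors through $E$, both images are paths in $E$, so $\bar h(p)\diamond_E \bar h(q)$ by definition of $\diamond_E$. Combined with the earlier constructions, this shows $\Asynch$ has finite products and equalizers, and the standard categorical result completes the proof.
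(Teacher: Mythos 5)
Your proposal is correct and follows essentially the same route as the paper: the equalizer is the subgraph on which $f$ and $g$ agree, with inherited permutation tiles, and the only delicate point — that the cube property survives restriction — is settled exactly as in the paper, by applying $f$ and $g$ to each new tile and invoking determinism of permutations in $(H,\diamond_H)$ to force the newly produced edges into $E$ (the paper merely packages this as a standalone closure lemma, which you instead apply inline, tile by tile).
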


%

\section{The Gray tensor product}\label{section/gray-tensor-product}
For the sake of completeness, we provide 
in the Appendix (see \S\ref{section/gray-tensor})
a purely algebraic description by generators and relations of the Gray tensor product 
$\Atwocategory\graytensor\Btwocategory$ of a pair of 2-categories $\Atwocategory$ and $\Btwocategory$.
The definition is somewhat involved however,
and we thus find more convenient to describe below
a characterization of the Gray tensor product $\Acategory\graytensor\Bcategory$
of two small 2-categories $\Acategory$, $\Bcategory$ 
adapted from the work by Bourke and Gurski~\cite{bourke-gurski-acs-2017}.
Using this specific formulation, we establish the main result of the section,
which states that the Gray tensor product of 2-categories
$\Atwocategory,\Btwocategory\mapsto\Atwocategory\graytensor\Btwocategory$
preserves coreflexive equalizers componentwise.

\subsection{A concise characterization of the Gray tensor product}\label{section/gray-tensor/alternative-definition}
The Gray tensor product
$\Atwocategory,\Btwocategory\mapsto\Atwocategory\graytensor\Btwocategory$
of 2-categories has the remarkable property that its unit
coincides with the terminal 2-category $\mathbf{1}$.
%
From this follows that the Gray tensor product 
$\Atwocategory\graytensor\Btwocategory$
of a pair of 2-categories $\Atwocategory$ and $\Btwocategory$
comes equipped with a pair of ``projection'' 2-functors
$$
\begin{tikzcd}[column sep = 2em, row sep=2em]
{\Acategory\cong \Acategory\graytensor\mathbf{1}}
&&
\Acategory\graytensor\Bcategory
\arrow[ll,"{\pi_1}"{swap}]
\arrow[rr,"{\pi_2}"]
&&
{\mathbf{1}\graytensor\Bcategory\cong \Bcategory}
\end{tikzcd}
$$
which induce in turn a 2-functor
\begin{equation}\label{equation/the-two-functor-q}
\begin{tikzcd}[column sep = 2em, row sep=2em]
q_{\Acategory,\Bcategory}
\quad : \quad
\Acategory\graytensor\Bcategory
\arrow[rr,"{}"]
&&
\Acategory\times\Bcategory
\end{tikzcd}
\end{equation}
The key observation at this stage is that the definition 
of Gray tensor product is carefully carved to ensure
that this 2-functor is \emph{locally fully faithful}.
This means that for every pair of 1-cells
$$
\begin{tikzcd}[column sep = 2em, row sep=2em]
(A,B)
\arrow[rr,yshift=.2em,"{f}"]
\arrow[rr,yshift=-.2em,"{g}"{swap}]
&&
(A',B')
\end{tikzcd}
$$
in the Gray tensor product $\Acategory\graytensor\Bcategory$,
there is a one-to-one correspondence between the hom-sets
$$
\begin{array}{cc}
& \Acategory\graytensor\Bcategory((A,B),(A',B'))(f,g)
\\
\cong & \Acategory(A,A')(f_1,g_1)
\times
\Acategory(B,B')(f_2,g_2)
\end{array}
$$
where the morphisms $f_1$, $g_1$ of $\Acategory$
and $f_2$, $g_2$ of $\Bcategory$ are defined by projection:
\begin{equation}\label{equation/notations-f1-f2-g1-g2}
\begin{tikzcd}[column sep = 2em, row sep=2em]
A
\arrow[rr,yshift=.2em,"{f_1=\pi_1(f)}"]
\arrow[rr,yshift=-.2em,"{g_1=\pi_1(g)}"{swap}]
&&
A'
\end{tikzcd}
\quad\quad
\begin{tikzcd}[column sep = 2em, row sep=2em]
B
\arrow[rr,yshift=.2em,"{f_2=\pi_2(f)}"]
\arrow[rr,yshift=-.2em,"{g_2=\pi_2(g)}"{swap}]
&&
B'
\end{tikzcd}
\end{equation}
In other words, a 2-cell 
in the 2-category $\Atwocategory\graytensor\Btwocategory$ 
of the form
$$
\begin{tikzcd}[column sep = .8em, row sep=2em]
\theta \quad :\quad
f\arrow[rr,double,-implies]
&&
g
\quad
:
\quad
(A,B)
\arrow[rr]
&&
(A',B')
\end{tikzcd}
$$
may be equivalently defined as a pair 
$$
\begin{array}{c}
\begin{tikzcd}[column sep = .8em, row sep=2em]
\theta_1 
\,\,
:
\,\,
f_1\arrow[rr,double,-implies]
&&
g_1
\,\,
:
\,\,
A
\arrow[rr]
&&
A'
\end{tikzcd}
\\
\begin{tikzcd}[column sep = .8em, row sep=2em]
\theta_2 
\,\,
:
\,\,
f_2\arrow[rr,double,-implies]
&&
g_2
\,\,
:
\,\,
B
\arrow[rr]
&&
B'
\end{tikzcd}
\end{array}
$$
using the notations $f_1, f_2,g_1,g_2$ given in (\ref{equation/notations-f1-f2-g1-g2}).
%
%
It is possible to define the Gray tensor product $\Acategory\graytensor\Bcategory$ 
directly from there, along an idea developed by Bourke and Gurski in \cite{bourke-gurski-acs-2017}.
Every small 2-category $\Atwocategory$ comes equipped 
with an underlying category of objects and morphisms noted $\underlyingcat{\Atwocategory}$
and with an underlying set of objects noted $\underlyingset{\Atwocategory}$.
The category $\underlyingcat{\Atwocategory}$ may be seen 
as a \emph{locally discrete} 2-category,
while the set $\underlyingset{\Atwocategory}$ may be seen 
as a discrete 2-category.
As such, they come equipped with a pair of canonical 2-functors
$$
\begin{tikzcd}[column sep = 1.8em]
{\underlyingset{\Acategory}} \arrow[rr,"inj"]
&&
{\underlyingcat{\Acategory}} \arrow[rr,"inj"]
&&
\Acategory
\end{tikzcd}
$$
From this follows that there exists a 2-functor
\begin{equation}\label{equation/composite-two-functor}
\begin{tikzcd}[column sep = 1.6em]
{\underlyingcat{\Atwocategory}\funnytensor\underlyingcat{\Btwocategory}}
\arrow[r]
&
{\underlyingcat{\Atwocategory}\times\underlyingcat{\Btwocategory}}
\arrow[rrr,"{inj\times inj}"]
&&&
{\Atwocategory\times\Btwocategory}
\end{tikzcd}
\end{equation}
where the funny tensor product $\underlyingcat{\Atwocategory}\funnytensor\underlyingcat{\Btwocategory}$
of the categories $\underlyingcat{\Acategory}$ and $\underlyingcat{\Bcategory}$
is defined as the pushout of the diagram below:
$$
\begin{small}
\begin{tikzcd}[column sep = .8em, row sep=.8em]
{\underlyingset{\Atwocategory}\times \underlyingset{\Btwocategory}}
\arrow[rr,"{\underlyingcat{\Atwocategory}\times inj}"]
\arrow[dd,"{inj\times\underlyingcat{\Btwocategory}}"{swap}]
&&
{\underlyingset{\Atwocategory}\times \underlyingcat{\Btwocategory}}
\arrow[dd]
\\
& pushout
\\
{\underlyingcat{\Atwocategory}\times \underlyingset{\Btwocategory}}
\arrow[rr]
&&
{\underlyingcat{\Atwocategory}\funnytensor \underlyingcat{\Btwocategory}}
\end{tikzcd}
\end{small}
$$
computed in the category $\Cat$, see \cite{bourke-gurski-acs-2017} for details.
The Gray tensor product ${\Atwocategory\graytensor\Btwocategory}$
is then characterized (or defined) as the unique 2-category such that the pair of 2-functors 
$$
\begin{tikzcd}[column sep = 1.8em]
{\underlyingcat{\Atwocategory}\funnytensor\underlyingcat{\Btwocategory}}
\arrow[rr,"{(a)}"]
&&
{\Atwocategory\graytensor\Btwocategory}
\arrow[rr,"{(b)}"]
&&
{\Atwocategory\times\Btwocategory}
\end{tikzcd}
$$
factors the composite 2-functor~(\ref{equation/composite-two-functor})
in such a way that the 2-functor $(a)$ 
is an identity-on-objects and identity-on-morphisms 2-functor
from ${\underlyingcat{\Atwocategory}\funnytensor\underlyingcat{\Btwocategory}}$
to ${\Atwocategory\graytensor\Btwocategory}$
and the 2-functor $(b)$ 
is a locally fully faithful 2-functor from ${\Atwocategory\graytensor\Btwocategory}$ to 
${\Atwocategory\times\Btwocategory}$.
Note that the 2-functor $(b)$ coincides with the 2-functor $q_{\Acategory,\Bcategory}$
mentioned in~(\ref{equation/the-two-functor-q}).

\subsection{Gray tensor product preserves coreflexive equalizers}\label{section/gray-preserves-equalizers}
A pair of 2-functors 
$$
\begin{tikzcd}[column sep = 2em]
\Atwocategory
\arrow[rr,"{F}",yshift=.2em]
\arrow[rr,"{G}"{swap},yshift=-.2em]
&&
\Btwocategory
\end{tikzcd}
$$
is called coreflexive when there exists a 2-functor $S:\Btwocategory\to\Acategory$
such that $S\circ F=S\circ G=\id{\Acategory}$.
An equalizer~$\Ecategory$ of a coreflexive pair of 2-functors $F,G:\Acategory\to\Bcategory$
in the category $\TwoCat$ is called a coreflexive equalizer.
Suppose that $\Ecategory$ is an equalizer of a coreflexive pair,
as shown in the diagram below:
$$
\begin{tikzcd}[column sep = 2em]
\Etwocategory
\arrow[rr,dashed,"{E}"]
&&
\Atwocategory
\arrow[rr,"{F}",yshift=.2em]
\arrow[rr,"{G}"{swap},yshift=-.2em]
&&
\Btwocategory
\arrow[ll,"{S}"{swap},yshift=-.2em,bend right=40]
\end{tikzcd}
$$
We establish that 
\medbreak
\begin{proposition}\label{proposition/preservation-of-equalizers}
The Gray tensor product preserves coreflexive equalizers
in the sense that for every 2-category $\Ctwocategory$, the diagram
\vspace{-1em}
\begin{equation}\label{equation/equalizer-between-gray-tensor-products}
\begin{tikzcd}[column sep = 2em]
\Ctwocategory\graytensor\Etwocategory
\arrow[rr,dashed,"{\Ccategory\graytensor E}"]
&&
\Ctwocategory\graytensor\Atwocategory
\arrow[rr,"{\Ctwocategory\graytensor F}",yshift=.2em]
\arrow[rr,"{\Ctwocategory\graytensor G}"{swap},yshift=-.2em]
&&
\Ctwocategory\graytensor\Btwocategory
\arrow[ll,"{\Ctwocategory\graytensor H}"{swap},yshift=-.2em,bend right=40]
\end{tikzcd}
\end{equation}
exhibits the 2-category $\Ctwocategory\graytensor\Ecategory$ 
as a coreflexive equalizer 
of the 2-functors $\Ctwocategory\graytensor F, \Ccategory\graytensor G:\Ccategory\graytensor\Acategory\to\Ccategory\graytensor\Bcategory$.
\end{proposition}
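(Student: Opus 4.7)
The plan is to verify the equalizer property directly by checking the universal property at each of the three levels of $\TwoCat$, using the characterization of the Gray tensor product recalled in~\S\ref{section/gray-tensor/alternative-definition}. Given a 2-functor $K : \Dcategory \to \Ctwocategory\graytensor\Atwocategory$ with $(\Ctwocategory\graytensor F)\circ K = (\Ctwocategory\graytensor G)\circ K$, my goal is to construct a unique factorization $\tilde K : \Dcategory \to \Ctwocategory\graytensor\Etwocategory$ with $(\Ctwocategory\graytensor E)\circ\tilde K = K$. I would first record that a coreflexive equalizer $\Etwocategory$ in $\TwoCat$ is computed levelwise: its objects, 1-cells and 2-cells are, respectively, those objects, 1-cells and 2-cells of $\Atwocategory$ on which $F$ and $G$ agree, and the common retraction $S$ guarantees that the inclusion $\Etwocategory\hookrightarrow\Atwocategory$ is a split subobject at each of these three levels.

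On objects, the identification $\mathrm{Ob}(\Ctwocategory\graytensor\Atwocategory)=\mathrm{Ob}(\Ctwocategory)\times\mathrm{Ob}(\Atwocategory)$ together with the fact that $\Ctwocategory\graytensor F$ and $\Ctwocategory\graytensor G$ act as the identity on the first coordinate forces the image of $K$ on objects to lie inside $\mathrm{Ob}(\Ctwocategory)\times\mathrm{Ob}(\Etwocategory)$; this determines the object action of $\tilde K$ uniquely. For 1-cells, I would exploit the pushout presentation of the funny tensor product recalled in~(\ref{equation/composite-two-functor}): a 1-cell in $\underlyingcat{\Ctwocategory}\funnytensor\underlyingcat{\Atwocategory}$ is a composite built from generators of the form $(u,a):(c,a)\to(c',a)$ for $u:c\to c'$ in $\Ctwocategory$ and $(c,v):(c,a)\to(c,a')$ for $v:a\to a'$ in $\Atwocategory$, subject only to the functoriality relations inherited from each factor separately. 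The 2-functors $\Ctwocategory\graytensor F$ and $\Ctwocategory\graytensor G$ act as the identity on generators of the first kind, and by $F$, resp.~$G$, on generators of the second kind. The equalizing condition therefore forces the image of $K$ on 1-cells to use only $\Atwocategory$-edges equalized by $F$ and $G$, together with intermediate objects equalized by $F$ and $G$; both are precisely the ones belonging to $\underlyingcat{\Etwocategory}$, so $\tilde K$ has a unique 1-cell action factoring through $\underlyingcat{\Ctwocategory}\funnytensor\underlyingcat{\Etwocategory}$.

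For 2-cells, I would invoke the local full faithfulness of the 2-functor $q_{\Ctwocategory,\Atwocategory}$ of~(\ref{equation/the-two-functor-q}) to identify every 2-cell $\theta:f\Rightarrow g$ of $\Ctwocategory\graytensor\Atwocategory$ with a pair $(\theta_1,\theta_2)$ of 2-cells in $\Ctwocategory$ and $\Atwocategory$. The action of $\Ctwocategory\graytensor F$ on such a pair is $(\theta_1,\theta_2)\mapsto(\theta_1,F(\theta_2))$, so the equalizing condition forces $\theta_2$ to be a 2-cell equalized by $F$ and $G$, hence to be a 2-cell of $\Etwocategory$. This gives the unique 2-cell action of $\tilde K$, and assembling the three levelwise constructions yields the desired factorization.

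The main technical obstacle is the 1-cell step, where the pushout defining the funny tensor product must be shown to interact correctly with the equalizer; coreflexivity enters precisely here, guaranteeing through the common retraction $S$ that the inclusion $\underlyingcat{\Etwocategory}\hookrightarrow\underlyingcat{\Atwocategory}$ is a split monomorphism of categories, hence preserved as a monomorphism by the pushout construction $\underlyingcat{\Ctwocategory}\funnytensor -$. This makes the identification of $\underlyingcat{\Ctwocategory}\funnytensor\underlyingcat{\Etwocategory}$ as a subcategory of $\underlyingcat{\Ctwocategory}\funnytensor\underlyingcat{\Atwocategory}$ rigorous, and closes the argument.
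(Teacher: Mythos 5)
Your overall route is essentially the one the paper takes: split the verification along the Bourke--Gurski factorisation, handling objects and 1-cells through the funny tensor product $\underlyingcat{\Ctwocategory}\funnytensor\underlyingcat{\Atwocategory}$ and 2-cells through the locally fully faithful comparison $q_{\Ctwocategory,\Atwocategory}:\Ctwocategory\graytensor\Atwocategory\to\Ctwocategory\times\Atwocategory$. The object step and the 2-cell step are fine. The gap is in the 1-cell step, which is precisely the non-trivial content of the proposition (the paper's ``simple combinatorial exercise'' that $\Dcategory\funnytensor-$ preserves coreflexive equalizers). You assert that the equalizing condition forces every 1-cell in the image of $K$ to be built from generators $(c,v)$ with $v$ a morphism of $\Etwocategory$ and from intermediate objects lying in $\Etwocategory$, but a 1-cell of the funny tensor product is an \emph{equivalence class} of words, and equality of the images under $\underlyingcat{\Ctwocategory}\funnytensor\underlyingcat{F}$ and $\underlyingcat{\Ctwocategory}\funnytensor\underlyingcat{G}$ is an equality of reduced words in $\underlyingcat{\Ctwocategory}\funnytensor\underlyingcat{\Btwocategory}$; one must argue that this forces letterwise equality, which is exactly where coreflexivity is needed and exactly what you do not prove.

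The mechanism you offer in its place does not work. First, the claim that the common retraction $S$ makes the inclusion $\Etwocategory\hookrightarrow\Atwocategory$ (or $\underlyingcat{\Etwocategory}\hookrightarrow\underlyingcat{\Atwocategory}$) a \emph{split} monomorphism is false in general: take $\Atwocategory$ the free category on one arrow $f:x\to y$, $\Btwocategory$ the category with two parallel arrows $g_1,g_2:x\to y$, $F(f)=g_1$, $G(f)=g_2$, and $S(g_1)=S(g_2)=f$; then $SF=SG=\mathrm{id}$, the equalizer $\Etwocategory$ is the discrete category on $\{x,y\}$, and no retraction $\Atwocategory\to\Etwocategory$ exists. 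Second, even if the inclusion were split, knowing that $\underlyingcat{\Ctwocategory}\funnytensor\underlyingcat{\Etwocategory}\to\underlyingcat{\Ctwocategory}\funnytensor\underlyingcat{\Atwocategory}$ is a monomorphism only gives one direction; what the equalizer property requires is that every 1-cell equalized by the two induced functors actually \emph{factors} through $\underlyingcat{\Ctwocategory}\funnytensor\underlyingcat{\Etwocategory}$. The correct use of the retraction $S$ is different: $SF=SG=\mathrm{id}_{\Atwocategory}$ implies that $F$ and $G$ are injective on cells and \emph{reflect identities} (if $F(v)$ is an identity then so is $v=SF(v)$). Hence applying $\underlyingcat{\Ctwocategory}\funnytensor F$ and $\underlyingcat{\Ctwocategory}\funnytensor G$ to a reduced alternating word does not collapse any letter nor merge adjacent ones, so the two image words are reduced of the same shape, and their equality forces $F(v_i)=G(v_i)$ and $F(a_i)=G(a_i)$ letter by letter, which is what places the word in $\underlyingcat{\Ctwocategory}\funnytensor\underlyingcat{\Etwocategory}$. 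Without an argument of this kind (or an equivalent one), the central step of your proof is unsupported.
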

\medbreak
The fact is proved in three combined steps 
carefully described in the Appendix~\ref{appendix/proof-of-preservation-of-equalizers}.


\section{How every asynchronous graph can be seen as a 2-category, functorially}\label{section/the-translation}
We explain in this section how to define the functor
$$
\asynchtwocat{-} \,\,\, : \,\,\, \Asynch \,\,\, \longrightarrow \,\,\, \TwoCat
$$
which associates to every asynchronous graph $(G,\diamond)$
the 2-category $\asynchtwocat{G,\diamond}$ whose objects
and morphisms are the vertices and paths of the graph~$G$, 
and whose 2-cells are \emph{reshufflings} as defined in~\S\ref{section/reshuffling}.
%
%
%
We finally observe in \S\ref{section/shuffle-vs-gray}
that there exists a coherent family 
of isomorphisms between 2-categories
$$
\asynchtwocat{G\shuffletensor H,\diamond_{G\shuffletensor H}}
\cong
\asynchtwocat{G,\diamond_G}\graytensor\asynchtwocat{H,\diamond_H}
\quad\quad
\asynchtwocat{\mathbf{I}}
\cong
\mathbf{1}
$$
where $\mathbf{I}$ denotes the neutral element of the shuffle tensor product
of asynchronous graphs, and $\mathbf{1}$ denotes the terminal 2-category.
This provides a firm conceptual foundation to the intuition
that the Gray tensor product should be understood as
an shuffle (or asynchronous) tensor product of 2-categories.

%
%
\subsection{Reshufflings between paths of an asynchronous graph}\label{section/reshuffling}
Given an asynchronous graph $(G,\diamond)$,
a reshuffling 
$$\begin{tikzcd}[column sep = 1em]
\varphi \quad : \quad f\arrow[rr,-implies,double] && g \quad : \quad x\arrow[rr] && y
\end{tikzcd}
$$
between two paths $f$, $g$ of the graph~$G$
with same source~$x$, same target~$y$, and same length
$$
\begin{array}{c}
\begin{tikzcd}[column sep = 1.2em]
f \,\,\,\,\, = \,\,\,\,\,  x \arrow[rr,"{u_1}"] && z_1\arrow[rr,"{u_2}"] && \cdots \arrow[rr,"{u_{k-1}}"]
&& z_{k-1} \arrow[rr,"{u_k}"] && y 
\end{tikzcd}
\\
\begin{tikzcd}[column sep = 1.2em]
g \,\,\,\,\, = \,\,\,\,\, x \arrow[rr,"{v_1}"] && z'_1\arrow[rr,"{v_2}"] && \cdots \arrow[rr,"{v_{k-1}}"]
&& z'_{k-1} \arrow[rr,"{v_k}"] && y 
\end{tikzcd}
\end{array}
$$
is defined as a bijection 
\begin{equation}\label{equation/reshuffling}
\varphi \,\,\, : \,\,\, \{1,\dots,k\} \,\,\, \longrightarrow \,\,\, \{1,\dots,k\}
\end{equation}
tracking a sequence of permutation tiles in $\diamond$
transforming the path $f$ into the path $g$.
Typically, the permutation tile~(\ref{equation/permutation-tile-bicore})
is tracked
by the reshuffling $\varphi:\{1,2\}\mapsto \{1,2\}$ defined 
as $1\mapsto 2$ and $2\mapsto 1$
and represented below using the two arrows:
\begin{equation}\label{equation/from-permutation-tiles-to-reshufflings}
\raisebox{-3.8em}{\includegraphics[height=7.6em]{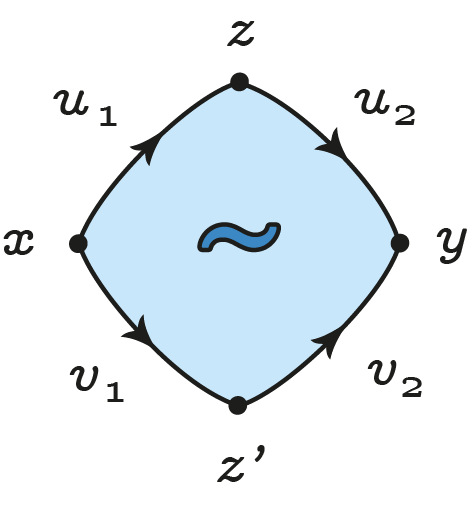}}
\quad \mapsto \quad
\raisebox{-3.8em}{\includegraphics[height=7.6em]{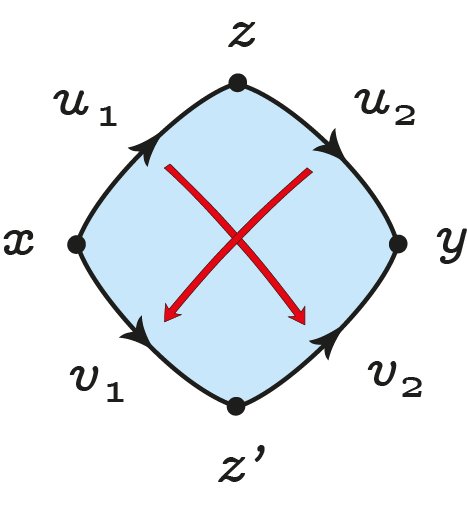}}
\end{equation}
Similarly, the sequence of three permutations below
is tracked by the reshuffling $\varphi:\{1,2,3\}\mapsto \{1,2,3\}$
defined as $1\mapsto 3$, $2\mapsto 3$, $3\mapsto 1$,
and represented using the three arrows:
\begin{equation}\label{equation/from-permutation-tiles-to-reshufflings-cube}
\raisebox{-3.6em}{\includegraphics[height=7.4em]{figure9a.png}}
\hspace{.5em} \mapsto \hspace{.5em} 
\raisebox{-3.6em}{\includegraphics[height=7.4em]{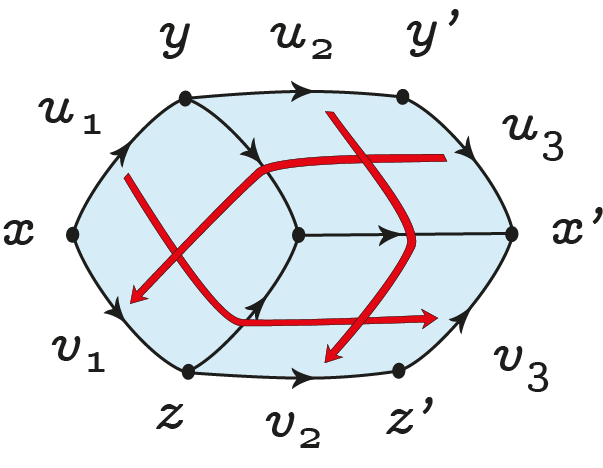}}
\end{equation}
One main benefit of using the notion of reshuffling is that 
it enables  to identify the two sequences of permutation tiles below 
as the very same reshuffling $\varphi:\{1,2,3\}\mapsto \{1,2,3\}$:
\begin{equation}\label{equation/cubical-tiles}
\begin{small}
\begin{array}{ccc}
\raisebox{-3.8em}{\includegraphics[height=8em]{figure9b.png}}
& \cong & 
\raisebox{-3.8em}{\includegraphics[height=8em]{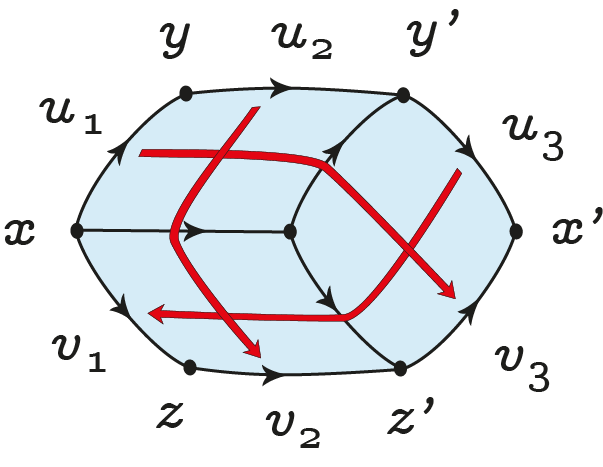}}
\end{array}
\end{small}
\end{equation}
Similarly, the sequence of two inverse permutation tiles
on the left-hand side is tracked by the identity reshuffling $\varphi:\{1,2\}\to\{1,2\}$
and thus identified to the empty sequence of permutation tiles
on the right-hand side:
\begin{equation}\label{equation/reverse-tiles}
\begin{small}
\begin{array}{ccc}
\raisebox{-3.8em}{\includegraphics[height=8em]{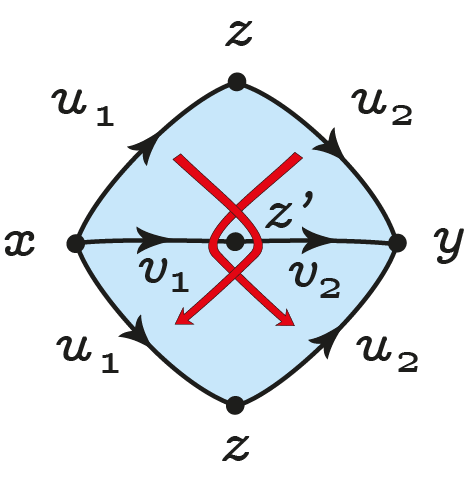}}
& \cong & 
\raisebox{-3.8em}{\includegraphics[height=8em]{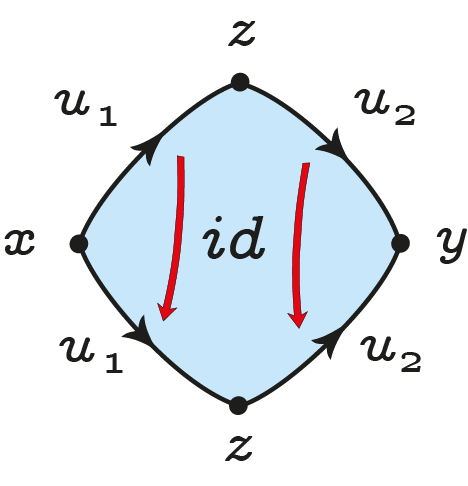}}
\end{array}
\end{small}
\end{equation}
Although the notion of a reshuffling~(\ref{equation/reshuffling}) tracking a sequence
of permutation tiles in the asynchronous graph~$(G,\diamond)$ should be intuitive
at this stage, we give a detailed and explicit definition in the Appendix,
see \S\ref{section/construction-of-the-two-category}.
%
\medbreak

\noindent
At a more conceptual level, the 2-category~$\asynchtwocat{G,\diamond}$
may be characterized as the 2-category freely generated 
by the vertices and edges of the graph $G$ at the dimensions~$0$ and~$1$,
by the family of 2-cells
$$\begin{tikzcd}[column sep = 1em]
\sigma_{f,g} \quad : \quad {f}\arrow[rr,-implies,double] && {g} \quad : \quad x\arrow[rr] && y
\end{tikzcd}
$$
indexed by set of permutation tiles $f \diamond g$ in~$(G,\diamond)$
at dimension~$2$, modulo the two families of equations between 2-cells depicted in the diagrammatic notations
of~(\ref{equation/cubical-tiles}) and~(\ref{equation/reverse-tiles}).
%


%

%

\subsection{The symmetric monoidal functor}\label{section/shuffle-vs-gray}
Once the functor $\asynchtwocat{-}$ explicated,
it is not difficult to establish that it defines in fact a symmetric monoidal functor
$$
\asynchtwocat{-} \,\,\,\,\, : \,\,\,\,\, {(\Asynch,\shuffletensor,\mathbf{I})}
\ \,\,\, \longrightarrow \,\,\, {(\TwoCat,\graytensor,\grayunit)}
$$
which preserves the shuffle tensor product up to isomorphisms
$$
\asynchtwocat{G\shuffletensor H,\diamond_{G\shuffletensor H}}
\cong
\asynchtwocat{G,\diamond_G}\graytensor\asynchtwocat{H,\diamond_H}
\quad\quad
\asynchtwocat{\mathbf{I}}
\cong
\mathbf{1}
$$
in the category $\TwoCat$.
The existence and coherence laws of the isomorphisms
may be deduced from the explicit definition in~\S\ref{section/reshuffling}
of the functor $\asynchtwocat{-}$ combined with the description
of the Gray tensor product in \S\ref{section/gray-tensor/alternative-definition}.
We also note that although the functor $\asynchtwocat{-}$
does not preserve cartesian products, it does preserve equalizers.

\section{The weak double category of bicomodules}\label{section/weak-double-category-of-bicomodules}
Given a monoidal category $(\Scategorysurround,\graytensor,\grayunit)$ 
with coreflexive equalizers,
we construct the weak double category $\Comod{\Scategorysurround}$ of comonoids
and bicomodules between them.
In order to perform the construction, we make
the assumption that the tensor product preserves coreflexive equalizers componentwise,
in the sense that for every object~$A$ of the category~$\Scategorysurround$,
the functors 
$$A\graytensor -:\Scategorysurround\to\Scategorysurround
\quad\quad
-\graytensor A:\Scategorysurround\to\Scategorysurround$$
preserve coreflexive equalizers.
%
We have in mind the example of the 2-category $\Scategorysurround=\TwoCat$ 
of small 2-categories, equipped with the Gray tensor product, 
see \S\ref{section/gray-preserves-equalizers}.
We define in \S\ref{section/vertical-maps} and \S\ref{section/horizontal-maps}
the vertical and horizontal maps of the weak double category $\Comod{\Scategorysurround}$
and then describe its double cells in \S\ref{section/double-cells}.

%
\subsection{Vertical maps in $\Comod{\Scategorysurround}$}\label{section/vertical-maps}
The vertical category of $\Comod{\Scategorysurround}$ is defined
as the category $\Comonoid{\Scategorysurround,\graytensor,\grayunit}$
of comonoids in the monoidal category $(\Scategorysurround,\graytensor,\grayunit)$.
Recall that a comonoid $(A,d,e)$ is a triple consisting of an object~$A$ of $\Scategorysurround$
and of a pair of morphisms
$$
\begin{tikzcd}[column sep = 1em]
d\,\, : \,\, A \arrow[rr] && {A\graytensor A}
\end{tikzcd}
\quad\quad\quad
\begin{tikzcd}[column sep = 1em]
e\,\, : \,\, A \arrow[rr] && {\grayunit}
\end{tikzcd}
$$
making the diagrams commute:
$$
\begin{small}
\begin{tikzcd}[column sep = 1.5em, row sep = 1.5em]
A\arrow[rr,"d"] \arrow[dd,"d"{swap}] 
&& {A\graytensor A}\arrow[dd,"{A\graytensor d}"]
\\
\\
{A\graytensor A}\arrow[rr,"{d\graytensor A}"]
&&
{A\graytensor A\graytensor A}
\end{tikzcd}
\quad\quad
\begin{tikzcd}[column sep = 1.5em, row sep = 1.5em]
&& A\arrow[dd,"d"] \arrow[rrdd,"{id_A}"] \arrow[lldd,"{id_A}"{swap}] 
&&
\\
\\
A
&&
{A\graytensor A}\arrow[rr,"{e\graytensor A}"]\arrow[ll,"{A\graytensor e}"{swap}]
&&
A
\end{tikzcd}
\end{small}
$$
A comonoid homomorphism
$$\begin{tikzcd}[column sep = 1em]
h\,\, : \,\, (A,d_A,e_A)\arrow[rr] && (B,d_B,e_B)
\end{tikzcd}
$$
is defined a morphism $h:A\to B$ making the two diagrams below commute:
$$\begin{small}
\begin{tikzcd}[column sep = 3em, row sep = 1em]
A\arrow[rr,"{h}"]\arrow[dd,"{d_A}"{swap}]
&&
B
\arrow[dd,"{d_B}"]
\\
\\
A\graytensor A\arrow[rr,"{h\graytensor h}"]
&&
{B\graytensor B}
\end{tikzcd}
\quad\quad
\begin{tikzcd}[column sep = 3em, row sep = 1em]
A\arrow[rr,"{h}"]\arrow[dd,"{e_A}"{swap}]
&&
B
\arrow[dd,"{e_B}"]
\\
\\
{\grayunit}\arrow[rr,"{id}"]
&&
{\grayunit}
\end{tikzcd}
\end{small}
$$
The category $\Comonoid{\Scategorysurround,\graytensor,\grayunit}$
is defined as the category whose objects are comonoids
and whose morphisms are comonoid homomorphisms.

\subsection{Horizontal maps in $\Comod{\Scategorysurround}$}\label{section/horizontal-maps}
We recall the following definition.
\begin{definition}
An $A,B$-comodule 
$$
\begin{tikzcd}[column sep = 1.5em, row sep=1em]
S \quad : \quad A\arrow[rr,spanmap]
&&
B
\end{tikzcd}
$$
between two comonoids $(A,\comult{A},\counit{A})$ and $(B,\comult{B},\counit{B})$
is defined as an object $S$ equipped with a morphism
$$
\begin{tikzcd}[column sep = 1.2em, row sep=.8em]
\coaction{S} \quad : \quad S\arrow[rr]
&&
A\graytensor S\graytensor B
\end{tikzcd}
$$
such that the two diagrams commute
\begin{equation}\label{equation/coactions}
\begin{small}
\begin{tikzcd}[column sep = -2em, row sep=.9em]
&&
S\arrow[rrdd,"{\coaction{S}}", bend left=20]\arrow[lldd,"{\coaction{S}}"{swap}, bend right=20]
\\
\\
A\graytensor S\graytensor B
\arrow[rrdd,"{\coaction{S}}"{swap,near start}, bend right=10]
&&
&&
A\graytensor S\graytensor B
\arrow[lldd,"{\comult{A}\graytensor S\graytensor \comult{B}}"{near start}, bend left=10]
\\
\\
&&
A\graytensor A\graytensor S\graytensor B\graytensor B
\end{tikzcd}
\quad
\begin{tikzcd}[column sep = 0em, row sep=.9em]
&&
S\arrow[dddd,"{id_S}"{swap}]\arrow[rrdd,"{\coaction{S}}", bend left=20]
\\
\\
&&
&&
A\graytensor S\graytensor B
\arrow[lldd,"{\counit{A}\graytensor S\graytensor \counit{B}}", bend left=10]
\\
\\
&&
S
\end{tikzcd}
\end{small}
\end{equation}
\end{definition}
Given an $A,B$-comodule $S$, we use $\coactionleft{S}$ and $\coactionright{S}$
as notations for the composite morphisms:
$$
\begin{array}{c}
\begin{tikzcd}[column sep = 2em]
\coactionleft{S} \,\, : \,\,
S
\arrow[rr,"{\coaction{S}}"]
&&
A\graytensor S\graytensor B
\arrow[rr,"{A\graytensor S\graytensor\counit{B}}"]
&&
A\graytensor S
\end{tikzcd}
\\
\begin{tikzcd}[column sep = 2em]
\coactionright{S} \,\, : \,\,
S
\arrow[rr,"{\coaction{S}}"]
&&
A\graytensor S\graytensor B
\arrow[rr,"{\counit{A}\graytensor S\graytensor B}"]
&&
S\graytensor B
\end{tikzcd}
\end{array}$$
which define a left comodule structure on the comonoid $A$
and a right comodule structure on the comonoid $B$, respectively.
Given three comonoids $A$, $B$ and $C$,
the composite of an $A,B$-comodule $S$ and of a $B,C$-comodule $T$
$$
\begin{tikzcd}[column sep = 1.5em, row sep=1em]
S \,\, : \,\, A\arrow[rr,spanmap]
&&
B
\end{tikzcd}
\quad\quad
\begin{tikzcd}[column sep = 1.5em, row sep=1em]
T \,\, : \,\, B\arrow[rr,spanmap]
&&
C
\end{tikzcd}
$$
is the $A,C$-comodule 
$$
\begin{tikzcd}[column sep = 1.5em, row sep=1em]
S\graytensoreq{B}T \quad : \quad A\arrow[rr,spanmap]
&&
C
\end{tikzcd}
$$
defined as the equalizer of the coreflexive pair of morphisms
$\coactionright{S}\graytensor T$ and $S\graytensor\coactionleft{T}$
$$
\begin{tikzcd}[column sep=1.6em]
S\graytensoreq{B} T
\arrow[rrr,"equalizer"]
&&&
S\graytensor T 
\arrow[rrr,yshift=.3em,"{\coactionright{S}\graytensor T}"]
\arrow[rrr,yshift=-.3em,"{S\graytensor\coactionleft{T}}"{swap}]
&&&
S\graytensor B\graytensor T
\arrow[lll,"{S\graytensor e\graytensor T}"{swap},yshift=-.2em,bend right=50]
\end{tikzcd}
$$
%
The $A,C$-comodule structure of $S\graytensoreq{B} T$
is given by the morphism $\coaction{S\graytensoreq{B} T}$
defined as the unique solution of the universal problem
represented below:
$$
\hspace{-.5em}
\begin{footnotesize}
\begin{tikzcd}[column sep=1em]
S\graytensoreq{B} T
\arrow[rr,"{equ}"]
\arrow[dd,dashed,"{\coaction{S\graytensoreq{B} T}}"{description}]
&&
S\graytensor T 
\arrow[rrr,yshift=.3em,"{\coactionright{S}\graytensor T}"]
\arrow[rrr,yshift=-.3em,"{S\graytensor\coactionleft{T}}"{swap}]
\arrow[dd,"{\coactionleft{S}\graytensor\coactionright{T}}"{swap}]
&&&
S\graytensor B\graytensor T
\arrow[dd,"{\coactionleft{S}\graytensor B\graytensor\coactionright{T}}"{description}]
\\
\\
A\graytensor(S\graytensoreq{B} T)\graytensor C
\arrow[rr,"{equ}"]
&&
A\graytensor S\graytensor T \graytensor C
\arrow[rrr,yshift=.3em,"{A\graytensor\coactionright{S}\graytensor T\graytensor C}"]
\arrow[rrr,yshift=-.3em,"{A\graytensor S\graytensor\coactionleft{T}\graytensor C}"{swap}]
&&&
A\graytensor S\graytensor B\graytensor T\graytensor C
\end{tikzcd}
\end{footnotesize}
$$
\subsection{Double cells in $\Comod{\Scategorysurround}$}\label{section/double-cells}
A double cell in the double category $\Comod{\Scategorysurround}$
of the form
\begin{equation}
\begin{tikzcd}[column sep = 2.4em, row sep=.8em]
A 
\arrow[dd,"{\fsubA}"{swap}]
\arrow[rr,spanmap,"{S}"{yshift=.4em}, ""{yshift=-.4em,swap,name=source}] 
&&
B
\arrow[dd,"{\fsubB}"]
\\
\\
A'
\arrow[rr,spanmap,"{S'}"{yshift=-.4em,swap},""{yshift=.4em,name=target}]
&&
B'
\arrow[from=source, to=target, "{\theta}", double,-implies]
\end{tikzcd}
\end{equation}
with horizontal edges an $A,B$-comodule $S$ 
and an $A',B'$-comonoid $S'$, and with vertical edges a pair of comonoid
homomorphisms $\fsubA:A\to A'$ and $\fsubB:B\to B'$ 
is defined as a morphism
$$\begin{tikzcd}[column sep = 1.5em, row sep=1em]
\theta \quad : \quad S \arrow[rr] && S'
\end{tikzcd}$$
of the category $\Scategorysurround$ making the diagram below commute:
\begin{equation}\label{equation/double-cell-as-morphism}
\begin{tikzcd}[column sep = 3em, row sep=1em]
S\arrow[rr,"{\theta}"] \arrow[dd,"{\coaction{S}}"{swap}]
&& 
{S'}
\arrow[dd,"{\coaction{S'}}"]
\\
\\
A\graytensor S\graytensor B \arrow[rr,"{\fsubA\graytensor\theta\graytensor\fsubB}"]
&&
{A'\graytensor S'\graytensor B'}
\end{tikzcd}
\end{equation}
Given a pair of double cells
\begin{equation}
\begin{small}
\begin{tikzcd}[column sep = 1.6em, row sep=1em]
A 
\arrow[dd,"{\fsubA}"{swap}]
\arrow[rr,spanmap,"{S}"{yshift=.4em}, ""{yshift=-.4em,swap,name=source}] 
&&
B
\arrow[dd,"{\fsubB}"]
\\
\\
A'
\arrow[rr,spanmap,"{S'}"{yshift=-.4em,swap},""{yshift=.4em,name=target}]
&&
B'
\arrow[from=source, to=target, "{\varphi}", double,-implies]
\end{tikzcd}
\quad\quad\quad
\begin{tikzcd}[column sep = 1.6em, row sep=1em]
B
\arrow[dd,"{\fsubB}"{swap}]
\arrow[rr,spanmap,"{T}"{yshift=.4em}, ""{yshift=-.4em,swap,name=source}] 
&&
C
\arrow[dd,"{\fsubC}"]
\\
\\
B'
\arrow[rr,spanmap,"{T'}"{yshift=-.4em,swap},""{yshift=.4em,name=target}]
&&
C'
\arrow[from=source, to=target, "{\psi}", double,-implies]
\end{tikzcd}
\end{small}
\end{equation}
the composite double cell 
\begin{equation}
\begin{small}
\begin{tikzcd}[column sep = 2.8em, row sep=1em]
A 
\arrow[dd,"{\fsubA}"{swap}]
\arrow[rr,spanmap,"{S\graytensoreq{B} T}"{yshift=.4em}, ""{yshift=-.4em,swap,name=source}] 
&&
C
\arrow[dd,"{\fsubC}"]
\\
\\
A'
\arrow[rr,spanmap,"{S'\graytensoreq{B'}T'}"{yshift=-.4em,swap},""{yshift=.4em,name=target}]
&&
C'
\arrow[from=source, to=target, "{\varphi\graytensoreq{\fsubB}\psi}", double,-implies]
\end{tikzcd}
\end{small}
\end{equation}
is defined as the unique morphism $\varphi\graytensoreq{\fsubB}\psi$
making the diagram below commute:
\begin{center}
$
\begin{tikzcd}[column sep=1.5em, row sep=1em]
S\graytensoreq{B} T\arrow[rr,"equ"]
\arrow[dd,"{\varphi\graytensoreq{\fsubB}\psi}"{swap},dashed]
&&
S\graytensor T \arrow[rrr,yshift=.3em,"{\coactionright{S}\graytensor T}"]
\arrow[rrr,yshift=-.3em,"{S\graytensor\coactionleft{T}}"{swap}]
\arrow[dd,"{\varphi\graytensor\psi}"{swap}]
&&&
S\graytensor B\graytensor T
\arrow[dd,"{\varphi\graytensor\fsubB\graytensor\psi}"]
\\
\\
{S'\graytensoreq{B'} T'}
\arrow[rr,"equ"]
&&
{S'\graytensor T'} \arrow[rrr,yshift=.3em,"{\coactionright{S'}\graytensor T'}"]\arrow[rrr,yshift=-.3em,"{S'\graytensor\coactionleft{T'}}"{swap}]
&&&
{S'\graytensor B'\graytensor T'}
\end{tikzcd}
$
\end{center}
The following result is essentially folklore, especially
when one sees the construction of $\Comod{\Scategorysurround}$
as the dual of the construction of a weak double category of bimodules
between monoids, see for instance~\cite{Street07}.
\medbreak
\begin{theorem}
$\Comod{\Scategorysurround}$ defines a weak double category.
\end{theorem}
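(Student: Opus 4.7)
\medbreak
The plan is to dualize the classical construction of the weak double category of bimodules between monoids in a monoidal category with reflexive coequalizers, as recalled in~\cite{Street07}, by systematically replacing monoids, bimodules and reflexive coequalizers by comonoids, bicomodules and coreflexive equalizers. Since the data of vertical maps, horizontal maps, and double cells has already been spelled out in \S\ref{section/vertical-maps}--\S\ref{section/double-cells}, what remains is to provide horizontal identities together with the associator and unitor isomorphisms of horizontal composition, and then to verify the pentagon, triangle and middle-four interchange coherences of a weak double category.

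First, I would declare the horizontal identity on a comonoid $(A,\comult{A},\counit{A})$ to be $A$ itself, regarded as an $A,A$-bicomodule with coaction
\[
\coaction{A} \,\,=\,\, (\comult{A}\graytensor A)\circ \comult{A} \,\,:\,\, A \longrightarrow A\graytensor A\graytensor A,
\]
whose coassociativity and counitality follow immediately from the comonoid axioms of $A$. Vertical composition of double cells is then ordinary composition in $\Scategorysurround$ constrained by the commuting square~(\ref{equation/double-cell-as-morphism}), which is clearly strictly functorial in both the horizontal and the vertical direction.

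The main obstacle is the construction of the associator
\[
\alpha_{S,T,U} \,\,:\,\, (S\graytensoreq{B} T)\graytensoreq{C} U \,\, \stackrel{\sim}{\longrightarrow} \,\, S\graytensoreq{B} (T\graytensoreq{C} U)
\]
for bicomodules $S,T,U$ composable along comonoids $A,B,C,D$, together with the unitors $\lambda_S : A\graytensoreq{A} S \to S$ and $\rho_S : S\graytensoreq{B} B \to S$, and the verification of the pentagon and triangle axioms relating them. This is precisely the step where the hypothesis that $\graytensor$ preserves coreflexive equalizers componentwise becomes essential: it guarantees that the equalizer defining $S\graytensoreq{B} T$ survives tensoring on the right by $U$, and dually that the equalizer defining $T\graytensoreq{C} U$ survives tensoring on the left by $S$. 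A direct diagram chase then identifies both iterated composites with a single joint limit inside $S\graytensor T\graytensor U$, expressing simultaneous equalization of the right $B$-coaction of $S$ against the left $B$-coaction of $T$ and of the right $C$-coaction of $T$ against the left $C$-coaction of $U$; the associator $\alpha_{S,T,U}$ is then induced by universality, is automatically a bicomodule map, and is natural in double cells for the same reason. The unitors are built similarly: the counit of $A$ splits the defining equalizer of $A\graytensoreq{A} S$, making it a split equalizer whose apex is canonically $S$, and symmetrically for $\rho_S$. The pentagon, triangle and middle-four interchange axioms then reduce, by the universal property of the equalizers involved, to commutations of diagrams built from coactions, comultiplications and counits inside iterated tensor products in $\Scategorysurround$, all of which hold componentwise from the comonoid and bicomodule axioms.
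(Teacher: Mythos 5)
Your proposal is correct and follows essentially the same route as the paper, which simply invokes the folklore dualization of the bimodule construction of~\cite{Street07} and notes that componentwise preservation of coreflexive equalizers by $\graytensor$ is what makes horizontal composition associative. Your sketch merely fills in the standard details (identity bicomodule via $(\comult{A}\graytensor A)\circ\comult{A}$, associator via the joint equalizer inside $S\graytensor T\graytensor U$, unitors via split equalizers), all of which are consistent with the paper's intended argument.
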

\medbreak
Note that the assumption that the tensor product 
preserves coreflexive equalizers componentwise in $\Scategorysurround$
plays a critical role in the proof that horizontal composition is associative
in $\Comod{\Scategorysurround}$.

\section{The template $\anchorofasynch$ of asynchronous games}\label{section/asynchronous-template}
In this section, we recall in \S\ref{section/internal-category}
the notion of an internal category in a monoidal category $(\Scategorysurround,\graytensor,\grayunit)$
as defined by Aguiar~\cite{aguiar-phd-1997}.
We then show that the template $\anchorofasynch$
of asynchronous games formulated in the introduction
defines an internal category in $(\TwoCat,\graytensor,\grayunit)$.


\subsection{Definition of internal category in a monoidal category~$\Scategorysurround$}\label{section/internal-category}
Suppose given a monoidal category $(\Scategorysurround,\graytensor,\grayunit)$
whose tensor product preserves coreflexive equalizers componentwise,
We start with the following definition:
%
\medbreak
\begin{definition}[internal category]
An internal category $\moo$ in the monoidal category $(\Scategorysurround,\graytensor,\grayunit)$
is defined as a monad in the weak double category $\Comod{\Scategorysurround}$.
\end{definition}
\medbreak
The definition may be expounded as follows: 
an internal category $\moo$ in $(\Scategorysurround,\graytensor,\grayunit)$ is a quadruple
$$
\moo \quad = \quad (\moo[0],\moo[1],\mathsf{mult},\mathsf{unit})
$$
consisting of a comonoid $\moo[0]$ with comultiplication and counit noted
$$
\begin{tikzcd}[column sep = 1.1em, row sep=1em]
d \,\, : \,\, {\moo[0]}\arrow[rr]
&&
{\moo[0]}\graytensor{\moo[0]}
\end{tikzcd}
\quad\quad
\begin{tikzcd}[column sep = 1.1em, row sep=1em]
e \,\, : \,\, {\moo[0]}\arrow[rr]
&&
{\grayunit}
\end{tikzcd}
$$
together with an $\moo[0],\moo[0]$-bicomodule 
\begin{equation}\label{equation/bicomodule-anchor}
\begin{tikzcd}[column sep = 1.5em, row sep=1em]
{\moo[1]} \quad : \quad {\moo[0]}\arrow[rr,spanmap]
&&
{\moo[0]}
\end{tikzcd}
\end{equation}
defined by a morphism in~$\Scategorysurround$
$$
\begin{tikzcd}[column sep = 1.2em]
\mathsf{coact}_{\moo} \,\, : \,\,
{\moo[1]}\arrow[rr] && 
{\moo[0]\graytensor\moo[1]\graytensor\moo[1]}
&& 
\end{tikzcd}
$$
making the diagrams~(\ref{equation/coactions}) commute,
together with double cells
$$
\begin{tikzcd}[column sep = 1.2em, row sep=.8em]
{\moo[0]}
\arrow[dd,"{id}"{swap}]
\arrow[rr,spanmap,"{{\moo[1]}}"{yshift=.4em}] 
&&
{\moo[0]}
\arrow[rr,spanmap,"{{\moo[1]}}"{yshift=.4em}] 
\arrow[yshift=-.1em,to=target, "{\mathsf{mult}}", double,-implies]
&&
{\moo[0]}
\arrow[dd,"{id}"]
\\
\\
{\moo[0]}
\arrow[rrrr,spanmap,"{{\moo[1]}}"{yshift=-.4em,swap},""{yshift=.4em,name=target}]
&&&&
{\moo[0]}
\end{tikzcd}
\quad\quad
\begin{tikzcd}[column sep = 1.4em, row sep=.8em]
{\moo[0]}
\arrow[dd,"{id}"{swap}]
\arrow[rr,spanmap,"{id}"{yshift=.4em},""{yshift=-.4em,swap,name=source}] 
&&
{\moo[0]}
\arrow[dd,"{id}"]
\\
\\
{\moo[0]}
\arrow[rr,spanmap,"{{\moo[1]}}"{yshift=-.4em,swap},""{yshift=.4em,name=target}]
&&
{\moo[0]}
\arrow[from=source, to=target, "{\mathsf{unit}}", double,-implies]
\end{tikzcd}
$$
defined by a pair of morphisms in $\Scategorysurround$
$$
\begin{array}{c}
\begin{tikzcd}[column sep = 1.2em, row sep=.8em]
\mathsf{mult} \,\, : \,\,
{\moo[1]\graytensoreq{\moo[0]}\moo[1]}\arrow[rr] 
&&{\moo[1]}
\end{tikzcd}
\\
\begin{tikzcd}[column sep = 1.2em, row sep=.8em]
\mathsf{unit} \,\, : \,\,
{\moo[0]}\arrow[rr] 
&&{\moo[1]}
\end{tikzcd}
\end{array}
$$
making the expected diagrams~(\ref{equation/double-cell-as-morphism}) commute.
One requires moreover that $\mathsf{mult}$ and $\mathsf{unit}$
make the associativity and neutrality diagrams~(\ref{equation/associativity-of-moo})
and~(\ref{equation/associativity-of-moo}) commute as required
of a monad $\moo$ in the weak double category $\Comod{\Scategorysurround}$.
%
\begin{equation}\label{equation/associativity-of-moo}
\begin{tikzcd}[column sep=.5em,row sep=.5em]
&&
{\moo[3]}\arrow[rrdd,"{\mathsf{mult}\,\,\graytensoreq{\moo[0]}\,\,{\moo[1]}}"]
\arrow[lldd,"{{\moo[1]}\,\,\graytensoreq{\moo[0]}\,\,\mathsf{mult}}"{swap}"]
\\
\\
{\moo[2]}
\arrow[rrdd,"{\mathsf{mult}}"{swap}]
&&&&
{\moo[2]}\arrow[lldd,"{\mathsf{mult}}"]
\\
\\
&&
{\moo[1]}
\end{tikzcd}
\end{equation}
\begin{equation}\label{equation/neutrality-of-moo}
\begin{tikzcd}[column sep=.5em,row sep=.5em]
&&
{\moo[1]}
\arrow[rrdd,"{\mathsf{unit}\,\,\graytensoreq{\moo[0]}\,\,{\moo[1]}}"]
\arrow[lldd,"{{\moo[1]}\,\,\graytensoreq{\moo[0]}\,\,\mathsf{unit}}"{swap}"]
\arrow[dddd,"{id}"]
\\
\\
{\moo[2]}
\arrow[rrdd,"{\mathsf{mult}}"{swap}]
&&&&
{\moo[2]}\arrow[lldd,"{\mathsf{mult}}"]
\\
\\
&&
{\moo[1]}
\end{tikzcd}
\end{equation}
%
where we write
\begin{equation}\label{equation/horizontal-composites}
\begin{array}{lcc}
{\moo[2]} &  = & {\moo[1]\graytensoreq{\moo[0]}\moo[1]}
\\
\vspace{-.8em}
\\
{\moo[3]} & = & {\moo[1]\graytensoreq{\moo[0]}\moo[1]\graytensoreq{\moo[0]}\moo[1]}
\end{array}
\end{equation}
for the horizontal composites 
of the $\moo[0],\moo[0]$-bicomodule~(\ref{equation/bicomodule-anchor})
with itself.

\subsection{The template $\anchorofasynch$ of asynchronous games}
We establish now that the template $\anchorofasynch$ of asynchronous games
defined in the introduction as the Gray comonoid
$$\anchorofasynch[0] = \twocatanchorof{\polarityminus,\polarityplus}$$
with comultiplication and counit~(\ref{equation/Gray-comonoid-of-objects})
and the bicomodule
$$\anchorofasynch[1] = \twocatanchorof{\polarityminussource,\polarityplussource,\polarityminustarget,\polarityplustarget}$$
with coaction~(\ref{equation/bicomodule-of-anchor}).
The proof relies on the key observation that the horizontal composites
$\anchorofasynch[2]$ and $\anchorofasynch[3]$ defined in~(\ref{equation/horizontal-composites})
are themselves freely generated in the following way:
$$
\begin{array}{c}
\anchorofasynch[2] = \twocatanchorof{\polarityminussource,\polarityplussource,
\polarityminusmid,\polarityplusmid,\polarityminustarget,\polarityplustarget}
\\
\vspace{-.8em}
\\
\anchorofasynch[3] = \twocatanchorof{\polarityminussource,\polarityplussource,
\polarityminusmidone,\polarityplusmidone,\polarityminusmidtwo,\polarityplusmidtwo,
\polarityminustarget,\polarityplustarget}
\end{array}
$$
where the generators $\polarityminusmid$, $\polarityplusmid$,
$\polarityminusmidone$, $\polarityplusmidone$,
$\polarityminusmidtwo$, $\polarityplusmidtwo$
are the polarities of internal moves exchanged during the interaction
between strategies.
Note that we recover in this way a critical phenomenon 
and synchronization mechanism of traditional template games
observed by Melli{\`e}s~\cite{mellies-popl-2019}.

\medbreak

Once this key observation has been made, the proof becomes
a nice and enlightening combinatorial exercise, which
essentially amounts to checking very carefully
all the coherence diagrams required of an internal category
in \S\ref{section/internal-category} are indeed satisfied by~$\anchorofasynch$.
At this stage, we observe that:

\medbreak
\noindent
1. an asynchronous template game $(A,\lambda_A)$ 
as defined in Def.~\ref{definition/asynchronous-template-games}
is the same thing as an object of $\Comod{\TwoCat,\graytensor,\grayunit}$
equipped with a vertical map $\lambda_A:A\longrightarrow \anchorofasynch[0]$.

\medbreak
\noindent
2. an asynchronous strategy $\sigma$
as defined in Def.~\ref{definition/asynchronous-strategies}
is the same thing as a double cell of the form
$$
\begin{tikzcd}[column sep = 3em, row sep=1em]
{A}
\arrow[dd,"{\lambda_A}"{swap}]
\arrow[rr,spanmap,"{S}"{yshift=.4em},""{yshift=-.4em,swap,name=source}] 
&&
{B}
\arrow[dd,"{\lambda_B}"]
\\
\\
{\anchorofasynch[0]}
\arrow[rr,spanmap,"{{\anchorofasynch[1]}}"{yshift=-.4em,swap},""{yshift=.4em,name=target}]
&&
{\anchorofasynch[0]}
\arrow[from=source, to=target, "{\lambda_{\sigma}}", double,-implies]
\end{tikzcd}
$$

\medbreak
\noindent
3. a notion of simulation
$$
\begin{tikzcd}[column sep=.8em, row sep=1.2em]
\theta \,\, : \,\, \sigma \arrow[-implies,double,rr] && \tau\, \,\, : \,\, (A,\lambda_A) \arrow[spanmap]{rrrr} &&&& (B,\lambda_B)
\end{tikzcd}
$$
between asynchronous strategies $\sigma=(S,\lambda_{\sigma})$
and $\tau=(T,\lambda_{\tau})$ may be defined as a double cell
in $\Comod{\TwoCat,\graytensor,\grayunit}$
$$
\begin{tikzcd}[column sep = 5em, row sep=.8em]
{A}
\arrow[dd,"{id}"{swap}]
\arrow[rr,spanmap,"{S}"{yshift=.4em},""{yshift=-.4em,swap,name=source}] 
&&
{B}
\arrow[dd,"{id}"]
\\
\\
{A}
\arrow[rr,spanmap,"{T}"{yshift=-.4em,swap},""{yshift=.4em,name=target}]
&&
{B}
\arrow[from=source, to=target, "{\theta}", double,-implies]
\end{tikzcd}
$$
satisfying the expected property that $\lambda_{\sigma}$
concides with $\lambda_{\tau}$ vertically precomposed
with the double cell~$\theta$.
\medbreak

From these basic observations together with the property that $\anchorofasynch$
defines an internal category in $(\TwoCat,\graytensor,\grayunit)$,
one easily deduces the main result of our paper, either directly
or using the nice and general recipe in~\cite{eberhard-hirschowitz-laouar-fscd-2019}.
\medbreak
\begin{theorem}
$\Games{(\anchorofasynch)}$ defines a bicategory
of asynchronous templates games, strategies and simulations.
\end{theorem}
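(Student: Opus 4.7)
The plan is to construct $\Games{(\anchorofasynch)}$ as the bicategory canonically associated to the monad $\anchorofasynch$ in the weak double category $\Comod{\TwoCat,\graytensor,\grayunit}$, following the general recipe sketched in \cite{eberhard-hirschowitz-laouar-fscd-2019}. The three observations preceding the theorem already identify the data: the 0-cells are asynchronous template games $(A,\lambda_A)$ viewed as vertical maps $\lambda_A:A\to\anchorofasynch[0]$ in $\Comod{\TwoCat}$; the 1-cells are asynchronous strategies viewed as double cells whose top row is an $A,B$-bicomodule $S$, whose bottom row is $\anchorofasynch[1]$, and whose vertical sides are $\lambda_A,\lambda_B$; and the 2-cells are simulations, inherited as double cells in $\Comod{\TwoCat}$ with identity vertical sides.

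For horizontal composition of 1-cells, I would compose two strategies $\sigma:(A,\lambda_A)\to(B,\lambda_B)$ and $\tau:(B,\lambda_B)\to(C,\lambda_C)$ by first forming their horizontal composite in $\Comod{\TwoCat}$, producing a double cell with top row $S\graytensoreq{B}T$ and bottom row $\anchorofasynch[1]\graytensoreq{\anchorofasynch[0]}\anchorofasynch[1]=\anchorofasynch[2]$, and then post-composing the bottom row with the multiplication $\mathsf{mult}:\anchorofasynch[2]\to\anchorofasynch[1]$ of the internal category $\anchorofasynch$. The identity 1-cell on $(A,\lambda_A)$ uses $A$ itself as underlying bicomodule (with coaction built from $d_A$) together with labelling $\mathsf{unit}\circ\lambda_A:A\to\anchorofasynch[1]$. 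Vertical and horizontal composition (including whiskering) of 2-cells are then inherited directly from the weak double category structure of $\Comod{\TwoCat}$, and their compatibility with the polarity data is built into the very definition of simulation.

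The associator and left/right unitors of $\Games{(\anchorofasynch)}$ descend from those of $\Comod{\TwoCat}$. The critical point is that they are compatible with the labelling data, and this is precisely what the associativity diagram~(\ref{equation/associativity-of-moo}) and the neutrality diagram~(\ref{equation/neutrality-of-moo}) of the monad $\anchorofasynch$ express. The pentagon and triangle equations of $\Games{(\anchorofasynch)}$ then reduce to those of $\Comod{\TwoCat}$ combined with the uniqueness clause of the equalizer universal properties. The main obstacle, and the step requiring the most care, will be showing that horizontal composition is well-defined on labels: given $\lambda_\sigma$ and $\lambda_\tau$, one must produce a 2-functor $S\graytensoreq{B}T\to\anchorofasynch[2]$ via the universal property of the equalizer defining $S\graytensoreq{B}T$, and then verify functoriality and naturality in the two strategies. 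This step relies crucially on Proposition~\ref{proposition/preservation-of-equalizers}, which ensures that $\graytensor$ preserves the coreflexive equalizers used to form both $S\graytensoreq{B}T$ and $\anchorofasynch[2]$, so that the two composite labelling morphisms $S\graytensor T\rightrightarrows\anchorofasynch[1]\graytensor\anchorofasynch[0]\graytensor\anchorofasynch[1]$ coequalize and induce the required map into $\anchorofasynch[2]$.
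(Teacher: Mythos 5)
Your proposal is correct and follows essentially the same route as the paper: the paper likewise identifies games, strategies and simulations with vertical maps and double cells over $\anchorofasynch[0]$ and $\anchorofasynch[1]$ in $\Comod{\TwoCat,\graytensor,\grayunit}$, and then deduces the bicategory structure from the fact that $\anchorofasynch$ is a monad (internal category) there, either directly or via the general recipe of Eberhard--Hirschowitz--Laouar, with composition given exactly as you describe (horizontal composite $S\graytensoreq{B}T$ over $\anchorofasynch[2]$ followed by $\mathsf{mult}$, identities via $\mathsf{unit}\circ\lambda_A$). Your write-up is in fact more detailed than the paper's terse argument; the only cosmetic slip is saying the two maps ``coequalize'' where you mean they are equalized, yielding the induced map into $\anchorofasynch[2]$.
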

\medbreak


\section{The star-autonomous bicategory\\ of asynchronous template games}\label{section/star-autonomous}
One main benefit of the structural approach to game semantics
based on template games~\cite{mellies-popl-2019,mellies-lics-2019}
is that it becomes easier to establish in full precision and rigour
that a given bicategory of games, strategies and simulations
is symmetric monoidal closed, or even $\ast$-autonomous.
This important aspect of template game semantics
remains true in the asynchronous and 2-categorical framework
developed in the present paper.
Indeed, we establish that the bicategory~$\Games{(\anchorofasynch)}$
of asynchonous template games is $\ast$-autonomous by observing

\medbreak
\noindent
1. that the Gray tensor product $\graytensor$ is symmetric and induces
for that reason a tensor product on the double category $\Comod{\TwoCat,\graytensor,\grayunit}$
which coincides with the tensor product of Gray comonoids on the vertical category.

\medbreak
\noindent
2. that the resulting Gray tensor product $\graytensor$ 
induces in turn a tensor product $\graytensor$ 
on the category of internal categories and internal functors
in~$(\TwoCat,\graytensor,\grayunit)$ where the notion
of internal functor is defined as a monad morphism 
in $\Comod{\TwoCat,\graytensor,\grayunit}$.

\medbreak
\noindent
3. that the template $\anchorofasynch$ comes with an internal functor 
$$\tensor:\anchorofasynch\graytensor\anchorofasynch\longrightarrow\anchorofasynch$$
defined as the double cell in $\Comod{\TwoCat,\graytensor,\grayunit}$
$$
\begin{small}
\begin{tikzcd}[column sep = 3em, row sep=1em]
{\anchorofasynch[0]\graytensor\anchorofasynch[0]}
\arrow[dd,"{\tensor[0]}"{swap}]
\arrow[rr,spanmap,"{{\anchorofasynch[1]}\graytensor{\anchorofasynch[1]}}"{yshift=.4em},""{yshift=-.4em,swap,name=source}] 
&&
{\anchorofasynch[0]\graytensor\anchorofasynch[0]}
\arrow[dd,"{\tensor[0]}"]
\\
\\
{\anchorofasynch[0]}
\arrow[rr,spanmap,"{{\anchorofasynch[1]}}"{yshift=-.4em,swap},""{yshift=.4em,name=target}]
&&
{\anchorofasynch[0]}
\arrow[from=source, to=target, "{\tensor[1]}", double,-implies]
\end{tikzcd}
\end{small}
$$
defined by the canonical Gray monoid structures $\tensor[0]$
and $\tensor[1]$ on $\anchorofasynch[0]$ and $\anchorofasynch[1]$ 
obtained by transporting the monoid structure $S\mapsto S+S$
of any set of generators along the symmetric monoidal functor $\twocatanchorof{-}$
described in~(\ref{equation/anchorof-two-categories}).

\medbreak
Similarly, we observe that the template $\anchorofasynch$ is equipped with an internal functor
$$
\mathsf{neg} \,\, : \,\, \anchorofasynch^{op}\longrightarrow\anchorofasynch
$$
defined as the double cell in $\Comod{\TwoCat,\graytensor,\grayunit}$
$$
\begin{small}
\begin{tikzcd}[column sep = 4em, row sep=1em]
{\anchorofasynch[0]^{op}}
\arrow[dd,"{\mathsf{neg}[0]}"{swap}]
\arrow[rr,spanmap,"{{\anchorofasynch[1]^{op}}}"{yshift=.4em},""{yshift=-.4em,swap,name=source}] 
&&
{\anchorofasynch[0]^{op}}
\arrow[dd,"{\mathsf{neg}[0]}"]
\\
\\
{\anchorofasynch[0]}
\arrow[rr,spanmap,"{{\anchorofasynch[1]}}"{yshift=-.4em,swap},""{yshift=.4em,name=target}]
&&
{\anchorofasynch[0]}
\arrow[from=source, to=target, "{\mathsf{neg}[1]}", double,-implies]
\end{tikzcd}
\end{small}
$$
where the opposite $\anchor^{op}$ of an internal category 
is defined by permuting the outputs of the underlying comonoids
and bicomodules using the symmetry $\mathsf{sym}$ of the Gray tensor product.
Note that this purely algebraic operation in $(\TwoCat,\graytensor,\grayunit)$
reverses the left-to-right and right-to-left orientations of the template $\anchorofasynch$, 
and thus the role of each polarity, as seen for instance
in the comultiplication~(\ref{equation/Gray-comonoid-of-objects}) 
discussed in the introduction:
$$
\begin{small}
\begin{tikzcd}[column sep = .8em]
\twocatanchorof{\polarityminus,\polarityplus}
\arrow[rr,"d"]
&&
\twocatanchorof{\polarityminus,\polarityplus}\graytensor
\twocatanchorof{\polarityminus,\polarityplus}
\arrow[rr,"{\mathsf{sym}}"]
&&
\twocatanchorof{\polarityminus,\polarityplus}\graytensor
\twocatanchorof{\polarityminus,\polarityplus}
\end{tikzcd}
\end{small}
$$
We obtain in this way the main result of the paper:
\medbreak
\begin{theorem}
The bicategory $\Games{(\anchorofasynch)}$ is symmetric monoidal closed,
and in fact $\ast$-autonomous.
\end{theorem}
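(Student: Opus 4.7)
The plan is to transport the algebraic structure carried by the template $\anchorofasynch$ --- the internal functors $\tensor$ and $\mathsf{neg}$ summarized in the three bullet points just preceding the theorem --- to a symmetric monoidal and involutive structure on the bicategory $\Games{(\anchorofasynch)}$ of games and strategies over $\anchorofasynch$. The crucial conceptual point is that, by Definitions~\ref{definition/asynchronous-template-games} and~\ref{definition/asynchronous-strategies}, a game is nothing but a vertical map $\lambda_A : A \to \anchorofasynch[0]$ and a strategy is nothing but a double cell in $\Comod{\TwoCat,\graytensor,\grayunit}$ whose lower horizontal edge is $\anchorofasynch[1]$. Consequently, every internal functor on $\anchorofasynch$ induces, by post-composition at the level of vertical maps and double cells, a corresponding operation on games, strategies, and simulations that preserves composition up to coherent invertible 2-cells.

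Concretely, I would define the tensor product of two games $(A,\lambda_A)$ and $(B,\lambda_B)$ as the game with support $A \graytensor B$ and labelling $\tensor[0] \circ (\lambda_A \graytensor \lambda_B)$, and the tensor product of two strategies $\sigma, \tau$ as the double cell obtained by applying $\tensor[1]$ to $\lambda_\sigma \graytensor \lambda_\tau$ in $\Comod{\TwoCat,\graytensor,\grayunit}$. Functoriality, associativity, unitality, and symmetry of this operation follow from the symmetry and coherence of the Gray tensor product combined with the fact that $\tensor$ was obtained in~\S\ref{section/asynchronous-template} by freely lifting the canonical commutative monoid structure on $\{\polarityminus,\polarityplus\}$ along the symmetric monoidal functor $\twocatanchorof{-}$. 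The required coherence 2-cells in $\Games{(\anchorofasynch)}$ are thus canonically constructed from the associators and symmetries of the ambient symmetric monoidal bicategory $\Comod{\TwoCat,\graytensor,\grayunit}$.

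For the $\ast$-autonomous structure, I would use $\mathsf{neg}$ to define the dual of a game $(A,\lambda_A)$ as the game with support $A^{op}$ and labelling $\mathsf{neg}[0] \circ \lambda_A^{op}$, with the analogous action on strategies using $\mathsf{neg}[1]$; here I must use that taking $(-)^{op}$ of a bicomodule swaps its left and right coactions, which dovetails precisely with the way $\mathsf{neg}$ was built by permuting the outputs of $\anchorofasynch$ via the symmetry of $\graytensor$. The internal hom is then $A \multimap B := A^* \tensor B$, and the natural equivalence $\Games{(\anchorofasynch)}(A \tensor B, C) \simeq \Games{(\anchorofasynch)}(A, B \multimap C)$ is obtained by a reshuffling of labels in $\anchorofasynch$; the key compatibility that makes this work is the commutation $\mathsf{neg} \circ \tensor^{op} \cong \tensor \circ (\mathsf{neg} \graytensor \mathsf{neg}) \circ \mathsf{sym}$, forced by the fact that both sides arise from the same codiagonal on the generating polarities via the symmetric monoidal functor $\twocatanchorof{-}$.

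The main obstacle I expect is not the construction itself but the systematic verification of the coherence diagrams --- pentagon, hexagon, triangle, and the double-negation equivalence $A \cong A^{**}$ --- up to invertible simulations. After unfolding, these all reduce to equations in $\Comod{\TwoCat,\graytensor,\grayunit}$ between vertical and horizontal composites built from $d$, $e$, $\tensor[i]$ and $\mathsf{neg}[i]$; the strictness of these generators on $\anchorofasynch$ --- which is a strict Gray comonoid --- ensures that most of these equations hold on the nose, with the residual coherence inherited from the symmetric monoidal bicategory $\Comod{\TwoCat,\graytensor,\grayunit}$. As an alternative route, once $\anchorofasynch$ has been equipped with its compatible $\tensor$ and $\mathsf{neg}$, the general recipe of Eberhart, Hirschowitz and Laouar~\cite{eberhard-hirschowitz-laouar-fscd-2019} would provide a uniform derivation of the $\ast$-autonomous structure, sparing us the direct diagram chase.
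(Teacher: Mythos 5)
Your proposal takes essentially the same route as the paper: the paper likewise obtains the symmetric monoidal closed and $\ast$-autonomous structure by equipping the template $\anchorofasynch$ with the internal functors $\tensor$ and $\mathsf{neg}$ --- built from the symmetric monoidal functor $\twocatanchorof{-}$ and the symmetry of $\graytensor$, with $\anchorofasynch^{op}$ defined by permuting comonoid outputs --- and transporting them to $\Games{(\anchorofasynch)}$ via the template-games machinery, delegating coherence to the ambient structure of $\Comod{\TwoCat,\graytensor,\grayunit}$ (or to the recipe of \cite{eberhard-hirschowitz-laouar-fscd-2019}). Your explicit action on games and strategies by postcomposition with $\tensor[0]$, $\tensor[1]$, $\mathsf{neg}[0]$, $\mathsf{neg}[1]$ matches the paper's (equally schematic) argument.
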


\section{Conclusion and future works}\label{section/conclusion}
In this paper, we have shown how to upgrade to a properly asynchronous 
and 2-categorical framework the template game semantics
designed by Melli{\`e}s~\cite{mellies-popl-2019} for concurrent games.
As we explain, the resulting concurrent game model based on the ``asynchronous''
Gray tensor product of 2-categories
resolves a defect in the interpretation of deadlocks
in the original semantics based on functorial spans and pullbacks of categories.
One main challenge for future work will be to extend our current interpretation
of multiplicative linear logic (MALL) to an asynchronous game semantics of differential linear logic,
integrating to our 2-categorical setting the homotopy approach of \cite{mellies-lics-2019}.
%
%
%
We also believe that the formalism of asynchronous template games 
is sufficiently simple, general and conceptually clean to provide
a unifying framework for various forms of sequential and concurrent
game semantics.
%



%

\section*{Acknowledgment}
The author would like to thank Pierre Clairambault for his early remarks
on the treatment of deadlocks in the concurrent template game semantics,
as well as the LICS reviewers for their helpful comments on the final version of the paper.


\appendices

\section{Proof of the existence of finite limits\\
in the category $\Asynch$
of asynchronous graphs (\S\ref{section/finite-limits-of-Asynch}, 
Prop.~\ref{proposition/finite-limits-in-Asynch})}\label{appendix/finite-limits}
We have established in \S\ref{section/finite-limits-of-Asynch}
that the category $\Asynch$ has finite products.
%
A second and less immediate property
is that the category $\Asynch$ has all equalizers,
and thus all finite limits.
Indeed, given a pair of homomorphisms between asynchronous graphs
$$
\begin{tikzcd}[column sep = 1.4em]
(G,\diamond_G)
\arrow[rr,"{f}",yshift=.2em]
\arrow[rr,"{g}"{swap},yshift=-.2em]
&&
(H,\diamond_H)
\end{tikzcd}
$$
the equalizer $(E,\diamond_E)$ may be defined as follows:
its vertices are the vertices $x$ of the graph $G$ such that $f(x)=g(x)$
and its edges are the edges $u:x\to y$ such that $f(u)=g(u)$.
Note that $E$ may be seen as a subgraph of $G$.
Then, a pair of paths $p,q:x\transitionpath y$ of length 2 in the graph $E$ 
defines a permutation tile $p\diamond_E q$ precisely when 
the two paths $p,q$ define such a permutation tile $p\diamond_G q$
in the asynchronous graph $(G,\diamond_G)$.
We establish that defined in this way,
$(E,\diamond_E)$ satisfies the properties
required of an asynchronous graph, 
%
$E$ may be seen as a subgraph of $G$.
Then, by definition, a pair of paths $p,q:M\transitionpath N$ of length 2 in the graph $E$ 
defines a permutation tile $p\diamond_E q$ precisely when 
the two paths $p,q$ define such a permutation tile $p\diamond_G q$
in the asynchronous graph $(G,\diamond_G)$.
At this stage, the main difficulty is to show that the pair $(E,\diamond_E)$ 
just constructed satisfies the three properties required of an asynchronous graph.
Symmetry and determinism of permutations are immediately deduced from
the fact the properties are satisfied by $(G,\diamond_G)$,
while the cube property of $(E,\diamond_E)$ follows easily
from the following observation:
%
\medbreak
\begin{proposition}
Suppose given a pair of paths $p,q:M\transitionpath N$ of length 2 
defining a permutation tile $p\diamond_G q$ in the asynchronous graph $(G,\diamond_G)$.
Suppose moreover that the path $p=u_1\cdot u_2:M\transitionpath N$ 
is a path of the subgraph $E$ of the graph $G$.
In that case, the path $q=v_1\cdot v_2:M\transitionpath N$
is also a path of the subgraph $E$ and $p\diamond_E q$.
\end{proposition}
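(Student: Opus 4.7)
The plan is to leverage the two structural properties of asynchronous graph homomorphisms together with the determinism axiom of permutation tiles in $(H,\diamond_H)$. First, I would transport the given permutation tile $u_1\cdot u_2 \diamond_G v_1\cdot v_2$ through the two asynchronous homomorphisms $f$ and $g$, obtaining two permutation tiles in the codomain graph:
\[
f(u_1)\cdot f(u_2) \, \diamond_H \, f(v_1)\cdot f(v_2)
\quad \text{and} \quad
g(u_1)\cdot g(u_2) \, \diamond_H \, g(v_1)\cdot g(v_2).
\]
The hypothesis that both $u_1$ and $u_2$ lie in the equalizer subgraph $E$ forces $f(u_i)=g(u_i)$ for $i=1,2$, so the two tiles above share the same ``left'' path $f(u_1)\cdot f(u_2) = g(u_1)\cdot g(u_2)$.

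Next, I would apply the determinism property of $(H,\diamond_H)$: two permutation tiles that agree on one side must agree on the other. This yields the equality of paths $f(v_1)\cdot f(v_2) = g(v_1)\cdot g(v_2)$ in $H$, from which one reads off the edgewise identities $f(v_1)=g(v_1)$ and $f(v_2)=g(v_2)$. Since source and target in the graph $H$ are genuine functions, equal edges have equal endpoints, which immediately places the common intermediate vertex of $v_1\cdot v_2$ inside the vertex set of $E$. Combined with the fact that $M$ and $N$ already belong to $E$ (being endpoints of the path $p$ sitting in $E$), this shows that all three vertices of $q$ and both of its edges live in~$E$, so $q=v_1\cdot v_2$ is indeed a path of the subgraph~$E$.

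The second conclusion $p\diamond_E q$ is then automatic from the definition of $\diamond_E$ as the restriction of $\diamond_G$ to those squares whose constituent paths lie entirely inside the subgraph~$E$: both $p$ and $q$ now lie in $E$, and $p\diamond_G q$ holds by hypothesis. The argument is short, and I do not anticipate any real obstacle; the only subtlety worth flagging is the small bookkeeping step which upgrades the edgewise equalities $f(v_i)=g(v_i)$ in $H$ into the vertexwise equality that places the intermediate vertex of $q$ inside $E$, a step which rests entirely on the basic graph-theoretic fact that equal edges share equal endpoints.
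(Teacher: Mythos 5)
Your proof is correct and follows essentially the same route as the paper: transport the tile along $f$ and $g$, use $f(u_i)=g(u_i)$ to see the two image tiles share a side, invoke determinism in $(H,\diamond_H)$ to get $f(v_i)=g(v_i)$, and conclude that $q$ lies in $E$ with $p\diamond_E q$ by definition of $\diamond_E$. Your extra bookkeeping about the intermediate vertex (which the paper leaves implicit, and which really rests on $f,g$ commuting with source and target) is a harmless refinement, not a divergence.
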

\begin{proof}
In order to prove the property, suppose that $f(u_1)=g(u_1)$ and $f(u_2)=g(u_2)$.
By the property of a homomorphism, we know that $f(u_1)\cdot f(u_2)\diamond_H f(v_1)\cdot f(v_2)$
and that $g(u_1)\cdot g(u_2)\diamond_H g(v_1)\cdot g(v_2)$.
Since $f(u_1)\cdot f(u_2)=g(u_1)\cdot g(u_2)$,
it follows by determinism of permutation in $(H,\diamond_H)$
that $f(v_1)\cdot f(v_2)=g(v_1)\cdot g(v_2)$ and thus that $f(v_1)=g(v_1)$ and $f(v_2)=g(v_2)$.
From this, we conclude that $v_1$ and $v_2$ are edges in the subgraph $E$ of the graph $G$
and we conclude that the path $q=v_1\cdot v_2$ is a path of $E$
which defines a permutation tile $p\diamond_E q$.
\end{proof}

%
%
%
%

\section{An algebraic presentation\\
of the Gray tensor product}\label{section/gray-tensor}
%
We suppose given a pair of small 2-categories $\Acategory$, $\Bcategory$
and define their Gray tensor product as the 2-category $\Acategory\graytensor\Bcategory$ described by generators and relations, in the following way.
The construction is somewhat part of the folklore and may be found 
for instance in~\cite{gurski-book-2013}.

\medbreak
\noindent
\textbf{Definition of the 0-cells.\,\,}
The objects of the 2-category $\Acategory\graytensor\Bcategory$ 
are the pairs $(A,B)$ consisting of an object $A$ of $\Acategory$
and an object $B$ of $\Bcategory$.
\medbreak
\noindent
\textbf{Definition of the 1-cells by generators and relations.\,\,}
Its 1-cells
are generated by two families of 1-cells
\begin{equation}\label{equation/gray-cells}
\begin{small}
\begin{array}{cc}
\begin{tikzcd}[column sep = 1.5em, row sep=2em]
(A,B) 
\arrow[rr,"{(\fsubA,B)}"]
&&
(A',B)
\end{tikzcd}
&
\begin{tikzcd}[column sep = 1.5em, row sep=2em]
(A,B) 
\arrow[rr,"{(A,\fsubB)}"]
&&
(A,B')
\end{tikzcd}
\end{array}
\end{small}
\end{equation}
indexed by the 1-cells $\fsubA:A\to A'$ of the 2-category $\Acategory$
and by the 1-cells $\fsubB:B\to B'$ of the 2-category $\Bcategory$, respectively.
The 1-cells of the freely generated category are then quotiented by the relations:
\begin{equation}
\fbox{
$\begin{array}{ccc}
&
(\fsubA,B)\horcomp(\fsubAprime,B)=(\fsubA\horcomp\fsubAprime,B)
&
\\
&
(\horid{A},B)=\horid{(A,B)}
&
\\
&
(A,\fsubB)\horcomp(A,\fsubBprime)=(A,\fsubB\horcomp\fsubBprime)
&
\\
&
(A,\horid{B})=\horid{(A,B)}
&
\end{array}$}
\end{equation}
which can be also described as the four equations:
$$
\begin{array}{c}
\begin{tikzcd}[column sep = 2em]
(A,B)\arrow[rr,"{(\fsubA,B)}"]
&&
(A',B)\arrow[rr,"{(\fsubAprime,B)}"]
&&
(A'',B)
\end{tikzcd}
\\
=
\\
\begin{tikzcd}[column sep = 3em]
(A,B)\arrow[rr,"{(\fsubAprime\horcomp\fsubA,B)}"]
&&
(A'',B)
\end{tikzcd}
\end{array}
$$
$$
\begin{array}{ccc}
& \begin{tikzcd}[column sep = 2.5em]
(A,B)\arrow[rr,"{(\horid{A},B)}"]
&&
(A,B)
\end{tikzcd} &
\\
& = &
\\
&
\begin{tikzcd}[column sep = 2.5em]
(A,B)\arrow[rr,"{\horid{(A,B)}}"]
&&
(A,B)
\end{tikzcd}
&
\end{array}
$$
$$
\begin{array}{c}
\begin{tikzcd}[column sep = 1.5em]
(A,B)\arrow[rr,"{(A,\fsubB)}"]
&&
(A,B')\arrow[rr,"{(A,\fsubBprime)}"]
&&
(A,B'')
\end{tikzcd}
\\
\quad = \quad
\\
\begin{tikzcd}[column sep = 2.5em]
(A,B)\arrow[rr,"{(A,\fsubBprime\horcomp\fsubB)}"]
&&
(A,B'')
\end{tikzcd}
\end{array}
$$
%
$$
\begin{array}{c}
\begin{tikzcd}[column sep = 2.5em]
(A,B)\arrow[rr,"{(A,\horid{B})}"]
&&
(A,B)
\end{tikzcd}
\\
\quad = \quad
\\
\begin{tikzcd}[column sep = 2.5em]
(A,B)\arrow[rr,"{\horid{(A,B)}}"]
&&
(A,B)
\end{tikzcd}
\end{array}
$$
\medbreak
\noindent
\textbf{Definition of the 2-cells by generators and relations.\,\,}
Its 2-cells are generated by four families of generators: the two families of 2-cells
\begin{equation}
\begin{tikzcd}[column sep = 2.8em, row sep=2em]
(A,B) 
\arrow[rr,"{(\fsubA,B)}", bend left, ""{yshift=-.2em,swap,name=source}] 
\arrow[rr,"{(\fsubAprime,B)}"{swap}, bend right, ""{yshift=.2em,name=target}] 
&&
(A',B)
\arrow[from=source, to=target, "{(\alpha,B)}", double,-implies]
\end{tikzcd}
\end{equation}
\begin{equation}
\begin{tikzcd}[column sep = 2.8em, row sep=2em]
(A,B) 
\arrow[rr,"{(A,\fsubB)}", bend left, ""{yshift=-.2em,swap,name=source}] 
\arrow[rr,"{(A,\fsubBprime)}"{swap}, bend right, ""{yshift=.2em,name=target}] 
&&
(A,B')
\arrow[from=source, to=target, "{(A,\beta)}", double,-implies]
\end{tikzcd}
\end{equation}
indexed by the 2-cells $\alpha:\fsubA\Rightarrow\fsubAprime:A\to A'$ of the 2-category $\Acategory$
and by the 2-cells $\beta:\fsubB\Rightarrow\fsubBprime:B\to B'$ of the 2-category $\Bcategory$, respectively ;
and the two families of 2-cells called Gray commutation:
\begin{equation}
\begin{tikzcd}[column sep = 2.8em, row sep=2em]
(A,B) 
\arrow[dd,"{(A,\fsubB)}"{swap}]
\arrow[rr,"{(\fsubA,B)}", ""{yshift=-.4em,swap,name=source}] 
&&
(A',B)
\arrow[dd,"{(A',\fsubB)}"]
\\
\\
(A,B')
\arrow[rr,"{(\fsubA,B')}"{swap},""{yshift=.4em,name=target}]
&&
(A',B')
\arrow[from=source, to=target, "{\graytile{\fsubA}{\fsubB}}", double,-implies]
\end{tikzcd}
\end{equation}
\begin{equation}
\begin{tikzcd}[column sep = 2.8em, row sep=2em]
(A,B) 
\arrow[dd,"{(A,\fsubB)}"{swap}, ""{xshift=.2em,name=source}]
\arrow[rr,"{(\fsubA,B)}"] 
&&
(A',B)
\arrow[dd,"{(A',\fsubB)}",""{xshift=-.2em,swap,name=target}]
\\
\\
(A,B')
\arrow[rr,"{(\fsubA,B')}"{swap}]
&&
(A',B')
\arrow[from=source, to=target, "{\graytiletilde{\fsubA}{\fsubB}}", double,-implies]
\end{tikzcd}
\end{equation}
indexed by the 1-cells $\fsubA:A\to A'$ of the 2-category $\Acategory$
and the 1-cells $\fsubB:B\to B'$ of the 2-category $\Bcategory$.
The 2-cells of the 2-category freely generated by these generators 
are then quotiented by four families of relations.
The first family of relations [a] regulates how the pairing interacts with vertical composition,
while the three remaining families of relations [b-c-d] supervise the Gray commutation.

%
%
%
%
\medbreak
\noindent
\textbf{[a] Functoriality with respect to vertical composition:}
the first group of four relations enforces that the operations $\fsubB\mapsto(A,\fsubB)$ and $\fsubA\mapsto(\fsubA,B)$
are functorial with respect to vertical composition, for every object $A$ of the 2-category $\Acategory$
and every object $B$ of the 2-category $\Bcategory$:
\begin{equation}
\fbox{
$\begin{array}{ccc}
&
(\alpha,B)\vertcomp(\alpha',B) = (\alpha\vertcomp\alpha',B)
&
\\
&
(\vertid{\fsubA},B) = \vertid{(\fsubA,B)}
&
\\
&
(A,\beta)\vertcomp(A,\beta') = (\beta\vertcomp\beta',B)
&
\\
&
(A,\vertid{\fsubB}) = \vertid{(A,\fsubB)}
&
\end{array}$
}
\end{equation}
The four equations between 2-cells are nicely depicted using pasting diagrams:
$$
\begin{footnotesize}
\begin{array}{ccc}
\begin{tikzcd}[column sep = 3em, row sep=2.4em]
(A,B) 
\arrow[rr,"{(\fsubA,B)}", bend left = 60 , ""{yshift=-.2em,swap,name=uppersource}] 
\arrow[rr,"{(\fsubAprime,B)}"{description}, ""{yshift=.4em,name=uppertarget}, ""{swap,yshift=-.4em,name=lowersource}]
\arrow[rr,"{(\fsubAsecond,B)}"{swap}, bend right = 60, ""{yshift=.2em,name=lowertarget}] 
&&
(A',B)
\arrow[from=uppersource, to=uppertarget, "{(\alpha,B)}", double,-implies]
\arrow[from=lowersource, to=lowertarget, "{(\alpha',B)}", double,-implies]
\end{tikzcd}
&
=
&
\begin{tikzcd}[column sep = 3em, row sep=2.4em]
(A,B) 
\arrow[rr,"{(\fsubA,B)}", bend left = 60 , ""{yshift=-.2em,swap,name=source}] 
\arrow[rr,"{(\fsubAsecond,B)}"{swap}, bend right = 60, ""{yshift=.2em,name=target}] 
&&
(A',B)
\arrow[from=source, to=target, "{(\alpha\vertcomp\alpha',B)}"{description}, double,-implies]
\end{tikzcd}
\end{array}
\end{footnotesize}
$$
$$
\begin{footnotesize}
\begin{array}{ccc}
\begin{tikzcd}[column sep = 3em, row sep=2.4em]
(A,B) 
\arrow[rr,"{(\fsubA,B)}", bend left = 60 , ""{yshift=-.2em,swap,name=source}] 
\arrow[rr,"{(\fsubA,B)}"{swap}, bend right = 60, ""{yshift=.2em,name=target}] 
&&
(A',B)
\arrow[from=source, to=target, "{(\vertid{\fsubA},B)}"{description}, double,-implies]
\end{tikzcd}
&
=
&
\begin{tikzcd}[column sep = 3em, row sep=2.4em]
(A,B) 
\arrow[rr,"{(\fsubA,B)}", bend left = 60 , ""{yshift=-.2em,swap,name=source}] 
\arrow[rr,"{(\fsubA,B)}"{swap}, bend right = 60, ""{yshift=.2em,name=target}] 
&&
(A',B)
\arrow[from=source, to=target, "{\vertid{(\fsubA,B)}}", double,-implies]
\end{tikzcd}\end{array}
\end{footnotesize}
$$
%
%
%
%
$$
\begin{footnotesize}
\begin{array}{ccc}
\begin{tikzcd}[column sep = 3em, row sep=2.4em]
(A,B) 
\arrow[rr,"{(A,\fsubB)}", bend left = 60 , ""{yshift=-.2em,swap,name=uppersource}] 
\arrow[rr,"{(A,\fsubBprime)}"{description}, ""{yshift=.4em,name=uppertarget}, ""{swap,yshift=-.4em,name=lowersource}]
\arrow[rr,"{(A,\fsubBsecond)}"{swap}, bend right = 60, ""{yshift=.2em,name=lowertarget}] 
&&
(A,B')
\arrow[from=uppersource, to=uppertarget, "{(A,\beta)}", double,-implies]
\arrow[from=lowersource, to=lowertarget, "{(A,\beta')}", double,-implies]
\end{tikzcd}
&
=
&
\begin{tikzcd}[column sep = 3em, row sep=2.4em]
(A,B) 
\arrow[rr,"{(A,\fsubB)}", bend left = 60 , ""{yshift=-.2em,swap,name=source}] 
\arrow[rr,"{(A,\fsubBsecond)}"{swap}, bend right = 60, ""{yshift=.2em,name=target}] 
&&
(A,B')
\arrow[from=source, to=target, "{(A,\beta\vertcomp\beta')}"{description}, double,-implies]
\end{tikzcd}
\end{array}
\end{footnotesize}
$$
$$
\begin{footnotesize}
\begin{array}{ccc}
\begin{tikzcd}[column sep = 3em, row sep=2.4em]
(A,B) 
\arrow[rr,"{(\fsubA,B)}", bend left = 60 , ""{yshift=-.2em,swap,name=source}] 
\arrow[rr,"{(\fsubA,B)}"{swap}, bend right = 60, ""{yshift=.2em,name=target}] 
&&
(A',B)
\arrow[from=source, to=target, "{(\vertid{\fsubA},B)}"{description}, double,-implies]
\end{tikzcd}
&
=
&
\begin{tikzcd}[column sep = 3em, row sep=2.4em]
(A,B) 
\arrow[rr,"{(\fsubA,B)}", bend left = 60 , ""{yshift=-.2em,swap,name=source}] 
\arrow[rr,"{(\fsubA,B)}"{swap}, bend right = 60, ""{yshift=.2em,name=target}] 
&&
(A',B)
\arrow[from=source, to=target, "{\vertid{(\fsubA,B)}}", double,-implies]
\end{tikzcd}\end{array}
\end{footnotesize}
$$
%
%
%
%
\medbreak
\noindent
\textbf{[b] Invertibility of the Gray commutation:}
the combination of two relations below indicates that the 2-cell $\graytile{\fsubA}{\fsubB}$ is invertible
with the 2-cell $\graytiletilde{\fsubA}{\fsubB}$ as vertical inverse:
\begin{equation}
\fbox{
$\begin{array}{ccc}
&
\graytile{\fsubA}{\fsubB}
\vertcomp
\graytiletilde{\fsubA}{\fsubB} = \vertid{(\fsubA,B)\horcomp(A',\fsubB)}
&
\\
&
\graytiletilde{\fsubA}{\fsubB}
\vertcomp
\graytile{\fsubA}{\fsubB} = \vertid{(A,\fsubB)\horcomp(\fsubA,B')}
&
\\
\end{array}$
}
\end{equation}
%
%
%
%
\medbreak
\noindent
\textbf{[c] Naturality of the Gray commutation:}
the group of four relations below enforces that the Gray commutation 
is natural with respect to vertical composition,
both on the left and on the right side of the Gray tensor product:
\begin{equation}
\fbox{
$\begin{array}{ccc}
&
(\alpha,B)\vertcomp\graytile{\fsubAprime}{\fsubB}
=
\graytile{\fsubA}{\fsubB}\vertcomp(\alpha,B')
&
\\
&
(A',\beta)\vertcomp\graytile{\fsubA}{\fsubBprime}
=
\graytile{\fsubA}{\fsubB}\vertcomp(A,\beta)
&
\end{array}$
}
\end{equation}
The two equations between 2-cells are nicely expressed
in the diagrammatic language of pasting diagrams.
The first equation says that the pasting diagram below
$$
\begin{tikzcd}[column sep = 2.5em, row sep=2em]
(A,B) 
\arrow[rr,"{(\fsubA,B)}", bend left = 60 , ""{yshift=-.2em,swap,name=uppersource}] 
\arrow[rr,"{(\fsubAprime,B)}"{description}, bend right = 60, ""{yshift=.4em,name=uppertarget}, ""{swap,yshift=-.4em,name=lowersource}] 
\arrow[dd,"{(A,\fsubB)}"{swap}] 
&&
(A',B)
\arrow[dd,"{(A',\fsubB)}"] 
\\
\\
(A,B') 
\arrow[rr,"{(\fsubAprime,B')}"{swap}, bend right = 60, ""{yshift=0em,name=lowertarget}] 
&&
(A',B')
\arrow[from=uppersource, to=uppertarget, "{(\alpha,B)}"{description}, double,-implies]
\arrow[from=lowersource, to=lowertarget, "{\graytile{\fsubAprime}{\fsubB}}"{description}, double,-implies]
\end{tikzcd}
$$
should be identified with the pasting diagram
$$
\begin{tikzcd}[column sep = 2.5em, row sep=2em]
(A,B) 
\arrow[rr,"{(\fsubA,B)}", bend left = 60 , ""{yshift=-.2em,swap,name=uppersource}] 
\arrow[dd,"{(A,\fsubB)}"{swap}] 
&&
(A',B)
\arrow[dd,"{(A',\fsubB)}"] 
\\
\\
(A,B') 
\arrow[rr,"{(\fsubA,B')}"{description}, bend left = 60 , ""{yshift=-.4em,swap,name=lowersource}, ""{yshift=.4em,name=uppertarget}] 
\arrow[rr,"{(\fsubAprime,B')}"{swap}, bend right = 60, ""{yshift=0em,name=lowertarget}] 
&&
(A',B')
\arrow[from=uppersource, to=uppertarget, "{\graytile{\fsubA}{\fsubB}}"{description}, double,-implies]
\arrow[from=lowersource, to=lowertarget, "{(\alpha,B')}"{description}, double,-implies]
\end{tikzcd}
$$
The second equation says that the pasting diagram below
$$
\begin{tikzcd}[column sep = 1.5em, row sep=3em]
(A,B) 
\arrow[dd,"{(A,\fsubBprime)}"{swap},bend right = 60,""{name=lefttarget}]
\arrow[rr,"{(\fsubA,B)}"] 
&&
(A',B) 
\arrow[dd,"{(A',\fsubB)}", bend left = 60, ""{xshift=0em,swap,name=rightsource}] 
\arrow[dd,"{(A',\fsubBprime)}"{description}, bend right = 60, ""{xshift=-1.2em,swap,name=leftsource}, ""{xshift=1em,name=righttarget}] 
\\
\\
(A,B')
\arrow[rr,"{(\fsubA,B')}"{swap},""{yshift=.2em,name=target}]
&&
(A',B')
\arrow[from=leftsource, to=lefttarget, "{\graytile{\fsubA}{\fsubBprime}}"{swap}, double,-implies]
\arrow[from=rightsource, to=righttarget, "{(A',\beta)}"{swap}, double,-implies]
\end{tikzcd}
$$
should be identified with the pasting diagram
$$
\begin{tikzcd}[column sep = 1.5em, row sep=3em]
(A,B) 
\arrow[dd,"{(A,\fsubBprime)}"{swap},bend right = 60,""{name=lefttarget}]
\arrow[dd,"{(A,\fsubB)}"{description}, bend left = 60, ""{xshift=-.9em,swap,name=leftsource}, ""{xshift=.9em,name=righttarget}] 
\arrow[rr,"{(\fsubA,B)}"] 
&&
(A',B) 
\arrow[dd,"{(A',\fsubB)}", bend left = 60, ""{xshift=0em,swap,name=rightsource}] 
\\
\\
(A,B')
\arrow[rr,"{(\fsubA,B')}"{swap},""{yshift=.2em,name=target}]
&&
(A',B')
\arrow[from=rightsource, to=righttarget, "{\graytile{\fsubA}{\fsubB}}"{swap}, double,-implies]
\arrow[from=leftsource, to=lefttarget, "{(A,\beta)}"{swap}, double,-implies]
\end{tikzcd}
$$
%
%
%
%
\medbreak
\noindent
\textbf{[d] Coherence of the Gray commutation:}
the group of four relations below should be understood 
as coherence properties of the Gray commutation,
both on the left and on the right side of the Gray tensor product:
\begin{equation}
\fbox{
$\begin{array}{c}
((\fsubA,B)\horcomp\graytile{\fsubAprime}{\fsubB})
\vertcomp
(\graytile{\fsubA}{\fsubB}\horcomp(\fsubAprime,B'))
=
\graytile{\fsubA\horcomp\fsubAprime}{\fsubB}
\\
\graytile{\horid{A}}{\fsubB}
=
\vertid{(A,\fsubB)}
\\
(\graytile{\fsubA}{\fsubB}\horcomp(A',\fsubBprime))
\vertcomp
((A,\fsubB)\horcomp\graytile{\fsubA}{\fsubB})
=
\graytile{\fsubA}{\fsubB\horcomp\fsubBprime}
\\
\graytile{\fsubA}{\horid{B}}
=
\vertid{(\fsubA,B)}
\end{array}$
}
\end{equation}
The four equations between 2-cells are nicely expressed
in the diagrammatic language of pasting diagrams.
The first equation says that the pasting diagram
$$
\begin{tikzcd}[column sep = 2em, row sep=2.2em]
(A,B) 
\arrow[dd,"{(A,\fsubB)}"{swap}]
\arrow[rr,"{(\fsubA,B)}", ""{yshift=-.2em,swap,name=leftsource}] 
&&
(A',B)
\arrow[rr,"{(\fsubAprime,B)}", ""{yshift=-.2em,swap,name=rightsource}] 
\arrow[dd,"{(A',\fsubB)}"{description}]
&&
(A'',B)
\arrow[dd,"{(A'',\fsubB)}"]
\\
\\
(A,B')
\arrow[rr,"{(\fsubA,B')}"{swap},""{yshift=.2em,name=lefttarget}]
&&
(A',B')
\arrow[rr,"{(\fsubAprime,B')}"{swap},""{yshift=.2em,name=righttarget}]
&&
(A'',B')
\arrow[from=leftsource, to=lefttarget, "{\graytile{\fsubA}{\fsubB}}", double,-implies]
\arrow[from=rightsource, to=righttarget, "{\graytile{\fsubAprime}{\fsubB}}", double,-implies]
\end{tikzcd}
$$
coincides with the 2-cell below:
$$
\begin{tikzcd}[column sep = 2em, row sep=2.2em]
(A,B) 
\arrow[dd,"{(A,\fsubB)}"{swap}]
\arrow[rr,"{(\fsubA\horcomp\fsubAprime,B)}", ""{yshift=-.2em,swap,name=source}] 
&&
(A',B)
\arrow[dd,"{(A',\fsubB)}"]
\\
\\
(A'',B)
\arrow[rr,"{(\fsubA\horcomp\fsubAprime,B)}"{swap},""{yshift=.2em,name=target}]
&&
(A'',B')
\arrow[from=source, to=target, "{\graytile{\fsubA\horcomp\fsubAprime}{\fsubB}}", double,-implies]
\end{tikzcd}
$$
%
%
%
%
The second equation indicates that the 2-cell below
$$
\begin{tikzcd}[column sep = 2em, row sep=2em]
(A,B) 
\arrow[dd,"{(A,\fsubB)}"{swap}]
\arrow[rr,"{(\horid{A},B)}", ""{yshift=-.2em,swap,name=source}] 
&&
(A,B)
\arrow[dd,"{(A,\fsubB)}"]
\\
\\
(A,B')
\arrow[rr,"{(\horid{A},B')}"{swap},""{yshift=.2em,name=target}]
&&
(A,B')
\arrow[from=source, to=target, "{\graytile{\horid{A}}{\fsubB}}", double,-implies]
\end{tikzcd}
$$
coincides with the identity 2-cell
$$
\begin{tikzcd}[column sep = 2em, row sep=2em]
(A,B) 
\arrow[dd,"{(A,\fsubB)}"{swap}]
\arrow[rr,"{\horid{(A,B)}}", ""{yshift=-.2em,swap,name=source}] 
&&
(A,B)
\arrow[dd,"{(A,\fsubB)}"]
\\
\\
(A,B')
\arrow[rr,"{\horid{(A,B')}}"{swap},""{yshift=.2em,name=target}]
&&
(A,B')
\arrow[from=source, to=target, "{\vertid{(A,b)}}", double,-implies]
\end{tikzcd}
$$

%
%
%
%
The third equation indicates that the pasting diagram
$$
\begin{tikzcd}[column sep = 2em, row sep=2em]
(A,B) 
\arrow[dd,"{(A,\fsubB)}"{swap}]
\arrow[rr,"{(\fsubA,B)}", ""{yshift=-.2em,swap,name=uppersource}] 
&&
(A',B)
\arrow[dd,"{(A',\fsubB)}"]
\\
\\
(A,B')
\arrow[dd,"{(A,\fsubBprime)}"{swap}]
\arrow[rr,"{(\fsubA,B')}"{description},""{yshift=.4em,name=uppertarget},""{yshift=-.4em,swap,name=lowersource}]
&&
(A',B')
\arrow[dd,"{(A',\fsubBprime)}"]
\\
\\
(A,B'')
\arrow[rr,"{(\fsubA,B'')}"{swap},""{yshift=.2em,name=lowertarget}]
&&
(A',B'')
\arrow[from=uppersource, to=uppertarget, "{\graytile{\fsubA}{\fsubB}}", double,-implies]
\arrow[from=lowersource, to=lowertarget, "{\graytile{\fsubA}{\fsubBprime}}", double,-implies]
\end{tikzcd}
$$
coincides with the 2-cell below:
$$
\begin{tikzcd}[column sep = 2em, row sep=5em]
(A,B) 
\arrow[dd,"{(A,\fsubB\horcomp\fsubBprime)}"{swap}]
\arrow[rr,"{(\fsubA,B)}", ""{yshift=-.2em,swap,name=source}] 
&&
(A',B)
\arrow[dd,"{(A',\fsubB\horcomp\fsubBprime)}"]
\\
\\
(A,B'')
\arrow[rr,"{(\fsubA,B')}"{swap},""{yshift=.2em,name=target}]
&&
(A',B'')
\arrow[from=source, to=target, "{\graytile{\fsubA}{\fsubB\horcomp\fsubBprime}}", double,-implies]
\end{tikzcd}
$$
The fourth equation indicates that the 2-cell
$$
\begin{tikzcd}[column sep = 2em, row sep=2em]
(A,B) 
\arrow[dd,"{(A,\horid{B})}"{swap}]
\arrow[rr,"{(\fsubA,B)}", ""{yshift=-.2em,swap,name=source}] 
&&
(A',B)
\arrow[dd,"{(A',\horid{B})}"]
\\
\\
(A,B)
\arrow[rr,"{(\fsubA,B)}"{swap},""{yshift=.2em,name=target}]
&&
(A',B)
\arrow[from=source, to=target, "{\graytile{\fsubA}{\horid{B}}}", double,-implies]
\end{tikzcd}
$$
coincides with the identity 2-cell:
$$
\begin{tikzcd}[column sep = 2em, row sep=2em]
(A,B) 
\arrow[dd,"{\horid{(A,B)}}"{swap}]
\arrow[rr,"{(\fsubA,B)}", ""{yshift=-.2em,swap,name=source}] 
&&
(A',B)
\arrow[dd,"{\horid{(A',B)}}"]
\\
\\
(A,B)
\arrow[rr,"{(\fsubA,B)}"{swap},""{yshift=.2em,name=target}]
&&
(A',B)
\arrow[from=source, to=target, "{\vertid{(\fsubA,B)}}", double,-implies]
\end{tikzcd}
$$

\section{Proof of the preservation of coreflexive equalizers
(Prop.~\ref{proposition/preservation-of-equalizers}
in \S\ref{section/gray-preserves-equalizers})}
\label{appendix/proof-of-preservation-of-equalizers}
The preservation of coreflexive equalizers componentwise
is proved in three combined steps.
%
First, by universality of the cartesian product in $\TwoCat$,
we know that the diagram
$$
\begin{tikzcd}[column sep = 2em]
\Ctwocategory\times\Etwocategory
\arrow[rr,dashed,"{\Ccategory\times E}"]
&&
\Ctwocategory\times\Atwocategory
\arrow[rr,"{\Ccategory\times F}",yshift=.2em]
\arrow[rr,"{\Ccategory\times G}"{swap},yshift=-.2em]
&&
\Ctwocategory\times\Btwocategory
\end{tikzcd}
$$
exhibits $\Ctwocategory\times\Etwocategory$ as a coreflexive equalizer
of the pair of 2-functors $\Ccategory\times F$ and $\Ccategory\times G$.
Then, we observe that the diagram
$$
\begin{tikzcd}[column sep = 2em]
{\underlyingcat{\Etwocategory}}
\arrow[rr,dashed,"{\underlyingcat{E}}"]
&&
{\underlyingcat{\Atwocategory}}
\arrow[rr,"{\underlyingcat{F}}",yshift=.2em]
\arrow[rr,"{\underlyingcat{G}}"{swap},yshift=-.2em]
&&
{\underlyingcat{\Btwocategory}}
\end{tikzcd}
$$
is a coreflexive equalizer in $\Cat$.
A simple combinatorial exercise establishes that the functor
$$
\begin{tikzcd}[column sep = 1em]
{{\Dcategory\funnytensor -}
\quad = \quad
\Xcategory\mapsto\Dcategory\funnytensor\Xcategory}
\quad : \quad
\Cat \arrow[rr] && \Cat
\end{tikzcd}
$$
obtained by tensoring using the "funny tensor product"
a category $\Xcategory$ with a fixed category $\Dcategory$ 
preserves coreflexive equalizers.
From this follows that the diagram below 
$$
\begin{tikzcd}[column sep = 2em]
{\underlyingcat{\Ctwocategory}\funnytensor\underlyingcat{\Etwocategory}}
\arrow[rr,dashed,"{\underlyingcat{\Ccategory}\funnytensor \underlyingcat{E}}"]
&&
{\underlyingcat{\Ctwocategory}\funnytensor\underlyingcat{\Atwocategory}}
\arrow[rr,"{\underlyingcat{\Ctwocategory}\funnytensor \underlyingcat{F}}",yshift=.2em]
\arrow[rr,"{\underlyingcat{\Ctwocategory}\funnytensor \underlyingcat{G}}"{swap},yshift=-.2em]
&&
{\underlyingcat{\Ctwocategory}\funnytensor\underlyingcat{\Btwocategory}}
\end{tikzcd}
$$
exhibits the category ${\underlyingcat{\Ctwocategory}\funnytensor\underlyingcat{\Etwocategory}}$
as the coreflexive equalizer in $\Cat$ of the pair of functors ${\underlyingcat{\Ctwocategory}\funnytensor \underlyingcat{F}}$
and ${\underlyingcat{\Ctwocategory}\funnytensor \underlyingcat{G}}$.

In the third and last step, we establish
that (\ref{equation/equalizer-between-gray-tensor-products}) 
is an equalizer in $\TwoCat$ by observing that given a 2-category $\Xcategory$,
a 2-functor $$H:\Xcategory\to\Ctwocategory\graytensor\Etwocategory$$
is the same thing as a pair $(\varphi,\psi)$ consisting of a functor 
$${\varphi}:\underlyingcat{\Xcategory}\to\underlyingcat{\Ctwocategory}\funnytensor\underlyingcat{\Etwocategory}$$
describing the 2-functor $H$ on objects and morphisms,
and of a 2-functor 
$${\psi}:\Xcategory\to\Ctwocategory\times\Etwocategory$$
describing the 2-functor~$H$ on cells, such that 
the diagram below
$$
\begin{tikzcd}[column sep = 2.8em, row sep=.8em]
{\underlyingcat{\Xcategory}}
\arrow[rr,"{\underlyingcat{\psi}}"]\arrow[dd,"{\varphi}"{swap}]
&&
{\underlyingcat{\Ctwocategory\times\Etwocategory}}
\arrow[dd,"{iso}"]
\\
\\
{\underlyingcat{\Ctwocategory}\funnytensor\underlyingcat{\Etwocategory}}
\arrow[rr,"{canonical}"]
&&
{\underlyingcat{\Ctwocategory}\times\underlyingcat{\Etwocategory}}
\end{tikzcd}
$$
commutes in $\Cat$.
Using the fact that $\underlyingcat{\Ctwocategory}\funnytensor\underlyingcat{\Etwocategory}$ 
and $\Ctwocategory\times\Etwocategory$ are equalizers in $\Cat$ and $\TwoCat$, respectively,
we conclude that it is the same thing as pair $(\varphi',\psi')$
consisting of a functor 
$${\varphi'}:\underlyingcat{\Xcategory}\to\underlyingcat{\Ctwocategory}\funnytensor\underlyingcat{\Atwocategory}$$
and of a 2-functor
$${\psi}:\Xcategory\to\Ctwocategory\times\Atwocategory$$
such that
$$
\begin{tikzcd}[column sep = 2.8em, row sep=.8em]
{\underlyingcat{\Xcategory}}
\arrow[rr,"{\underlyingcat{\psi'}}"]\arrow[dd,"{\varphi'}"{swap}]
&&
{\underlyingcat{\Ctwocategory\times\Atwocategory}}
\arrow[dd,"{iso}"]
\\
\\
{\underlyingcat{\Ctwocategory}\funnytensor\underlyingcat{\Atwocategory}}
\arrow[rr,"{canonical}"]
&&
{\underlyingcat{\Ctwocategory}\times\underlyingcat{\Atwocategory}}
\end{tikzcd}
$$
commutes, and making the expected diagrams commute between $F$ and $G$.
%
%
%
We conclude that the Gray tensor product preserves coreflexive equalizers componentwise.

\section{A detailed description of the functor\\
from asynchronous graphs to 2-categories}
\label{section/construction-of-the-two-category}
We give a detailed description of the functor
$$\asynchtwocat{-} \,\,\, : \,\,\, (\Asynch,\shuffletensor,\textbf{I}) \,\,\, \longrightarrow \,\,\, (\TwoCat,\graytensor,\textbf{I})$$
which associates to every asynchronous graph $(G,\diamond)$
a 2-category $\asynchtwocat{G,\diamond}$.
As it is well-known, every graph $G$ induces a free category $\asynchtwocat{G}$
whose objects are the vertices of $G$ and whose morphisms are the paths of $G$.
In all this section, we suppose given an asynchronous graph $(G,\diamond)$
and describe how the construction $G\mapsto\asynchtwocat{G}$ may be extended
in order to associate a 2-category $\asynchtwocat{G,\diamond}$ to the asynchronous graph $(G,\diamond)$.
The starting point of the construction is as expected:
the 2-category $\asynchtwocat{G,\diamond}$ has the category $\asynchtwocat{G}$
generated by $G$ as underlying category.
Our main purpose in this section is thus to define 
the 2-cells of the 2-category $\asynchtwocat{G,\diamond}$.
This is done in two steps, as follows.

\medbreak
\subsection{From asynchronous graphs to sesquicategories}\label{section/asynchronous-graph/from-ag-to-sc}
A \emph{permutation step} $\gamma=(h_1,p,q,h_2)$
is defined as a quadruple consisting
of a pair of paths $p,q:P\transitionpath Q$ involved in a permutation tile $p\diamond q$
together with two paths $h_1:M\transitionpath P$ and $h_2:Q\transitionpath N$.
%
We use the notation 
\begin{equation}\label{equation/permutation-step}
\begin{tikzcd}[column sep =1.2em]
\gamma = h_1\cdot(p,q)\cdot h_2
\quad : \quad
f
\arrow[rr,-implies,double,"{}"]
&&
g
\quad : \quad
x
\arrow[r]
&
y
\end{tikzcd}
\end{equation}
for such a permutation step $\gamma=(h_1,p,q,h_2)$,
where $f,g: x\transitionpath y$
are the two paths obtained by concatenation
$f=h_1\cdot p\cdot h_2$ and $g=h_1\cdot q\cdot h_2$
in the graph $G$.
A \emph{permutation sequence} $\varphi:f\Rightarrow g$ 
is defined as a sequence of permutation steps,
of the following form:
\begin{equation}\label{equation/permutation-sequence}
\begin{tikzcd}[column sep=1em]
f=f_1
\arrow[rr,-implies,double,"{\gamma_1}"]
&&
f_2
\arrow[rr,-implies,double,"{\gamma_2}"]
&&
\cdots
\arrow[rr,-implies,double,"{\gamma_n}"]
&&
f_{n+1}=g
\quad
:
\quad
x
\arrow[r]
&
y
\end{tikzcd}
\end{equation}
Every pair of vertices $x$, $y$
of the asynchronous graph $(G,\diamond)$ induces a category 
noted $\permutationcategory{G,\diamond}{x}{y}$
whose objects are the paths $f:x\transitionpath y$
and whose morphisms are the permutation sequences
$\varphi:f\Rightarrow g$.
Composition in $\permutationcategory{G,\diamond}{x}{y}$
is called \emph{vertical composition} 
and defined as concatenation of permutation sequences.
The vertical composition of the permutation sequences
\begin{equation}\label{equation/permutation-sequence-bis}
\begin{array}{c}
\begin{tikzcd}[column sep =1.6em]
\varphi
\quad : \quad
f
\arrow[rr,-implies,double,"{}"]
&&
g
\quad : \quad
x
\arrow[r]
&
y
\end{tikzcd}
\\
\begin{tikzcd}[column sep =1.6em]
\psi
\quad : \quad
g
\arrow[rr,-implies,double,"{}"]
&&
h
\quad : \quad
x
\arrow[r]
&
y
\end{tikzcd}
\end{array}
\end{equation}
induces the permutation sequence noted
\begin{equation}\label{equation/permutation-sequence-quatro}
\begin{tikzcd}[column sep =1.6em]
\varphi\vertcomp\psi
\quad : \quad
f
\arrow[rr,-implies,double,"{}"]
&&
h
\quad : \quad
x
\arrow[r]
&
y
\end{tikzcd}
\end{equation}
and defined by concatenation of $\varphi$ and $\psi$.
It is worth observing that the category $\permutationcategory{G,\diamond}{x}{y}$ just defined
is simply the free category generated by the permutation steps $\gamma:f\Rightarrow g$ 
between paths $f,g:x\transitionpath y$.
The notation and terminology used for vertical composition is guided by the intuition 
that every permutation sequence $\varphi:f\Rightarrow g$ defines a 2-cell represented
diagramatically as
$$
\begin{tikzcd}
x\arrow[rr, bend left,"f",""{swap,name=source}]
\arrow[rr, bend right,"g"{swap},""{name=target}] && y
\arrow[from=source, to=target, "{\varphi}", double,-implies]
\end{tikzcd}
$$
and that vertical composition corresponds to vertical diagram pasting:
$$
\begin{array}{ccc}
\begin{tikzcd}
x\arrow[rr, bend left = 60,"f",""{swap,name=source}]
\arrow[rr, bend right = 60,"h"{swap},""{name=target}] 
&& y
\arrow[from=source, to=target, "{\varphi\vertcomp\psi}", double,-implies]
\end{tikzcd}
& \quad = \quad &
\begin{tikzcd}
x\arrow[rr, bend left = 60,"f",""{swap,name=uppersource}]
\arrow[rr, "g"{description},""{name=uppertarget},""{swap,name=lowersource}] 
\arrow[rr, bend right = 60,"h"{swap},""{name=lowertarget}] 
&& y
\arrow[from=uppersource, to=uppertarget, "{\varphi}", double,-implies]
\arrow[from=lowersource, to=lowertarget, "{\psi}", double,-implies]
\end{tikzcd}
\end{array}
$$
In order to establish the statement, we turn our attention to the definition of horizontal composition
on paths and permutation sequences.
The first thing to observe is that every permutation step of the form (\ref{equation/permutation-step}) 
can be extended by a pair of paths $j_1:x'\transitionpath x$
and $j_2:y\transitionpath y'$
in order to obtain a permutation step
\begin{equation}\label{equation/permutation-step-bis}
\begin{tikzcd}[column sep =.5em]
j_1\horcomp\gamma\horcomp j_2 
\,\, : \,\,
j_1\cdot f \cdot j_2
\arrow[rr,-implies,double,"{}"]
&&
j_1\cdot g \cdot j_2
\,\, : \,\,
x'
\arrow[r]
&
y'
\end{tikzcd}
\end{equation}
defined as expected:
$$
j_1\horcomp\gamma\horcomp j_2 = j_1\cdot h_1\cdot (p,q) \cdot h_2\cdot j_2.
$$
As we will see, there are good reasons for using the symbol~$\horcomp$
as notation for this operation,
which we call \emph{horizontal composition.}
In order to keep our notations consistent, we also allow ourselves to notation $\horcomp$ for concatenation of path,
instead of the usual notation $p,q\mapsto p\cdot q$.
In the same way as for permutation steps, every permutation sequence
\begin{equation}\label{equation/permutation-sequence-bis}
\begin{tikzcd}[column sep=1.2em]
\varphi
\quad 
:
\quad
f
\arrow[rr,-implies,double]
&&
g
\quad
:
\quad
x
\arrow[rr]
&&
y
\end{tikzcd}
\end{equation}
and every pair of paths $j_1:x'\transitionpath x$
and $j_2:y\transitionpath y'$
induce a permutation sequence
\begin{equation}\label{equation/permutation-sequence-bis}
\begin{tikzcd}[column sep=1.4em, row sep=0em]
j_1\horcomp \varphi\horcomp j_2
\,\, 
:
\,\,
j_1\horcomp f \horcomp j_2
\arrow[rr,-implies,double]
&&
j_1\horcomp g \horcomp j_2
\\
\hspace{1.5em}
:
\,\,
x'
\arrow[r]
&
y'
\end{tikzcd}
\end{equation}
obtained by extending every permutation step $\gamma_i$
defining the permutation sequence in (\ref{equation/permutation-sequence})
by horizontal composition with $j_1:x'\transitionpath x$ and with $j_2:y\transitionpath y'$,
in the following way:
%
$$\begin{tikzcd}[column sep=3.1em, row sep=0em]
j_1\horcomp f_1\horcomp j_2
\,\,\,\,
\arrow[rr,-implies,double,"{j_1\horcomp\gamma_1\horcomp j_2}"]
&&
\,\,\,\,
j_1\horcomp f_2\horcomp j_2
\\
\hspace{5.9em}
\arrow[rr,-implies,double,"{j_1\horcomp\gamma_2\horcomp j_2}"]
&&
\hspace{2.6em}
\cdots
\hspace{2.6em}
\\
\hspace{2.2em}
\cdots
\hspace{2.2em}
\arrow[rr,-implies,double,"{j_1\horcomp\gamma_n\horcomp j_2}"]
&&
\,\,\,\,
j_1\horcomp f_n\horcomp j_2.
\end{tikzcd}
$$
The notation and terminology we have just used for horizontal composition
is justified by the idea that every permutation sequence $\varphi:f\Rightarrow g$
defines a 2-cell
$$
\begin{tikzcd}
x\arrow[rr, bend left,"f",""{swap,name=source}]
\arrow[rr, bend right,"g"{swap},""{name=target}] && y
\arrow[from=source, to=target, "{\varphi}", double,-implies]
\end{tikzcd}
$$
and that horizontal composition with $j_1:M'\transitionpath M$
and $j_2:N\transitionpath N'$ defines a \emph{whiskering} operation
on the 2-cell, whose result may be depicted as follows:
$$
\begin{tikzcd}
x'\arrow[rr,"{j_1}"]
&&
x\arrow[rr, bend left,"f",""{swap,name=source}]
\arrow[rr, bend right,"g"{swap},""{name=target}] && y
\arrow[from=source, to=target, "{\varphi}", double,-implies]
\arrow[rr,"{j_2}"]
&&
y'
\end{tikzcd}
$$
Before establishing that $\asynchtwocat{G,\diamond}$
defines a 2-category (Prop. \ref{proposition/two-category-from-asynchronous-graph}),
we observe that the whiskering operation just defined
satisfies all the properties expected of a sesquicategory.
This means that
\medbreak
\begin{proposition}\label{proposition/two-category-from-asynchronous-graph}
Every asynchronous graph $(G,\diamond)$ defines a sesquicategory $\asynchsesquicat{G,\diamond}$
with the vertices of $G$ as objects, the paths of $G$ as morphisms, 
the permutation sequences as 2-cells, 
and vertical composition $\varphi\vertcomp\psi$ 
defined by concatenation of permutation sequences.
\end{proposition}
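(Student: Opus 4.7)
The plan is to verify directly the axioms of a sesquicategory from the data already laid out in \S\ref{section/asynchronous-graph/from-ag-to-sc}. Recall that a sesquicategory consists of objects, hom-categories whose morphisms are 2-cells composed vertically, and whiskering functors $j_1 \horcomp - \horcomp j_2 : \permutationcategory{G,\diamond}{x}{y} \to \permutationcategory{G,\diamond}{x'}{y'}$ for each pair of 1-cells $j_1:x'\to x$ and $j_2:y\to y'$, satisfying compatibility with 1-cell composition and 1-cell identities, but crucially \emph{without} any interchange law being imposed between left and right whiskering. This weak structure is precisely what we need, since interchange will already fail for independent moves in different components, which is the conceptual reason for later resorting to the Gray tensor product in \S\ref{section/gray-tensor-product}.

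The underlying category of $\asynchsesquicat{G,\diamond}$ and the hom-categories $\permutationcategory{G,\diamond}{x}{y}$ have already been produced: the former is the free category on $G$, whose axioms reduce to associativity and neutrality of path concatenation; the latter is by construction the free category on the permutation steps with endpoints $x,y$, so vertical composition $\varphi\vertcomp\psi$ by concatenation, together with the empty sequence as identity 2-cell, satisfies the category axioms trivially.

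For the whiskering operations, I would first check that for fixed $j_1,j_2$ the assignment $\varphi\mapsto j_1\horcomp\varphi\horcomp j_2$ preserves vertical composition and sends empty sequences to empty sequences, hence defines a functor. This is immediate from the step-by-step definition: whiskering a concatenation of permutation sequences returns the concatenation of the whiskered sequences. I would then verify the 1-cell coherences, namely $(j_1\cdot j_1')\horcomp\varphi\horcomp(j_2'\cdot j_2)=j_1\horcomp(j_1'\horcomp\varphi\horcomp j_2')\horcomp j_2$ and $\mathrm{id}_x\horcomp\varphi\horcomp\mathrm{id}_y=\varphi$. For a single permutation step $\gamma=h_1\cdot(p,q)\cdot h_2$, both equations unfold to the identities $(j_1\cdot j_1')\cdot h_1=j_1\cdot(j_1'\cdot h_1)$, $h_2\cdot(j_2'\cdot j_2)=(h_2\cdot j_2')\cdot j_2$, and their unital counterparts; they are therefore nothing more than the associativity and unit laws of path concatenation applied on either side of the marked tile $(p,q)$. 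Passing from permutation steps to permutation sequences is then handled by the functoriality of whiskering established in the previous step.

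The only point worth a moment's attention is to confirm that the quadruple $(j_1\cdot h_1,p,q,h_2\cdot j_2)$ remains a valid permutation step after extension; but this is clear, because the permutation tile $p\diamond q$ itself is unchanged and only the outer context expands. No genuine obstacle arises: the entire proof amounts to transporting the strict associativity and unit laws of path concatenation through a free construction, and no use whatsoever is made of the cube property or of the symmetry of~$\diamond$, which will only enter at the next stage when passing from the sesquicategory $\asynchsesquicat{G,\diamond}$ to the 2-category $\asynchtwocat{G,\diamond}$.
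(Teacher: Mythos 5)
Your proposal is correct and follows essentially the same route as the paper, which constructs the free hom-categories of permutation sequences and the two-sided whiskering $j_1\horcomp\varphi\horcomp j_2$ explicitly and then checks the sesquicategory axioms directly, everything reducing to associativity and unitality of path concatenation, with no appeal to symmetry, determinism, or the cube property at this stage. Your closing aside tying the absence of interchange to the Gray tensor product is only motivational (interchange does hold after the quotient defining $\asynchtwocat{G,\diamond}$) and does not affect the verification.
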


%


\medbreak
\subsection{From asynchronous graphs to 2-categories}\label{section/asynchronous-graph/from-ag-to-tc}
Coming back to the definition of permutation sequence,
it is worth observing that every permutation step $\gamma:f\Rightarrow g:M\to N$
relates two paths $f,g$ of the same length.
From this follows that every path $f_i$ for $1\leq i\leq n+1$
involved in the permutation sequence (\ref{equation/permutation-sequence})
has the same length, noted $k$.
%
In particular, the two paths $f,g:M\transitionpath N$ have the same length,
and thus both sets of indices $\lengthindex{f}$ and $\lengthindex{g}$ are equal to $\{1,\dots,k\}$,
see \S\ref{section/asynchronous-graph/graphs} for the notation.
%
%
At this stage, we want to give a very simple recipe to associate a bijective function $[\varphi]:\lengthindex{f}\to\lengthindex{g}$
to a permutation sequence $\varphi:f \Rightarrow g$ of the form (\ref{equation/permutation-sequence}).
Every permutation step $\gamma=(h_1,p,q,h_2)$ of the form (\ref{equation/permutation-step})
between two paths $f=u_1\cdots u_k$ and $g=v_1\cdots v_k$ of length $k$ 
defines a bijective function $\lengthindex{\gamma}:\lengthindex{f}\to\lengthindex{g}$ in the following way:
$$
\lengthindex{\gamma} \quad = \quad \left\{\begin{array}{lcl} 
\ell+1\mapsto \ell +2 & & \\
\ell+2\mapsto \ell +1 & &\\
i\mapsto i & & \mbox{when $i\leq \ell$ or $i\geq \ell+3$}
\end{array}
\right.
$$
where $\ell$ denotes the length of the path $h_1=u_1\cdots u_{\ell}=v_1\cdots v_{\ell}$.
The intuition behind the definition is that the bijective function 
$$
\begin{tikzcd}[column sep = 1em]
\lengthindex{\gamma} \,\, : \,\, \lengthindex{f}\arrow[rr]
&&
\lengthindex{g}
\end{tikzcd}
$$
reorders the edges of $f$ and $g$ in the way indicated by the permutation tile $p\diamond q$.
Pictorially, the permutation step 
$$
\begin{tikzcd}[column sep = 1em]
{\gamma} 
\,\, : \,\,
{f}\arrow[-implies,double,rr]
&&
{g}  \,\, : \,\, x \arrow[rr] && y.
\end{tikzcd}
$$
with permutation tile $p\diamond q$
depicted below between $p=u_{\ell+1}\cdot u_{\ell+2}$ and $q=v_{\ell+1}\cdot v_{\ell+2}$ 
$$
\raisebox{-2.8em}{\includegraphics[width=24em]{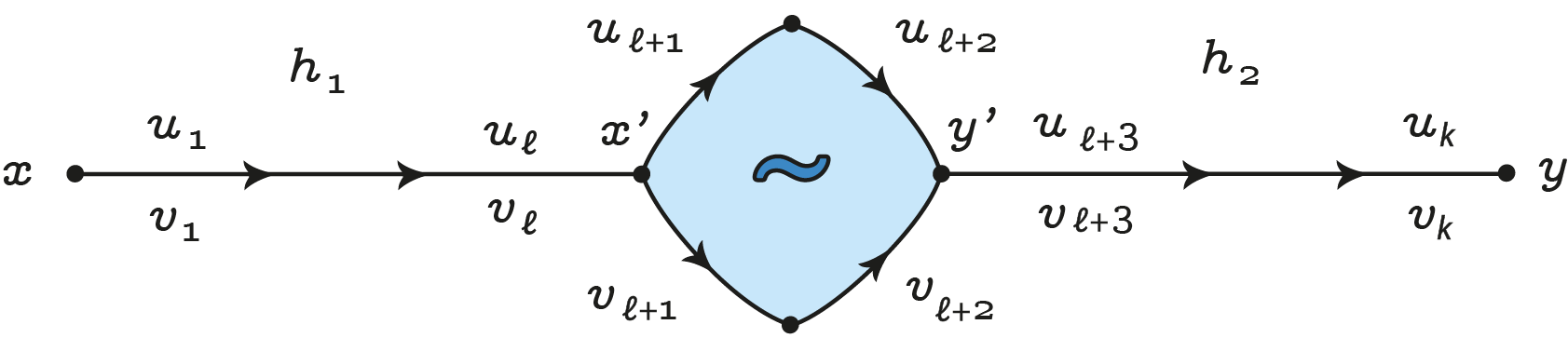}}
$$
is turned into the bijective function $\lengthindex{\gamma}:\lengthindex{f}\to\lengthindex{g}$ 
on the set of indices $[f]=[g]=\{0,\dots,k\}$ of edges appearing in the paths $f$ and $g$, 
defined as the transposition $\ell+1\mapsto \ell+2$ and $\ell+2\mapsto \ell+1$ 
indicated by the pair of purple (dark grey) arrows drawn on the permutation tile $p\diamond q$:
$$
\raisebox{-2.8em}{\includegraphics[width=24em]{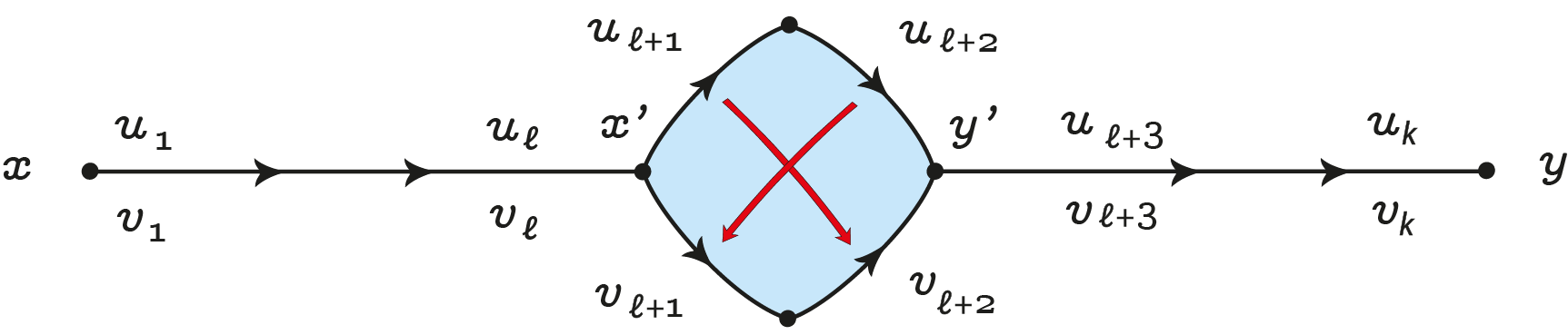}}
$$
The definition just given of the bijective function
$$
\begin{tikzcd}[column sep = 1em]
\lengthindex{\gamma} \,\, : \,\, \lengthindex{f}\arrow[rr]
&&
\lengthindex{g}
\end{tikzcd}
$$
associated to a permutation step
$$
\begin{tikzcd}[column sep = 1em]
{\gamma} 
\,\, : \,\,
{f}\arrow[-implies,double,rr]
&&
{g}  \,\, : \,\, x \arrow[rr] && y.
\end{tikzcd}
$$
extends to a permutation sequence
$$
\begin{tikzcd}[column sep = 1em]
{\varphi} 
\,\, : \,\,
{f}\arrow[-implies,double,rr]
&&
{g}  \,\, : \,\, x \arrow[rr] && y.
\end{tikzcd}
$$
in the expected way: the associated bijective function
$$
\begin{tikzcd}[column sep = 1em]
\lengthindex{\varphi} \,\, : \,\, \lengthindex{f}\arrow[rr]
&&
\lengthindex{g}
\end{tikzcd}
$$
is defined as the (set-theoretic) composite of the bijective functions 
$\lengthindex{\gamma_i}:\lengthindex{f_i}\to\lengthindex{g_i}$ associated
to each of the permutation steps $\gamma_i:f_i\Rightarrow f_{i+1}$, for $1\leq i\leq k$.
This definition can be simply expressed as the equation
which says that the bijection
$$
\begin{tikzcd}[column sep =1.4em]
\lengthindex{f}
\arrow[rr,"{\lengthindex{\varphi}}"]
&&
\lengthindex{g}
\end{tikzcd}
$$
is equal to the composite
$$
\begin{tikzcd}[column sep =1.4em]
{\lengthindex{f}=\lengthindex{f_1}}
\arrow[rr,"{\lengthindex{\gamma_1}}"]
&&
{\lengthindex{f_2}}
\arrow[rr,"{\lengthindex{\gamma_2}}"]
&&
\cdots
\arrow[rr,"{\lengthindex{\gamma_n}}"]
&&
{\lengthindex{{f_{n+1}}}=\lengthindex{g}}
\end{tikzcd}
$$
This discussion leads us to introduce the equivalence relation $\cong$ 
which identifies two permutation sequences
$$
\begin{tikzcd}[column sep = 1em]
{\varphi}, {\psi}
\,\, : \,\,
{f}\arrow[-implies,double,rr]
&&
{g}  \,\, : \,\, x \arrow[rr] && y.
\end{tikzcd}
$$
precisely when they induce the same bijective function
$$
\begin{tikzcd}[column sep = 1em]
\lengthindex{\varphi} = \lengthindex{\psi}
\,\, : \,\, \lengthindex{f}\arrow[rr]
&&
\lengthindex{g}
\end{tikzcd}
$$
By way illustration, observe that the two permutation sequences
$$
\begin{tikzcd}[column sep =1.2em]
\varphi,\psi
\,\, : \,\,
u_1\cdot u_2\cdot u_3
\arrow[rr,-implies,double,"{}"]
&&
v_1\cdot v_2\cdot v_3
\,\, : \,\,
x
\arrow[r]
&
y
\end{tikzcd}
$$
involved in the cube property (\ref{equation/cube}) induce the very same bijective function
$$
\begin{tikzcd}[column sep = 1em]
\lengthindex{\varphi}, \lengthindex{\psi}
\,\, : \,\, \lengthindex{f}\arrow[rr]
&&
\lengthindex{g}
\end{tikzcd}
$$
which reverses the usual order on $\{1,2,3\}$
and thus turns every index $i\in\{1,2,3\}$ into the index $4-i\in\{1,2,3\}$.
This means that the two permutations sequences $\varphi$ and $\psi$
are equivalent modulo $\cong$ in a situation nicely depicted as:
\begin{center}
\begin{tabular}{ccc}
\raisebox{-3.4em}{\includegraphics[height=7.5em]{figure9b.png}}
& $\cong$ & 
\raisebox{-3.4em}{\includegraphics[height=7.5em]{figure10b.png}}
\end{tabular}
\end{center}
%
%
Suppose given two paths $f,g:x\transitionpath y$.
We are now ready to define the notion of \emph{rescheduling} 
of the form
\begin{equation}\label{equation/reordering}
\begin{tikzcd}[column sep =1.2em]
\varphi
\quad : \quad
f
\arrow[rr,-implies,double,"{}"]
&&
g
\quad : \quad
x
\arrow[rr]
&&
y
\end{tikzcd}
\end{equation}
as a permutation sequence $\varphi:f\Rightarrow g$ considered modulo the equivalence relation $\cong$ just introduced.
In other words, a rescheduling
is an equivalence class modulo $\cong$ on the set of permutation sequences $\varphi:f\Rightarrow g$ 
transforming the path $f$ into the path $g$.
We establish that:
\medbreak
\begin{proposition}
Every asynchronous graph $(G,\diamond)$
defines a 2-category $\asynchtwocat{G,\diamond}$
with the vertices of $G$ as objects, the paths of $G$
as morphisms, and the reschedulings between paths
as 2-cells.
\end{proposition}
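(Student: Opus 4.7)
The plan is to leverage the sesquicategory structure $\asynchsesquicat{G,\diamond}$ established in the previous proposition, and to show that the equivalence relation $\cong$ is a congruence whose quotient upgrades the sesquicategory to a genuine 2-category. The underlying 1-category of vertices and paths (with concatenation) is unchanged between $\asynchsesquicat{G,\diamond}$ and $\asynchtwocat{G,\diamond}$, so the task reduces to two points: first, that vertical composition and whiskering descend to $\cong$-classes; second, that the middle-four interchange law holds modulo $\cong$, so that horizontal composition of 2-cells is unambiguously defined by whiskering followed by vertical pasting.

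For the first point, I would exploit the fact, built into the construction, that each permutation sequence $\varphi:f\Rightarrow g$ is assigned a bijection $\lengthindex{\varphi}:\lengthindex{f}\to\lengthindex{g}$ which by definition characterises its $\cong$-class. Vertical composition is manifestly compatible with this assignment since $\lengthindex{\varphi\vertcomp\psi}=\lengthindex{\psi}\circ\lengthindex{\varphi}$, and whiskering $j_1\horcomp\varphi\horcomp j_2$ acts as the identity on the index segments $\lengthindex{j_1}$ and $\lengthindex{j_2}$ while reproducing $\lengthindex{\varphi}$ on the middle segment. From this it follows formally that both operations respect $\cong$.

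For the interchange law, given 2-cells $\varphi:f\Rightarrow g:x\to y$ and $\psi:h\Rightarrow k:y\to z$, the two diagonal pastings $(\varphi\horcomp h)\vertcomp(g\horcomp\psi)$ and $(f\horcomp\psi)\vertcomp(\varphi\horcomp k)$ are a priori distinct permutation sequences in the sesquicategory. However, both induce the same bijection $\lengthindex{f\cdot h}\to\lengthindex{g\cdot k}$, namely the ``disjoint union'' of $\lengthindex{\varphi}$ on the first segment and $\lengthindex{\psi}$ on the second. They are therefore identified by $\cong$, and their common class defines the horizontal composite $\varphi\horcomp\psi$ of reschedulings, yielding the desired 2-category structure.

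The main obstacle, and the reason the cube property was built into the axioms of asynchronous graphs from the outset, lies in ensuring that the map $\varphi\mapsto\lengthindex{\varphi}$ really provides a coherent description of $\cong$-classes. One must check that $\cong$ identifies the two sequences of permutation tiles related by the cube property~(\ref{equation/cubical-tiles}), which induce the same bijection on indices but would otherwise represent distinct morphisms in the free sesquicategory, and that a cancelling pair of inverse permutation tiles is identified with the empty sequence as in~(\ref{equation/reverse-tiles}). Once this coherence is established, the remaining 2-category axioms --- associativity and unit laws for both compositions, together with the interchange law --- reduce to transparent identities between bijections on finite sets, and the construction $(G,\diamond)\mapsto\asynchtwocat{G,\diamond}$ is complete.
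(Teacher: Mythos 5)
Your proposal is correct and is essentially the argument the paper intends: the appendix sets up exactly the two ingredients you use (the sesquicategory $\asynchsesquicat{G,\diamond}$ and the assignment $\varphi\mapsto\lengthindex{\varphi}$ defining $\cong$), and the stated proposition follows by checking, as you do, that vertical composition and whiskering descend to $\cong$-classes and that the two diagonal pastings induce the same block bijection, which yields interchange. One small remark: the ``coherence'' check you single out at the end is automatic, since $\cong$ is \emph{defined} as equality of induced bijections, so the cube-related sequences and cancelling pairs are identified by definition; the cube property is not actually needed for this proposition, but rather for the presentation of $\asynchtwocat{G,\diamond}$ by the local relations~(\ref{equation/cubical-tiles}) and~(\ref{equation/reverse-tiles}) given in the body of the paper.
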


\newpage
\section{Asynchronous strategies as bicomodules}
We find clarifying to illustrate with a few pictures
the idea that an asynchronous strategy 
$$
\begin{tikzcd}[column sep=.8em, row sep=1.2em]
\sigma\,=\,(S,\mathsf{coact}_{\sigma},\lambda_{\sigma}) \,\, : \,\, (A,\lambda_A) \arrow[spanmap]{rrrr} &&&& (B,\lambda_B)
\end{tikzcd}
$$
is described as a triple
$$(S,\mathsf{coact}_{\sigma},\lambda_{\sigma})$$
consisting of 2-category~$S$ used as support,
together with an $A,B$-bicomodule structure
$$
\begin{tikzcd}[column sep = 1em]
\mathsf{coact}_{\sigma}\quad : \quad
S
\arrow[rr] && 
{A} \graytensor S\graytensor {B}
\end{tikzcd}
$$
and a polarity 2-functor 
$$
\begin{tikzcd}[column sep = 1em]
\lambda_{\sigma}
\quad : \quad
S
\arrow[rr] && 
\anchorofstrat
\end{tikzcd}
$$
The coaction $\mathsf{coact}_{\sigma}$ may be represented diagrammatically as follows:
\begin{center}
\raisebox{-3.4em}{\includegraphics[height=7.5em]{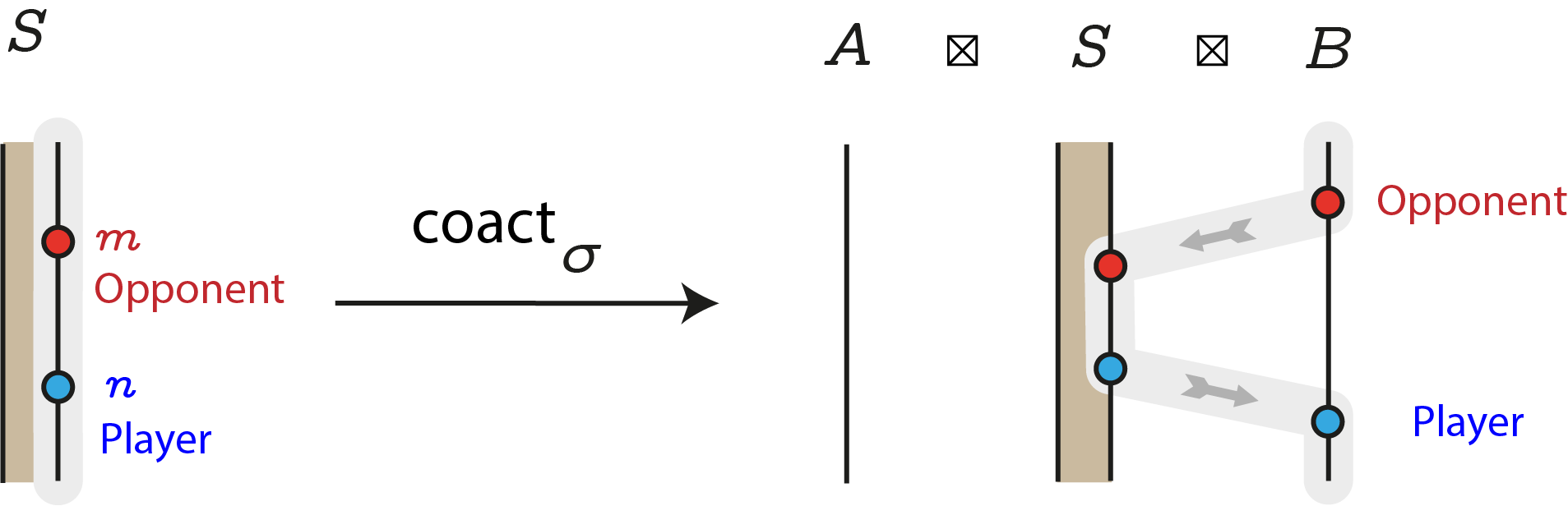}}
\end{center}
where the fact that the two moves $m$ and $n$ in the support $S$ 
are assigned the polarities
$$
\lambda_{\sigma}  \quad : \quad
m \mapsto \polarityminustarget
\quad\quad
n\mapsto \polarityplustarget
$$
is indicated by their position on the brown (or grey) ribbon
representing the strategy~$\sigma$ in the picture.
Similarly, the asynchronous strategy 
$$
\begin{tikzcd}[column sep=.8em, row sep=1.2em]
\tau\,=\,(T,\mathsf{coact}_{\tau},\lambda_{\tau}) \,\, : \,\, (B,\lambda_B) \arrow[spanmap]{rrrr} &&&& (C,\lambda_C)
\end{tikzcd}
$$
is represented as
\begin{center}
\raisebox{-4em}{\includegraphics[height=9em]{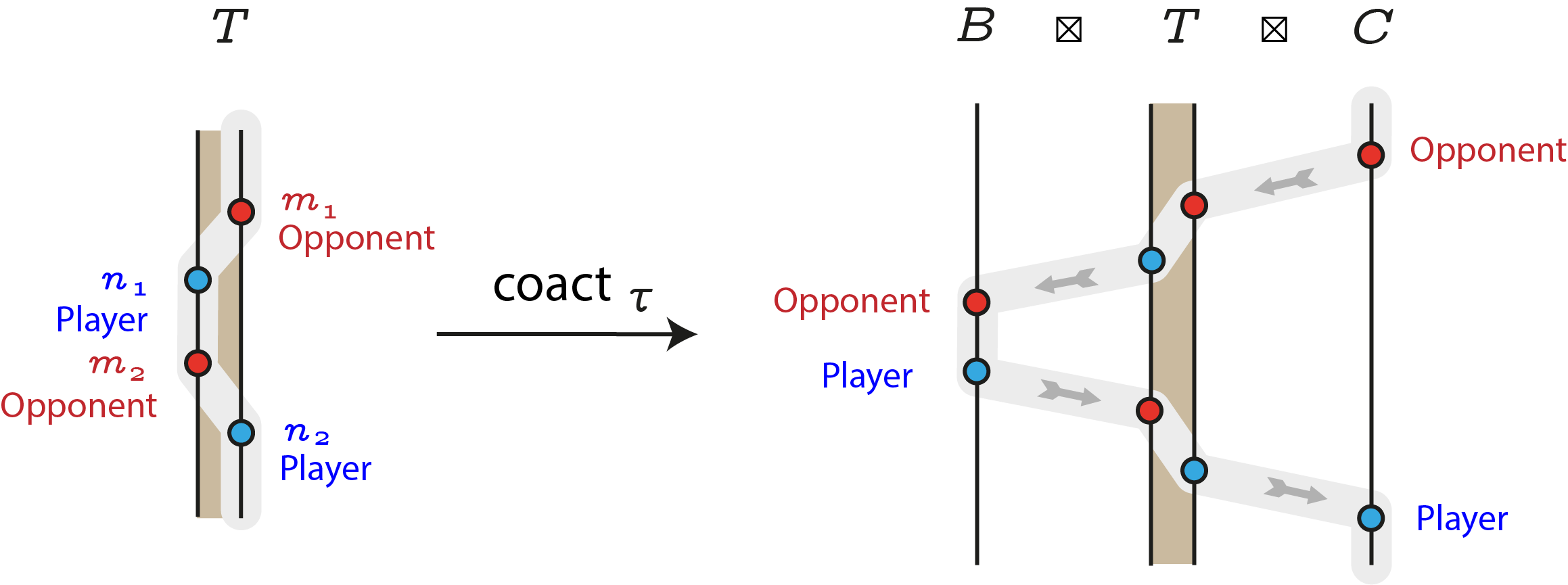}}
\end{center}
where the fact that the four moves $m_1$, $n_1$, $m_2$ and $n_2$
in the support $T$ are assigned the polarities
$$
\lambda_{\tau}  \quad : \quad
m_1 \mapsto \polarityminustarget
\quad\quad
n_1\mapsto \polarityplussource
\quad\quad
m_2 \mapsto \polarityminussource
\quad\quad
n_2\mapsto \polarityplustarget
$$
is indicated by their position on the brown (or grey) ribbon
representing the strategy~$\tau$ in the picture.
The composition of the two asynchronous strategies~$\sigma$
and~$\tau$ is then computed as the coreflexive equalizer
$$
\begin{tikzcd}[column sep = 2em]
{S\graytensoreq{B}T}
\arrow[rr,dashed,"{equ}"]
&&
{S\graytensor T}
\arrow[rr,"{\mathsf{coact}_{\sigma}^{\mathsf{right}}\graytensor T}",yshift=.2em]
\arrow[rr,"{S\graytensor\mathsf{coact}_{\tau}^{\mathsf{left}}}"{swap},yshift=-.2em]
&&
S\graytensor B\graytensor T
\end{tikzcd}
$$
represented pictorially as follows:
\begin{center}
\raisebox{-4em}{\includegraphics[height=8em]{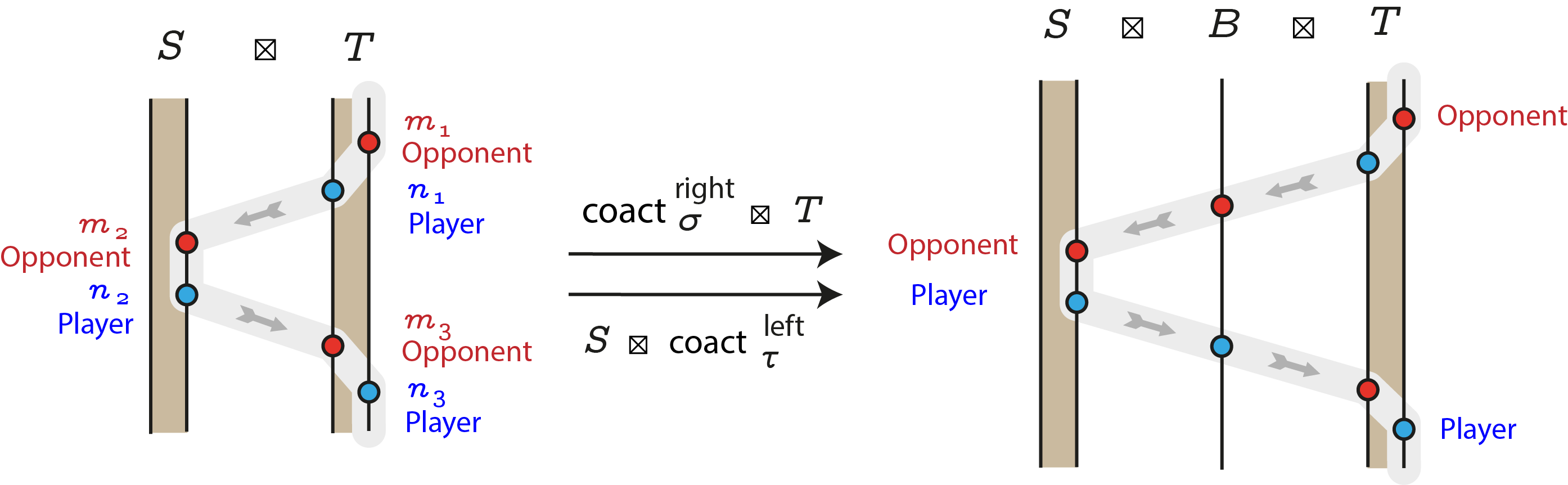}}
\end{center}
We indicate on the left a trajectory of six moves 
$$
m_1\cdot n_1 \cdot m_2\cdot n_2 \cdot m_3\cdot n_3
$$
in the 2-category and template game $S\graytensor T$
living at the same time in the coreflexive equalizer $S\graytensoreq{B} T$.
Note that the two moves $n_1$ played by~$\tau$ 
and $m_2$ played by~$\sigma$ are equal moves 
when seen from the point of view of the game~$B$.
Similarly, the two moves $n_2$ played by~$\sigma$ 
and $m_3$ played by~$\tau$ are equal moves
when seen from the point of view of the game~$B$.
This sequence of six moves may be thus described
in $S\graytensoreq{B} T$ as a sequence of four moves:
$$
m_1\cdot m\cdot n\cdot n_3
$$
as depicted below
\begin{center}
\raisebox{-4em}{\includegraphics[height=11em]{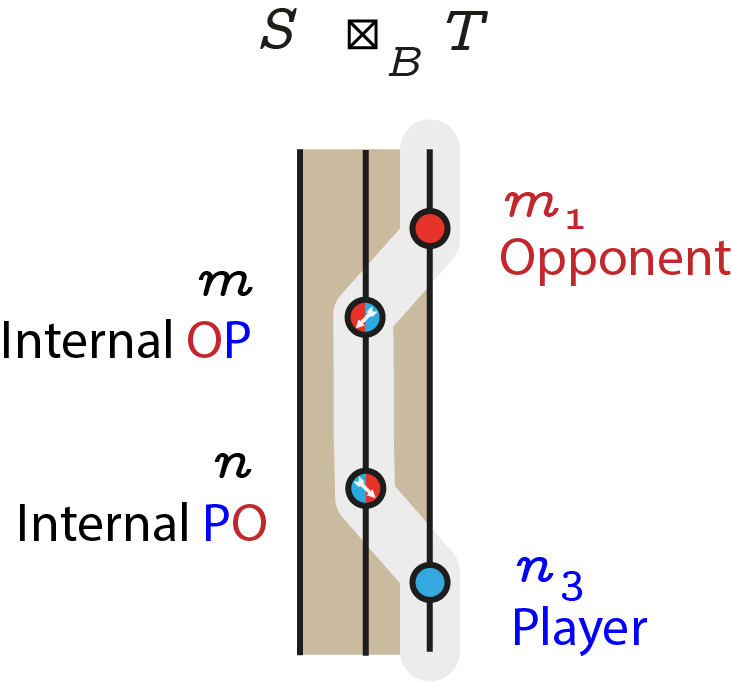}}
\end{center}
The position of the four moves on the brown (grey) area
describing $S\graytensoreq{B} T$ indicate the polarity
$$
\begin{array}{ccccccc}
m_1& \mapsto&\polarityminustarget
& \quad &
m &\mapsto&\polarityminusmid
\\
n &\mapsto&\polarityplusmid
& \quad &
n_3&\mapsto&\polarityplustarget
\end{array}
$$
assigned in the template of polarities
$$
\anchorofasynch[2] = \twocatanchorof{\polarityminussource,\polarityplussource,
\polarityminusmid,\polarityplusmid,\polarityminustarget,\polarityplustarget}
$$
itself defined as the coreflexive equalizer
$$
\anchorofasynch[2] = {\moo[1]\graytensoreq{\moo[0]}\moo[1]}.
$$
Note that postcomposing with multiplication
$$
\begin{tikzcd}[column sep = .8em]
\mathsf{mult} \,\, : \,\, {\anchorofasynch[2]} \arrow[rr] && {\anchorofasynch[1]}
\end{tikzcd}
$$
gives rise to the polarity function
$$
\begin{tikzcd}[column sep = 1em]
\lambda_{\tau\circ\sigma} \,\, : \,\, {S\graytensoreq{B} T} 
\arrow[rr] && {\anchorofasynch[2]} \arrow[rr] && {\anchorofasynch[1]}
 \end{tikzcd}
$$
on the four moves of the trajectory:
$$
\begin{array}{ccccccc}
m_1& \mapsto&\polarityminustarget
& \quad &
m &\mapsto&\id{\lrangle{\ast}}
\\
n &\mapsto&\id{\lrangle{\ast}}
& \quad &
n_3&\mapsto&\polarityplustarget
\end{array}
$$
in the template of polarities $\anchorofasynch[1]=\anchorofstrat$.
This polarity on the trajectory of the composite strategy $\tau\circ\sigma$
reflects the fact that the two moves $m$ and $n$
are considered as internal moves in the interaction between $\sigma$ and $\tau$;
while the moves $m_1$ and $n_3$ of respective polarities $\polarityminustarget$
and $\polarityplustarget$ are the output of the interaction, 
played by $\tau\circ\sigma$ on the target game~$C$.
%
\end{document}